\DeclareMathOperator*{\argmin}{arg\,min}
\def\calP{\mathcal{P}}
\def\calK{\mathcal{K}}
\def\calI{\mathcal{I}}
\def\calD{\mathcal{D}}
\def\calR{\mathcal{R}}
\def\lowenv{\text{LE}}
\def\st{$s$-$t$}
\newtheorem{theorem}{Theorem}
\newtheorem{corollary}{Corollary}
\newtheorem{lemma}{Lemma}
\newtheorem{observation}{Observation}
\title{Shortest Paths on Convex Polyhedral Surfaces\thanks{A preliminary version of this paper will appear in {\em Proceedings of the 66th IEEE Symposium on Foundations of Computer Science (FOCS 2025)}~\cite{ref:WangSh25}. This version further improves the results presented in the preliminary version.}
}
\author{Haitao Wang\thanks{Kahlert School of Computing,
University of Utah, Salt Lake City, UT 84112, USA. {\tt haitao.wang@utah.edu}}
}
\date{}
\begin{document}


\maketitle

\vspace{-0.3in}
\begin{abstract}
Let $\mathcal{P}$ be the surface of a convex polyhedron with $n$ vertices. We consider the two-point shortest path query problem for $\mathcal{P}$: Constructing a data structure so that given any two query points $s$ and $t$ on $\mathcal{P}$, a shortest path from $s$ to $t$ on $\mathcal{P}$ can be computed efficiently. To achieve $O(\log n)$ query time (for computing the shortest path length), 
the previously best result uses $O(n^{8+\epsilon})$ preprocessing time and space [Aggarwal, Aronov, O’Rourke, and Schevon, SICOMP 1997], where $\epsilon$ is an arbitrarily small positive constant. In this paper, we present a new data structure of $O(n^{6+\epsilon})$ preprocessing time and space, with $O(\log n)$ query time. For a special case where one query point is required to lie on one of the edges of $\mathcal{P}$, the previously best work uses $O(n^{6+\epsilon})$ preprocessing time and space to achieve $O(\log n)$ query time. We improve the preprocessing time and space to $O(n^{5+\epsilon})$, with $O(\log n)$ query time.
Furthermore, we present a new algorithm to compute the exact set of shortest path edge sequences of $\mathcal{P}$, which are known to be $\Theta(n^4)$ in number and have a total complexity of $\Theta(n^5)$ in the worst case. The previously best algorithm for the problem takes roughly $O(n^6\log n\log^*n)$ time, while our new algorithm runs in $O(n^{5+\epsilon})$ time. 
\end{abstract}


{\em Keywords:} Shortest paths, geodesic distance, convex polyhedron, edge sequences, cuttings

\section{Introduction}
\label{sec:intro}

Let $\calP$ denote the surface of a convex polyhedron with $n$ vertices. We can represent $\calP$ by a planar graph with additional geometric information describing the location of vertices and edges. By Euler's formula, the
number of edges and faces in $\calP$ is $O(n)$.  Each face of $\calP$ is a convex polygon. We can assume that each face is associated with a coordinate system on its supporting plane, i.e., the plane containing the face. 
For any two points $s$ and $t$ on $\calP$, let $\pi(s,t)$ denote a shortest path from $s$ to $t$ on $\calP$, and its length, denoted by $d(s,t)$, is called the {\em geodesic distance} between $s$ and $t$. We use $|\pi(s,t)|$ to denote the number of edges in the path. 

\paragraph{\bf Two-point shortest path queries.}
We consider the {\em two-point shortest path query problem} for $\calP$: Construct a data structure so that a shortest \st\ path $\pi(s,t)$ can be computed efficiently for any two query points $s,t\in \calP$. The previously best result was given by Aggarwal, Aronov, O’Rourke, and Schevon~\cite{ref:AgarwalSt97}. To achieve $O(\log n)$ query time to compute the geodesic distance $d(s,t)$, their data structure requires $O(n^{8+\epsilon})$ space and $O(n^{8+\epsilon})$ preprocessing time. Throughout the paper, $\epsilon$ represents an arbitrarily small positive constant. Their query algorithm does not output the path $\pi(s,t)$. 
No further progress had been made on this problem in the 35 years since their result was first announced in 1990. 
In this paper, we improve preprocessing time and space to $O(n^{6+\epsilon})$ while still achieving the $O(\log n)$ query time for computing $d(s,t)$. In addition, our query algorithm can output $\pi(s,t)$ in additional $O(|\pi(s,t)|)$ time. 

\paragraph{\bf The edge-face case.} In the {\em edge-face case}, one query point is required to lie on one of the edges of $\calP$. For this case, the previously best result was also given in \cite{ref:AgarwalSt97}. To achieve $O(\log n)$ query time for computing $d(s,t)$, the data structure in \cite{ref:AgarwalSt97} uses $O(n^{6+\epsilon})$ space and preprocessing time. We present a new data structure of $O(n^{5+\epsilon})$ space and preprocessing time, with $O(\log n)$ query time. In addition, $\pi(s,t)$ can be output in additional $O(|\pi(s,t)|)$ time. 
In particular, given an edge $e$ of $\calP$, we can construct a data structure of $O(n^{4+\epsilon})$ space and preprocessing time, such that for any query $(s,t)$ with $s\in e$ and $t\in \calP$, $d(s,t)$ can be computed in $O(\log n)$ time and  $\pi(s,t)$ can be output in additional $O(|\pi(s,t)|)$ time.

Note that Cook IV and Wenk~\cite{ref:CookSh12} considered a special case in which $s$ is required to be on a given edge $e$ of $\calP$ and $t$ is required to be on another edge $\calP$. They constructed a data structure of $O(n^5)$ space in $O(n^5\log n)$ time such that $d(s,t)$ can be computed in $O(\log^2 n)$ time for any query $(s,t)$. Hence, our above result, which works for a more general type of queries, even improves upon their work in this restricted setting. 


\paragraph{\bf Shortest path edge sequences.} For any shortest path $\pi(s,t)$ on $\calP$, it has the following properties~\cite{ref:SharirOn86}: (1) $\pi(s,t)$ can cross every edge of $\calP$ at most once; (2) the intersection between $\pi(s,t)$ and any face of $\calP$ is a single line segment; (3) the interior of $\pi(s,t)$ cannot contain any vertex of $\calP$. Furthermore, there is an {\em unfolding property}: If we unfold all faces of $\calP$ intersected by $\pi(s,t)$ into one plane, then $\pi(s,t)$ becomes a line segment in the unfolded plane. The sequence of edges of $\calP$ crossed by $\pi(s,t)$ is called the {\em edge sequence} of $\pi(s,t)$. Once the edge sequence of $\pi(s,t)$ is known, $\pi(s,t)$ can be easily obtained due to the unfolding property. 

A sequence of edges of $\calP$ is a {\em shortest path edge sequence} 
if it is the sequence of edges crossed by a shortest path on $\calP$. It is known that the number of shortest path edge sequences of $\calP$ is $O(n^4)$ and this bound is tight in the worst case~\cite{ref:MountTh90}. Computing all shortest path edge sequences will be useful for understanding the structure of shortest paths on $\calP$ and also useful for designing algorithms to compute obstacle-avoiding paths among polyhedral obstacles in the 3D space~\cite{ref:SharirOn87,ref:SchevonTh88,ref:SchevonAl89}. 

Schevon and O'Rourke~\cite{ref:SchevonAl89} gave an algorithm that can compute all shortest path edge sequences in $O(n^9\log n)$ time. Based on a so-called {\em star unfolding}, Agarwal, Aronov, O'Rourke, and Schevon~\cite{ref:AgarwalSt97} gave an improved algorithm of roughly $O(n^6\log n\log^* n)$ time, and later Cook IV and Wenk~\cite{ref:CookSh12} devised an alternative algorithm of the same runtime using kinetic Voronoi diagrams~\cite{ref:GuibasVo91}. 
In this paper, we present a new algorithm of $O(n^{5+\epsilon})$ time. This is the first progress on the runtime for the problem since the algorithm of \cite{ref:AgarwalSt97} was first announced in 1990. 

Note that since the number of shortest path edge sequences can be $\Theta(n^4)$ and each sequence can have $\Omega(n)$ edges in the worst case~\cite{ref:MountTh90}, it requires $\Omega(n^5)$ time to explicitly list all edges in all edge sequences. Therefore, our algorithm, which runs in $O(n^{5+\epsilon})$ time, is almost optimal for that purpose. 

Note that all above algorithms are for computing the exact set of shortest path edge sequences. Computing a superset of them is easier and algorithms were also known~\cite{ref:AgarwalSt97,ref:CookSh12,ref:SharirOn87}.

\subsection{Related work}
Shortest paths on polyhedral surfaces have been extensively studied in the literature~\cite{ref:AgarwalSt97,ref:SharirOn86,ref:MitchellTh87,ref:MountOn85,ref:SchreiberAn08,ref:ChenSh90,ref:AronovNo92,ref:CookSh12,ref:MountSt87,ref:ChandruSh04,ref:HWangFi89,ref:KapoorEf99}. Sharir and Schorr~\cite{ref:SharirOn86} first considered the problem of computing a single shortest path on a convex polyhedral surface $\calP$ and gave an $O(n^3\log n)$ time algorithm. Mitchell, Mount, and Papadimitriou~\cite{ref:MitchellTh87,ref:MountOn85} improved it to $O(n^2\log n)$ time. Chen and Han~\cite{ref:ChenSh90} later presented an $O(n^2)$ time solution. Schreiber and Sharir~\cite{ref:SchreiberAn08} finally derived an algorithm that runs in $O(n\log n)$ time. Given a source point $s$, the algorithm of \cite{ref:SchreiberAn08} can construct a data structure of $O(n\log n)$ space for $\calP$ in $O(n\log n)$ time so that each {\em one-point shortest path query} (i.e., given a query point $t\in \calP$, compute the geodesic distance $d(s,t)$) can be answered in $O(\log n)$ time. All these algorithms use the continuous Dijkstra approach except that Chen and Han's algorithm~\cite{ref:ChenSh90} is based on a so-called ``one angle one split'' property to bound the size of a sequence tree. 

If $\calP$ is not convex, then finding a shortest path on $\calP$ is more challenging. A notable difference is that in this case a shortest path may pass through a vertex of $\calP$ in its interior. The algorithm of \cite{ref:MitchellTh87} can compute a shortest path on $\calP$ in $O(n^2\log n)$ time, and the $O(n^2)$ time algorithm of \cite{ref:ChenSh90} also works for the non-convex case. Both algorithms can construct a data structure of $O(n^2)$ space for a source point $s$ within the same time complexities as above such that each one-point shortest path query can be answered in $O(\log n)$ time. 
Kapoor~\cite{ref:KapoorEf99} announced an algorithm of $O(n\log^2 n)$ time for the problem; but the details of the algorithm have not yet been published.
For two-point shortest path queries in the non-convex case, the only work that we are aware of was due to Chiang and Mitchell~\cite{ref:ChiangTw99}. To achieve $O(\log n)$ query time, their data structure uses $O(n^{12})$ space.

Approximation algorithms have also been developed for the shortest path problems on convex polyhedral surfaces; e.g., see \cite{ref:ChengSh14,ref:VaradarajanAp00}.

Given a set of polyhedral obstacles in 3D space, an even more challenging problem is to compute a shortest path between two points  while avoiding all obstacles. The problem is NP-hard~\cite{ref:CannyNe87} and approximation algorithms have been developed~\cite{ref:SharirOn86,ref:SharirOn87}. A key subproblem in these algorithms involves computing shortest paths on the surface of a convex polyhedron. In particular, an important subproblem in these algorithm is the edge-edge case shortest path problem discussed above. 


The obstacle-avoiding shortest path problem in the plane has also been studied extensively, e.g., \cite{ref:GhoshAn91,ref:KapoorAn97,ref:MitchellA91,ref:MitchellSh96,ref:HershbergerAn99,ref:RohnertSh86}. Given a set of polygonal obstacles of $n$ vertices in the plane, the problem is to compute a shortest path between two points that avoid all obstacles. 
With respect to a source point, Hershberger and Suri~\cite{ref:HershbergerAn99} built a data structure of $O(n)$ space that can answer each one-point shortest path query in $O(\log n)$ time. Their algorithm runs in $O(n\log n)$ time and space. The space has been improved to $O(n)$ recently~\cite{ref:WangA23}. Like the polyhedral surface case, the two-point shortest path query problem becomes significantly more complicated. Chiang and Mitchell~\cite{ref:ChiangTw99} built a data structure of $O(n^{11})$ size with  $O(\log n)$ query time. There had been no progress on the problem until very recently in SoCG 2024 de Berg, Miltzow, and Staals~\cite{ref:deBergTo24} proposed a data structure of size $O(n^{10+\epsilon})$ with $O(\log n)$ query time. 

\subsection{An overview of our approach}
\label{sec:overview}

We give a brief overview of our approach, using some concepts that were introduced previously but will also be explained in more detail later in the paper. 

\paragraph{\bf Two-point shortest path queries.}
For the general case two-point shortest path query problem, we follow the high-level idea of Aggarwal, Aronov, O’Rourke, and Schevon~\cite{ref:AgarwalSt97} (referred to as AAOS algorithm) but replace their ``brute force'' approach by a more efficient way to ``re-use'' certain information; this strategy saves a quadratic factor of the preprocessing time and space. More specifically, the ridge trees of all vertices of $\calP$ partition $\calP$ into $O(n^4)$ ridge-free regions such that all points $s$ in each ridge-free region $R$ have a topologically equivalent star unfolding $\bigstar_R$. To achieve $O(\log n)$ query time, by using the star unfolding $\bigstar_R$, the AAOS method constructs a data structure of $O(n^{4+\epsilon})$ size for each ridge-free region $R$ to handle the queries $(s,t)$ with $s\in R$. As there are $O(n^4)$ ridge-free regions, the total size of the overall data structure is $O(n^{8+\epsilon})$.

Our new idea is motivated by the following key observation. Let $R$ and $R'$ be two neighboring ridge-free regions that share a common edge. Their star unfoldings $\bigstar_R$ and $\bigstar_{R'}$ are topologically nearly the same: They only differ by at most four vertices~\cite{ref:CookSh12}. As such, if we already have a data structure for $R$ based on $\bigstar_R$, then it is a waste of resources to construct a ``brand-new'' data structure for $\bigstar_{R'}$, and instead, we could ``re-use'' certain information of the data structure for $\bigstar_R$ to construct that for $\bigstar_{R'}$ (this resembles the general principle of persistent data structures~\cite{ref:DriscollMa89}). Rather than directly follow this idea, we build a hierarchical structure so that if a data structure for a big region $R$ is already constructed, then for each subregion $\tilde{R}\subset R$, when constructing the data structure for $\tilde{R}$, we only need to consider the information that was not already used for $R$. This is our strategy to save resources. To implement this idea, we construct cuttings~\cite{ref:ChazelleCu93,ref:WangUn23} on the edges of all ridge-trees. 


Note that the preliminary version of this work~\cite{ref:WangSh25} used a single cutting and achieved $O(n^{6+\frac{2}{3}+\epsilon})$
preprocessing time and space. Here we improve the preprocessing to $O(n^{6+\epsilon})$ by employing a hierarchy of $O(1)$ cuttings. 

\paragraph{\bf The edge-face case.}
For the edge-face case, we follow a similar idea as in the above general case. 
The ridge trees of all vertices of $\calP$ partition the edges of $\calP$ into $O(n^3)$ segments, called {\em edgelets}~\cite{ref:AgarwalSt97}, such that all points in the same edgelet $\eta$ have a topologically equivalent star unfolding $\bigstar_{\eta}$. The AAOS method constructs a data structure of $O(n^{3+\epsilon})$ size for each edgelet $\eta$ to handle the queries $(s,t)$ with $s\in \eta$. As there are $O(n^3)$ edgelets, the total size of the overall data structure is $O(n^{6+\epsilon})$.

Our method is motivated by a similar observation to the above general case: If $\eta$ and $\eta'$ are two adjacent edgelets on the same edge of $\calP$, then their star unfoldings differ by at most four vertices. We again make use of cuttings as above. But since we are constructing cuttings on each edge of $\calP$, which is one dimensional, a cutting is just a subdivision of each edge of $\calP$ into segments. Therefore, the algorithm is much easier in this case. 

Note that for this case the preliminary version~\cite{ref:WangSh25} used a single cutting and achieved $O(n^{5+\frac{1}{4}+\epsilon})$
preprocessing time and space. Here we instead use a hierarchy of $O(1)$ cuttings and improve the preprocessing to $O(n^{5+\epsilon})$.


\paragraph{\bf Shortest path edge sequences.}
Instead of using the star unfolding as in \cite{ref:AgarwalSt97} or using kinetic Voronoi diagrams as in~\cite{ref:GuibasVo91}, we propose a new method to tackle the problem. Suppose a point $s$ is on an edge $e$ of $\calP$. We observe that the set $\Sigma_s$ of shortest path edge sequences of $\pi(s,t)$ for all points $t\in \calP$ is determined by the ridge tree $T_s$ of $s$. When $s$ moves on $e$, $T_s$ also changes. However, $\Sigma_s$ remains invariant as long as $T_s$ does not undergo any topological change. We devise a technique to track the ``events'' that trigger these topological changes of $T_s$. Our algorithm runs in $O(K\cdot \log n)$ time, where $K$ is the number of events. 

Establishing a good upper bound for $K$ turns out to be one of the biggest challenges of this paper. Specifically, the problem reduces to bounding the number $k_{e'}$ of events caused by $e'$, for any edge $e'\in \calP$. Traditional methods such as the source unfolding~\cite{ref:SharirOn86,ref:MountOn85} or star unfolding~\cite{ref:AgarwalSt97,ref:ChenSh90,ref:AronovNo92} do not seem to work. We instead propose a new unfolding method that is to unfold all faces intersecting all possible shortest paths $\pi(s,t)$ with $s\in e$ and $t\in e'$ to a plane containing $e'$. The new unfolding is inspired by the ``one angle one split'' property in \cite{ref:ChenSh90}. Most importantly, and analogously to the star unfolding, we prove that there is no ``short-cut'' in the unfolding, i.e., for any image $s'$ of $s$ and any point $t\in e'$, the length of the line segment $\overline{s't}$ in the unfolded plane is no shorter than the length of $\pi(s,t)$. Our new unfolding method may be of independent interest. 
Using our new unfolding, we prove that $k_{e'}=O(n^{3+\epsilon})$. This leads to $K=O(n^{4+\epsilon})$ and further yields an $O(n^{5+\epsilon})$ time algorithm to compute the exact set of shortest path edge sequences of $\calP$.

\paragraph{\bf The edge-edge case two-point shortest path queries.}
Our techniques for computing shortest path edge sequences might be useful for other problems. We demonstrate one application on the {\em edge-edge case} two-point shortest path query problem in which both query points $(s,t)$ are required on edges of $\calP$. We obtain a result matching our edge-face bound, using a simpler approach. 

For a fixed point $s$ on an edge $e$ of $\calP$, using its ridge tree $T_s$, we can easily compute $d(s,t)$ in $O(\log n)$ time for any query point $t\in \calP$~\cite{ref:SharirOn86,ref:MountOn85}. To solve the two-point query problem, we consider the topological changes of $T_s$ as $s$ moves along $e$. We show that $e$ can be partitioned into $O(n^{4+\epsilon})$ intervals such that $T_s$ is topologically equivalent for all points $s$ in the same interval. This result follows directly from our earlier proof that 
$K=O(n^{4+\epsilon})$. Using this partitioning and a persistent data structure~\cite{ref:SarnakPl86}, we construct a data structure in $O(n^{4+\epsilon})$ preprocessing time and space for the edge $e$ so that each two-point query $(s,t)$ with $s\in e$ and $t$ on any edge of $\calP$ can be answered in $O(\log n)$ time. 

\paragraph{Outline.} The remainder of the paper is organized as follows. After introducing some notation and concepts in Section~\ref{sec:pre}, we present our data structure for the two-point shortest path query problem in Section~\ref{sec:twopoint}. The edge-face case is discussed in Section~\ref{sec:edge}. Our algorithm for computing the shortest path edge sequences is given in Section~\ref{sec:seq}. We finally present a solution to the edge-edge case shortest path queries in Section~\ref{sec:edgeboth} as an application of our techniques in Section~\ref{sec:seq}.

\section{Preliminaries}
\label{sec:pre}

We follow the notation in Section~\ref{sec:intro}, e.g., $\calP$, $n$, $\pi(s,t)$, $d(s,t)$, etc. 
We assume that each face of $\calP$ is a triangle since otherwise we could triangulate every face. We refer to the vertices/edges/faces of $\calP$ as {\em polyhedron vertices/edges/faces}. 

For any two points $p$ and $q$, we use $\overline{pq}$ to denote the line segment connecting $p$ and $q$, and use $\Vert pq\Vert$ to denote the (Euclidean) length of $\overline{pq}$. 

A path $\pi$ on $\calP$ is a {\em geodesic path} if it cannot be shortened by a local change at any point in its relative interior. A geodesic path $\pi$ also has the {\em unfolding property}: It is possible to unfold all faces of $\calP$ intersected by $\pi$ in one plane $\Pi$ so that $\pi$ becomes a line segment in $\Pi$. If $\sigma$ is the edge sequence of $\pi$, we say that the above unfolding is the {\em unfolding of $\sigma$ on the plane $\Pi$}. The unfolding property implies that a geodesic path $\pi$ does not contain any polyhedron vertex in its interior~\cite{ref:AgarwalSt97}. 

\paragraph{\bf Ridge trees.}
Consider a point $s\in \calP$. A point $t\in \calP$ is a {\em ridge point} of $s$ if there exist at least two shortest paths from $s$ to $t$ on $\calP$. Ridge points of $s$ constitute $O(n^2)$ line segments on $\calP$, which together form a {\em ridge tree} $T_s$ of at most $n$ leaves~\cite{ref:SharirOn86}; see Figure~\ref{fig:ridgetree}. Each leaf of $T_s$ is a vertex of $\calP$. But a vertex $v$ of $\calP$ may not be a leaf of $T_s$ if $s$ lies on the ridge tree of $v$, in which case $v$ becomes a degree-2 vertex of $T_s$. Since $T_s$ has at most $n$ leaves, it has $O(n)$ vertices of degrees $3$ or higher. A vertex of degree $3$ or higher in $T_s$ is called a {\em high-degree vertex}. If $s$ is in a degenerate position, then $T_s$ may have a vertex of degree larger than $3$. 
Following \cite{ref:AgarwalSt97}, we define a {\em ridge} as a maximal subset of $T_s$ that does not contain a polyhedron vertex or a high-degree vertex. It is known that $T_s$ has $O(n)$ ridges and each of them is a shortest path on $\calP$~\cite{ref:AgarwalSt97}. Hence, each ridge of $T_s$ intersects a polyhedral edge at most once and $T_s$ intersects each polyhedral edge $O(n)$ times. Therefore, there are at most $O(n^2)$ intersections between $T_s$ and all edges of $\calP$; we also consider these intersections as {\em vertices} of $T_s$, which are of degree $2$. Hence, $T_s$ has $O(n)$ leaves and high-degree vertices, and $O(n^2)$ degree-2 vertices.

\begin{figure}[t]
\begin{minipage}[t]{\linewidth}
\begin{center}
\includegraphics[totalheight=1.0in]{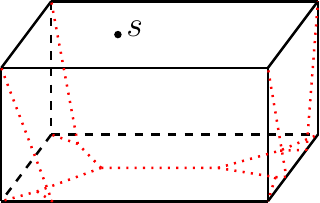}
\caption{
Illustrating the ridge tree of a point $s$ on a convex polyhedron.}
\label{fig:ridgetree}
\end{center}
\end{minipage}
\end{figure}

Another important property of shortest paths is the following: For any point $t\in \calP$, the interior of $\pi(s,t)$ does not contain any ridge point of $s$, i.e., does not intersect $T_s$~\cite{ref:SharirOn86}. 
Using $T_s$, one can obtain a so-called {\em source unfolding} by cutting $\calP$ along edges of $T_s$ and embed it in a plane~\cite{ref:ORourkeCo89,ref:SharirOn86}. 

A point of $\calP$ is called a {\em generic point} if it is not a ridge point of any polyhedron vertex. For each face $f$ of $\calP$, the maximal connected portion of $f$ consisting of generic points is called a {\em ridge-free region}; for each polyhedron edge $e$, the maximal connected portion of $f$ consisting of generic points is called an {\em edgelet}~\cite{ref:AgarwalSt97}. The number of edgelets on $\calP$ is $\Theta(n^3)$ in the worst case and they can be computed in $O(n^3\log n)$ time~\cite{ref:AgarwalSt97}; the number of ridge-free regions on $\calP$ is $\Theta(n^4)$ in the worst case and they can be computed in $O(n^4)$ time~\cite{ref:AgarwalSt97}.

A point of $\calP$ is a {\em $0$, $1$, and $2$-dimensional} point if it is a polyhedron vertex, in the interior of a polyhedron edge, and in the interior of a polyhedron face, respectively. For ease of exposition, we make a general position assumption that for an $i$-dimensional point $s$ and a $j$-dimensional point $t$, there are at most $k$ shortest \st\ paths on $\calP$, with $k=1+i+j$ for any $0\leq i, j\leq 2$. For example, by the assumption, $T_s$ does not have a vertex of degree larger than 
$4$ for any point $s$ on a polyhedron edge; 
also, no two points on  polyhedron edges have more than three shortest paths. 

\begin{figure}[t]
\begin{minipage}[t]{\linewidth}
\begin{center}
\includegraphics[totalheight=1.8in]{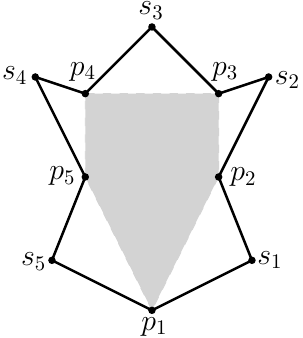}
\caption{
Illustrating $\bigstar_s$: The gray region is its kernel $\calK_s$; the points $p_i$, $1\leq i\leq 5$, are the images of the vertices of $\calP$ while the points $s_i$, $1\leq i\leq 5$, are the images of $s$.}
\label{fig:star}
\end{center}
\end{minipage}
\end{figure}

\paragraph{\bf Star unfolding.}
Let $s$ be a generic point. If we cut $\calP$ open by shortest paths from $s$ to all vertices of $\calP$, then we obtain a two-dimensional complex, which can be unfolded into a (non-self-overlapping) simple polygon in a single plane~\cite{ref:AronovNo92}; this is called the {\em star unfolding} of $\calP$~\cite{ref:AgarwalSt97,ref:AronovNo92} and we use $\bigstar_s$ to denote the simple polygon.
We discuss some properties of $\bigstar_s$ that will be used later in our algorithm. These properties have been proved previously~\cite{ref:AgarwalSt97,ref:ChenSh90,ref:AronovNo92}; refer to the previous work (e.g., \cite{ref:AgarwalSt97}) for more details. 

For each point $p\in \calP$, if $p=s$, then $p$ has $n$ images in $\bigstar_s$ (called {\em source images}). If $p$ lies in the interior of the shortest path from $s$ to a polyhedron vertex, then $p$ has two images. In all other cases (in particular, when $p$ is a polyhedron vertex), $p$ has a unique image. The simple polygon $\bigstar_s$ can be partitioned into $n$ triangles and a simple polygon, called {\em kernel} and denoted by $\calK_s$; see Figure~\ref{fig:star}. Each triangle $\triangle$ shares a single edge with $\calK_s$ and does not share any edge with other triangles. 
We refer to these triangles $\triangle$ as {\em source triangles}. 

For any point $t\in \calP$, shortest path $\pi(s,t)$ maps to a line segment on $\bigstar_s$ connecting an image of $s$ and an image of $t$~\cite{ref:AgarwalSt97}. The star unfolding $\bigstar_s$ also has the following ``no short-cut'' property: For any point $t^*\in \bigstar_s$ and any source image $s^*$, $|s^*t^*|\geq d(s,t)$, where $t$ is a preimage of $t^*$ on $\calP$~\cite{ref:AronovNo92} (refer to \cite{ref:ChandruSh04} for some stronger ``no short-cut'' properties). 


Let $p_1,p_2,\ldots,p_n$ be the images of the vertices of $\calP$ that ordered cyclically around the boundary of $\bigstar_s$. 
Then, it is possible to name the source images as $s_1,s_2,\ldots,s_n$ such that $\bigstar_s$ is cyclically bounded by the segments connecting every two
adjacent points of the following cyclic list: $s_1,p_1,s_2,p_2,\ldots,s_n,p_n$ (see Figure~\ref{fig:star}).
The kernel $\calK_s$ is cyclically bounded by the segments connecting every two
adjacent points of $p_1,p_2,\ldots,p_n$.
Although $\bigstar_s$ is bounded by $2n$ segments in the unfolded plane, its combinatorial complexity
is $\Theta(n^2)$. Specifically, a vertex of $\bigstar_s$ is $s_i$, $p_i$, or the image of an
intersection between an edge of $\calP$ and the shortest path from $s$ to a polyhedron vertex. 
An edge of $\bigstar_s$ is a maximal portion of the image of a polyhedron edge or the shortest path from $s$ to a polyhedron vertex
delimited by vertices of $\bigstar_s$. Each face of $\bigstar_s$ is called a {\em plate}. It is known that $\bigstar_s$ has
$O(n^2)$ vertices, edges, and plates; this bound is tight in the worst case. 

We define the {\em combinatorial structure} of $\bigstar_s$ as the graph whose nodes and arcs are the vertices and edges of
$\bigstar_s$, respectively. It is known that that $\bigstar_s$ is topologically equivalent (i.e., has the same
combinatorial structure) for all points $s$ in the same ridge-free region~\cite{ref:AgarwalSt97}, and furthermore, the kernel $\calK_s$ is fixed for all points $s$ in the same ridge-free region. Hence, we can use $\bigstar_R$ to denote the combinatorial structure of the star unfolding for all points in a ridge-free region $R$ and use $\calK_R$ to denote its kernel. 

\paragraph{\bf Cuttings.}
Let $S$ be a set of $n$ line segments in the plane. Let $m$ be the number of intersections of these segments. 
For a parameter $r$ with $1 \leq r \leq n$, a {\em $(1/r)$-cutting} for $S$ is a collection $\Xi$ of cells (each of which is a possibly-unbounded trapezoid) with disjoint interiors whose union covers the entire plane such that the interior of every cell $\sigma\in\Xi$ is intersected by at most $n/r$ segments of $S$. 
The {\em size} of $\Xi$ is defined to be the number of cells of $\Xi$.
A $(1/r)$-cutting of size $O(r^2)$ exists for $S$ and can be computed in $O(nr)$ time~\cite{ref:ChazelleCu93}.
The algorithm also produces the list of segments intersecting each cell of the cutting.

Furthermore, cuttings whose sizes are sensitive to $m$ can also be constructed. Specifically, a $(1/r)$-cutting of size $O(r^{1+\epsilon}+m\cdot r^2/n^2)$ exists for $S$\cite{ref:AgarwalPs05,ref:ChazelleCu93,ref:WangUn23} and can be computed in $O(nr^{\epsilon}+mr/n)$ time~\cite{ref:ChazelleCu93,ref:WangUn23}. The algorithm also produces the list of segments intersecting each cell of the cutting. Note that $nr^{\epsilon}+mr/n=O(nr)$ and we will use the time bound $O(nr)$ in our algorithm, which is sufficient for our purpose.

\section{Two-point shortest path queries -- the general case}
\label{sec:twopoint}

In this section, we present our data structure for the general case of the two-point shortest path queries. Given a pair of query points $(s,t)$ on $\calP$, our goal is to compute a shortest path $\pi(s,t)$. Our discussion will first focus on computing the geodesic distance $d(s,t)$ and then explain how to report $\pi(s,t)$. 
At a high level, our approach builds on the AAOS algorithm~\cite{ref:AgarwalSt97}. We begin with a brief review of the AAOS algorithm and then discuss our improvements in detail.

\subsection{A review of the AAOS algorithm}
\label{sec:review}

In the preprocessing, we compute ridge trees for all vertices of $\calP$. Then,
we obtain all ridge-free regions. All these can be done in $O(n^4)$ time. 

Consider two points $s,t\in \calP$. We first assume that $s$ is a generic point contained in a ridge-free region $R$. 
Let $t^*$ denote the image of $t$ in the star unfolding $\bigstar_s$. 
Depending on whether $t^*$ is in the kernel $\calK_R$, there are two cases. 

\paragraph{The outside-kernel case $\boldsymbol{t^*\not\in \calK_R}$.}
If $t^*\not\in \calK_R$, then $t^*$ lies in a source triangle $\triangle$ of
$\bigstar_s$. This case can be easily handled due to the following observation. 
Let $\Phi_R$ denote the preimage of the boundary of $\calK_R$ on $\calP$. Let $f$ be a face of $\calP$ and $C$ be a connected
component of $f\setminus \Phi_R$ whose image in $\bigstar_s$ is not contained in $\calK_R$. 
The observation is that the edge sequence of $\pi(s,t)$ for all $s\in
R$ and $t\in C$ is the same~\cite{ref:AgarwalSt97}. To answer queries in
this case, we perform the following preprocessing work. 

For each ridge-free region $R$, we choose an arbitrary point $p_R$ in $R$.  
We compute its kernel $\calK_R$ and obtain the preimage $\Phi_R$ of the boundary of $\calK_R$ on $\calP$. 
For each polyhedron face $f$, we compute
the connected components of $f\setminus \Phi_R$. For each component $C$, we label it 
whether its image is in $\calK_R$, and if not, we pick an
arbitrary point $q_C$ as its ``representative'' point. 
According to the above observation, the edge sequence of $\pi(s,t)$ is the same as that of $\pi(p_R,q_C)$
for any point $s\in R$ and any point $t\in C$. The total number of such
connected components in all the faces of $\calP$ is $O(n^2)$. As such, we obtain
$O(n^2)$ representative points on $\calP$. For each connected component $C$ in a polyhedron face $f$, we
compute the coordinate transformation, corresponding to the unfolding of the edge sequence of
$\pi(p_R,q_C)$, which maps the $f$-based coordinates of points in $f$ to the
$f_R$-based coordinates of the face $f_R$ of $\calP$ containing $R$. All these
$O(n^2)$ coordinate transformations can be computed in $O(n^2)$
time by a depth-first traversal of the sequence tree from $s$ constructed by the algorithm of Chen and Han~\cite{ref:ChenSh90}. 
Finally, we construct a point location data structure on the subdivision of
$f\setminus \Phi_R$ for each face $f$ of $\calP$~\cite{ref:EdelsbrunnerOp86,ref:KirkpatrickOp83}. All these can be done in $O(n^2)$
time and space for $R$. As there are $O(n^4)$ ridge-free regions, the total preprocessing time and space for this case is $O(n^6)$. 

For two query points $s$ and $t$, suppose that $R$ is the ridge-free region
that contains $s$ and $f$ is the face of $\calP$ that contains $t$. Using the
point location data structure on $f\setminus \Phi_R$, we locate the connected component $C$ of
$f\setminus \Phi_R$ that contains $t$ in $O(\log n)$ time. Using the
label at $C$, we know if the image of $C$ in $\bigstar_s$ is in $\calK_R$. If not, we can compute $d(s,t)$ in $O(1)$ time using the
coordinate transformation at $C$. 

In summary, with $O(n^6)$ time and space preprocessing, for each query $(s,t)$ in the outside-kernel case, $d(s,t)$ can be computed in $O(\log n)$ time. 

\paragraph{The inside-kernel case $\boldsymbol{t^*\in \calK_R}$.}
For the case $t^*\in \calK_R$, let $C$ be the connected component of $f\setminus \Phi_R$ containing $t$, where $f$ is the polyhedron face where $t$ lies. 
Since the kernel $\calK_R$ is fixed for all points $s\in R$, each point of $C$ has a fixed image in $\calK_R$ independent of the point $s\in R$. Correspondingly, in the preprocessing, we compute the coordinate transformation from the $f$-based
coordinates to the coordinates of the star unfolding plane containing $\calK_R$. Doing so for all connected components in all faces of $\calP$ can be done in $O(n^2)$ time for each ridge-free region $R$. 
Furthermore, we perform the following preprocessing work for each ridge-free region
$R$. 

As $s$ moves in $R$, the position of each of its images $s_i$, $1\leq i\leq n$, in $\bigstar_s$ is a linear
function of $s$. The distance of $s_i$ and $t^*$ in the unfolding plane of $\bigstar_s$ defines a 4-variate algebraic
function for $s\in R$ and $t\in \calK_R$. Let $\lowenv_R$ denote the lower envelope of these
$n$ functions. Then, $d(s,t)$ is determined by a vertical ray-shooting query on $\lowenv_R$
using the coordinates of $s$ and $t$. Due to the property that the location of
each $s_i$ is a linear function of the location of $s\in R$, by a linearization
technique, the ray-shooting query can be reduced to a vertical ray-shooting
query on the lower envelope of $n$ 8-variate hyperplanes. Such a ray-shooting query can be answered in $O(\log n)$ time after
$O(n^{4+\epsilon})$ time and space preprocessing~\cite{ref:MatousekRay93}. In the following, we refer to the above data structure as the {\em lower envelope data structure}. Hence, the total
preprocessing time and space for all ridge-free regions $R$ is
$O(n^{8+\epsilon})$. 

In summary, with $O(n^{8+\epsilon})$ time and space preprocessing, for each query $(s,t)$ in the inside-kernel case, $d(s,t)$ can be computed in $O(\log n)$ time. 

\paragraph{Non-generic query points.}
The above discusses the situation when $s$ is a generic point. If $s$ is not a generic point, then $s$ lies on the boundary of more than one ridge-free region. In this case, the query may be solved using the data structure of any ridge-free region whose boundary contains $s$; correctness follows by continuity~\cite{ref:AgarwalRa93}. 



\subsection{Our new solution}
\label{sec:newsol}

We now present our new algorithm for handling the inside-kernel case. 
As discussed in Section~\ref{sec:overview}, instead of constructing a lower envelope data structure for each ridge-free region $R$, we will resort to a structure using cuttings. 

Define $E$ to be the set of the edges of the ridge trees of all polyhedron vertices. Recall that each edge of a ridge tree is a line segment on a polyhedron face. Therefore, $E$ is a set of line segments on $\calP$. 

Consider a face $f$ of $\calP$. Let $E_f$ be the set of line segments of $E$ on $f$.  Let $m_f$ be the number of intersections of the segments of $E_f$. It is tempting to construct a cutting on the supporting lines of the segments of $E_f$. However, since each ridge tree has $O(n^2)$ edges, we have $|E|=O(n^3)$. Hence, the number of intersections between the supporting lines of the segments of $E$ is $O(n^6)$. This would lead to a higher time complexity of the algorithm than our target. In contrast, it is known that the number of intersections between the edges of all ridge trees is only $O(n^4)$~\cite{ref:AgarwalSt97}. Therefore, we will instead construct a cutting on the segments of $E_f$ whose size is sensitive to $m_f$, as discussed in Section~\ref{sec:pre}. 


In the following, we first present a preliminary algorithm, which helps the reader understand the basic idea and is the approach presented in the preliminary version of this work~\cite{ref:WangSh25}. 
It uses a single cutting. 
We will then further improve the algorithm by using a hierarchy of $O(1)$ cuttings. 

\subsubsection{A preliminary solution}

We first consider the following subproblem that will be needed in our preliminary algorithm. 

\paragraph{\bf A crucial subproblem.}
Let $B$ be a region of constant size (e.g., a trapezoid) on a polyhedron face. We assume that the number of segments of $E$ intersecting the interior of $B$ is at most $n^{2/3}$ (as will be clear later, this parameter minimizes the overall preprocessing complexity of our preliminary algorithm). 
We consider the following {\em subproblem}: Build a data structure to compute $d(s,t)$ for queries $(s,t)$ with $s\in B$ and $t\in \calP$. With the above AAOS algorithm, we could do the following. The segments of $E$ partition $B$ into $O(n^{4/3})$ ridge-free regions. For each such region, we construct a lower envelope data structure of $O(n^{4+\epsilon})$ space in $O(n^{4+\epsilon})$ preprocessing time. Hence, the total preprocessing time and space is $O(n^{4+\frac{4}{3}+\epsilon})$ and the query time is $O(\log n)$. In the following, we propose a new method that reduces the preprocessing time and space to $O(n^{4+\epsilon})$. 

Let $V_B$ denote the set of polyhedron vertices $v$ such that the interior of $B$ does not intersect the ridge tree $T_v$ of $v$. Let $V'_B$ denote the set of the polyhedron vertices not in $V_B$; by definition, for each $v\in V'_B$, 
$T_v$ intersects the interior of $B$. 

Consider a point $s\in B$ and its star unfolding $\bigstar_s$. For each source image $s_i$ in $\bigstar_s$, as discussed in Section~\ref{sec:pre}, its two adjacent vertices in $\bigstar_s$ are images of two polyhedron vertices.  If $s$ moves in $B$, by definition, it may cross an edge of $T_v$ for some polyhedron vertex $v\in V'_B$ but it never crosses any edge of $T_v$ for any $v\in V_B$. As $s$ moves in $B$, when it crosses an edge $e$ of $T_v$ of a polyhedron vertex $v$, two source images of $\bigstar_s$ adjacent to $v^*$ merge into one and another source image splits into two, both adjacent to $v^*$, where $v^*$ is the image of $v$ in $\bigstar_s$, while all other source images do not change their relative positions in $\bigstar_s$ (i.e., their adjacent vertices do not change; see Figure~\ref{fig:kernelchange})~\cite{ref:CookSh12}. Hence, each ridge tree edge in the interior of $B$ affects at most four source images as $s$ moves in $B$. 

\begin{figure}[t]
\begin{minipage}[t]{\linewidth}
\begin{center}
\includegraphics[totalheight=2.3in]{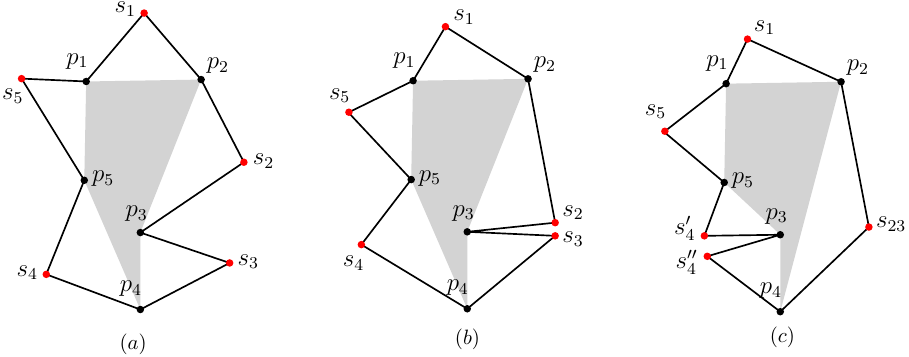}
\caption{
Illustrating the change of $\bigstar_s$ as $s$ moves crossing an edge of $T_{p_3}$: (a) before crossing; (b) about to cross ($s_2$ and $s_3$ are about to merge); and (c) after the crossing ($s_2$ and $s_3$ are merged into a new source image $s_{23}$ while $s_4$ is split into two new source images $s_4'$ and $s_4''$ both connecting to $p_3$).}
\label{fig:kernelchange}
\end{center}
\end{minipage}
\end{figure}


For each source image $s_i$ in $\bigstar_s$, if the adjacent vertices of $s_i$ in $\bigstar_s$ remain the same for all $s\in B$, then the two vertices must be images of two polyhedron vertices from $V_B$ and we call $s_i$ a {\em stable source image} (with respect to $B$); otherwise $s_i$ is {\em unstable} (e.g., in the example of Figure~\ref{fig:star}, suppose that $V_B=\{p_1,p_2,p_3\}$, then only $s_1$ and $s_2$ could possibly be stable).

Since $n^{2/3}$ ridge tree edges intersect the interior of $B$, there are at most $O(n^{2/3})$ unstable source images and at least $n-O(n^{2/3})$ stable source images. We find all stable source images, which can be done in $O(n^{2+\frac{4}{3}})$ time by computing the star unfoldings of all ridge-free regions of $B$. More specifically, there are $O(n^{4/3})$ ridge-free regions in $B$ and the star unfolding of each region can be computed in $O(n^2)$ time~\cite{ref:AgarwalSt97,ref:ChenSh90}.\footnote{In fact, a more efficient algorithm can compute all these star unfoldings in $O(n^2+n\cdot n^{4/3}+n^3)$ time by first computing the star unfolding for one ridge-free region and then update it in $O(n)$ time for each of its neighboring regions~\cite{ref:CookSh12}. We could even do so in a much easier way as here we only need to know the cyclic order of the vertices on the boundary of each source unfolding. The $n^3$ term in the above time complexity is for computing ridge trees of all polyhedron vertices.} Let $S_B$ denote the set of all stable source images.

Pick an arbitrary point $s\in B$ and let $\Pi_B$ be the unfolding plane of $\bigstar_s$. 
We construct a lower envelope data structure for the source images of $S_B$ with respect to $s\in B$ and any point $t^*$ in the plane $\Pi_B$. When $s$ moves in $B$, the plane $\Pi_B$ is fixed and the position of each source image of $S_B$ on $\Pi_B$ is a linear function of the position of $s\in B$~\cite{ref:AgarwalSt97}. As in the AAOS algorithm, constructing the data structure takes $O(n^{4+\epsilon})$ space and preprocessing time. Using the data structure, given any pair of points $(s,t^*)$ with $s\in B$ and $t^*\in \Pi_B$, the point of $S_B$ closest to $t^*$ can be found in $O(\log n)$ time. 

Next, for each ridge-free region $R$ of $B$, consider its star unfolding $\bigstar_{R}$. As $R\subseteq B$, by definition, each point of $S_B$ is also a source image in $\bigstar_R$. Let $S_{R}'$ denote the set of source images in $\bigstar_{R}$ that are not in $S_B$. Hence, $|S_B|+|S_{R}'|=n$. As $|S_B|=n-O(n^{2/3})$, we obtain $|S_{R}'|=O(n^{2/3})$. For convenience, we embed  $\bigstar_{R}$ in the plane $\Pi_B$. We build a lower envelope data structure for the source images of $S_R'$ with respect to $s\in R$ and any point $t^*$ in the plane $\Pi_B$. As above, this takes $O(|S_R'|^{4+\epsilon})$ preprocessing time and space, which is $O(n^{8/3+\epsilon})$ as $|S_R'|=O(n^{2/3})$. Using the data structure, given any pair of points $(s,t^*)$ with $s\in R$ and $t^*\in \Pi_B$, the point of $S_R'$ closest to $t^*$ can be found in $O(\log n)$ time. We do this for all ridge-free regions $R$ of $B$. As $B$ has $O(n^{4/3})$ ridge-free regions, the total space and preprocessing time is $O(n^{4+\epsilon})$. 

In addition, we build a point location data structure on the $O(n^{4/3})$ ridge-free regions of $B$ in $O(n^{4/3})$ time~\cite{ref:EdelsbrunnerOp86,ref:KirkpatrickOp83}. Finally, as in the AAOS algorithm, for each ridge-free region $R$ of $B$, we compute the kernel $\calK_R$ of its star unfolding $\bigstar_R$ embedded in $\Pi_B$, and obtain its preimage $\Phi_R$ on $\calP$. 
Then, for each polyhedron face $f$, for each component $C$ of $f\setminus \Phi_R$, we label it whether its image is in $\calK_R$. If not, then we do the same proprocessing work as in the outside-kernel case of the AAOS algorithm. 
If $C$ is in $\calK_R$, then we compute the coordinate transformation from the $f$-based coordinates to the coordinates of $\Pi_B$. As in the AAOS algorithm, doing the above for each ridge-free region $R$ of $B$ takes $O(n^2)$ time, and takes $O(n^{2+\frac{4}{3}})$ time for all $O(n^{4/3})$ ridge-free regions of $B$.

This finishes the preprocessing for $B$, which takes $O(n^{4+\epsilon})$ space and preprocessing time in total. 


Consider a query $(s,t)$ with $s\in B$ and $t\in \calP$. Using a point location query, we find the ridge-free region $R$ of $B$ that contains $s$ in $O(\log n)$ time. Let $f$ be the face of $\calP$ that contains $t$.  Using the
point location data structure on $f\setminus \Phi_R$, we locate the connected component $C$ of
$f\setminus \Phi_R$ that contains $t$. From the
label at $C$, we know if the image of $C$ in $\bigstar_s$ is in $\calK_R$. If not, we can compute $d(s,t)$ in $O(1)$ time using the
coordinate transformation at $C$ as in the AAOS algorithm. 

If the image of $C$ is in $\calK_R$, then using the coordinate transformation at $C$, we compute the coordinates of $t^*$ in $\Pi_B$. Using the lower envelope data structure for $S_B$, we find the source image $s_i\in S_B$ closest to $t^*$. Next, using the lower envelope data structure for $S_{R}'$, we find the source image $s_j\in S_{R}'$ closest to $t^*$. We then choose the one of $s_i$ and $s_j$ closer to $t^*$, and return their distance as $d(s,t)$. The total query time is $O(\log n)$. As discussed above, the correctness is based on the fact that the set of source images of $\bigstar_s$ is exactly $S_B\cup S_R'$. 

\paragraph{\bf The original problem.}
We are now ready to solve our original problem by using our algorithm for the above subproblem. 
Consider a face $f$ of $\calP$. Let $n_f=|E_f|$. Recall that $m_f$ is the number of intersections of the line segments of $E_f$. Note that $\sum_{f\in \calP}n_f=O(n^3)$ and $\sum_{f\in \calP}m_f=O(n^4)$~\cite{ref:AgarwalSt97}. 

If $n_f\leq n^{2/3}$, then we build the above subproblem data structure for $f$ and $E_f$ (i.e., treat the entire $f$ as $B$), which takes $O(n^{4+\epsilon})$ time and space. Let $\calD_f$ denote this data structure. Using $\calD_f$, $d(s,t)$ for each query $(s,t)$ with $s\in f$ can be computed in $O(\log n)$ time. As $\calP$ has $O(n)$ faces, the total preprocessing time and space for all such faces $f$ of $\calP$ with $n_f\leq n^{2/3}$ is $O(n^{5+\epsilon})$. 


If $n_f>n^{2/3}$, then we resort to cuttings. We compute a $(1/r)$-cutting $\Xi$ for the line segments of $E_f$ with a parameter $r$ such that $n_f/r=n^{2/3}$. As discussed in Section~\ref{sec:pre},  such a cutting of size $O(r^{1+\epsilon}+m_f\cdot r^2/n_f^2)$ exists and can be computed in $O(n_f\cdot r)$ time~\cite{ref:WangUn23}. For each cell $\sigma\in \Xi$, which is a trapezoid, the number of segments of $E_f$ intersecting the interior of $\sigma$ is at most $n_f/r=n^{2/3}$. Hence, it takes $O(n^{4+\epsilon})$ space and preprocessing time to construct the above subproblem data structure on $\sigma$; let $\calD_{\sigma}$ denote the data structure. Using $\calD_{\sigma}$, $d(s,t)$ for each query $(s,t)$ with $s\in \sigma$ can be computed in $O(\log n)$ time. 
If we do this for all polyhedron faces, the total number of cells in the cuttings of all faces is on the order of 
\begin{align*}
    &\sum_{f\in \calP,n_f>n^{2/3}} \left(r^{1+\epsilon}+m_f\cdot r^2/n_f^2\right)     \leq \sum_{f\in \calP}\left(\frac{n_f}{n^{2/3}}\right)^{1+\epsilon} + \sum_{f\in \calP} m_f/n^{4/3} \\
    &\leq n^{\epsilon} \cdot \sum_{f\in \calP}n_f/n^{2/3} + O(n^4/n^{4/3}) =O(n^{3-2/3+\epsilon}+ n^4/n^{4/3}) = O(n^{2+\frac{2}{3}}).
\end{align*}
Therefore, constructing $\calD_{\sigma}$ for all cells $\sigma$ in all cuttings takes $O(n^{6+\frac{2}{3}+\epsilon})$ time and space.
Note that following the above analysis, we have $\sum_{f\in \calP,n_f>n^{2/3}} O(n_f\cdot r)= O(n^3)\cdot \sum_{f\in \calP,n_f>n^{2/3}}r=O(n^3\cdot n^{3-2/3})=O(n^{6-2/3})$, and thus the total time for constructing all the cuttings is $O(n^{6-2/3})$. 
Hence, the total preprocessing time and space is bounded by $O(n^{6+\frac{2}{3}+\epsilon})$.

For each query $(s,t)$, let $f$ be the face of $\calP$ that contains $s$. If $n_f\leq n^{2/3}$, then using the data structure $\calD_f$, $d(s,t)$ can be computed in $O(\log n)$ time. Otherwise, let $\Xi$ be the cutting on $f$. Using $\Xi$, we can find the cell $\sigma\in \Xi$ containing $s$ in $O(\log n)$ time~\cite{ref:ChazelleCu93,ref:WangUn23}. Then, using the data structure $\calD_{\sigma}$, $d(s,t)$ can be computed in $O(\log n)$ time. 

\subsubsection{An improved solution}
\label{sec:improved}

We now improve the above solution. The main idea is to build multi-level cuttings instead of only one. We again start with a subproblem. 

\paragraph{\bf A crucial subproblem.}
Let $B$ be a region of constant size on a face of $\calP$ such that $|E_B|\leq n$, where $E_B$ is set of segments of $E$ intersect the interior of $B$. We show that with $O(n^{4+\epsilon})$ time and space preprocessing, $d(s,t)$ can be computed in $O(\log n)$ time for any query $(s,t)$ with $s\in B$ and $t\in \calP$ (in contrast, the AAOS algorithm would need $O(n^{6+\epsilon})$ preprocessing time and space). 

To simplify the notation, let $n=|E_b|$. As before, let $\Pi_B$ be an unfolding plane of $\bigstar_s$ for $s\in B$. We build a hierarchy of $O(1)$ cuttings on $B$ for $E_B$, as follows. 

We compute a $(1/r)$-cutting $\Xi_1$ for the line segments of $E_B$ with $r=n^{\epsilon}$. Such a cutting of size $O(r^2)$ exists and can be computed in $O(nr)$ time~\cite{ref:ChazelleCu93}. For each cell $\sigma\in \Xi_1$, we use $E_{\sigma}$ to denote the set of segments of $E$ intersecting the interior of $\sigma$. By definition, $|E_{\sigma}|\leq n/r=n^{1-\epsilon}$. With respect to $\sigma$ (and $E_{\sigma}$), we define {\em stable sources images} of the star unfolding $\bigstar_s$ when $s$ moves in $\sigma$ in the same way as before, and let $S_\sigma$ denote the set of stable source images. For each point $s\in \sigma$, as $|E_{\sigma}|\leq n^{1-\epsilon}$, the number of unstable source images is at most $O(n^{1-\epsilon})$. We compute $S_{\sigma}$, which can be done in $O(n^{4-2\epsilon})$ time (e.g., by first computing the arrangement of the segments of $E_{\sigma}$ inside $\sigma$ and then compute the star unfolding $\bigstar_s$ for $s$ in each cell of the arrangement). We further construct a lower envelope data structure $\lowenv_{\sigma}$ for the source images of $S_\sigma$ with respect to $s\in \sigma$ and $t^*\in \Pi_B$, which takes $O(n^{4+\epsilon})$ time and space as $|S_{\sigma}|\leq n$. Computing $S_{\sigma}$ and $\lowenv_{\sigma}$ for all $\sigma\in \Xi_1$ takes $O(n^{4+3\epsilon})$ time and space in total since $\Xi_1$ has $O(n^{2\epsilon})$ cells. 

Next, for each cell $\sigma\in \Xi_1$, we construct an $(1/r)$-cutting $\Xi_\sigma$ for $E_{\sigma}$ inside $\sigma$ with $r=n^{\epsilon}$. Let $\Xi_2$ be the union of all cuttings $\Xi_{\sigma}$ for all $\sigma\in \Xi_1$. Constructing $\Xi_{\sigma}$ takes $O(|E_{\sigma}|\cdot r)=O(n^{1-\epsilon}\cdot n^{\epsilon})=O(n)$ time~\cite{ref:ChazelleCu93}. Hence, computing $\Xi_2$ takes $O(n^{1+2\epsilon})$ time since $\Xi_1$ has $O(n^{2\epsilon})$ cells. Note that $\Xi_2$ has $O(n^{4\epsilon})$ cells and $|E_{\sigma}|\leq n^{1-2\epsilon}$ for each cell $\sigma\in \Xi_2$. Therefore, $\Xi_2$ is essentially a $(1/n^{2\epsilon})$-cutting for $E_B$. Furthermore, each cell of $\Xi_2$ is completely contained in a single cell of $\Xi_1$.

In general, suppose that we already have the cutting $\Xi_i$ that has $O(n^{2i\epsilon})$ cells with $|E_{\sigma}|\leq n^{1-i\epsilon}$ for each cell $\sigma\in \Xi_i$. For each cell $\sigma\in \Xi_i$, we construct a $(1/r)$-cutting $\Xi_\sigma$ for $E_{\sigma}$ inside $\sigma$ with $r=n^{\epsilon}$. Let $\Xi_{i+1}$ be the union of all cuttings $\Xi_{\sigma}$ for all $\sigma\in \Xi_i$. Constructing $\Xi_{\sigma}$ takes $O(|E_{\sigma}|\cdot r)=O(n^{1-i\epsilon}\cdot n^{\epsilon})=O(n^{1-i\epsilon+\epsilon})$ time. Hence, computing $\Xi_{i+1}$ takes $O(n^{1+i\epsilon+\epsilon})$ time since $\Xi_i$ has $O(n^{2i\epsilon})$ cells. Note that $\Xi_{i+1}$ has $O(n^{2(i+1)\epsilon})$ cells and $|E_{\sigma}|\leq n^{1-(i+1)\epsilon}$ for each cell $\sigma\in \Xi_{i+1}$. Therefore, $\Xi_{i+1}$ is essentially a $(1/n^{(i+1)\epsilon})$-cutting for $E_B$. Furthermore, each cell $\sigma$ of $\Xi_{i+1}$ is completely contained in a single cell $\sigma'$ of $\Xi_i$, and we call $\sigma'$ the {\em parent cell} of $\sigma$ (as such, the cells of all cuttings $\Xi_1,\cdots,\Xi_{i+1}$ form a tree structure). 
With respect to $\sigma$ (and $E_{\sigma}$), we define {\em stable sources images} of the star unfolding $\bigstar_s$ when $s$ moves in $\sigma$, and let $S_\sigma$ denote the set of stable source images. In addition, define $S'_{\sigma}=S_{\sigma}\setminus S_{\sigma'}$, where $\sigma'$ is the parent cell of $\sigma$. In particular, we let $S'_{\sigma}=S_{\sigma}$ for all cells $\sigma\in \Xi_1$. 
The following lemma is crucial to the success of the our approach. 

\begin{lemma}\label{lem:multicut}
For any cell $\sigma\in \Xi_{i+1}$, we have the following.
\begin{enumerate}
    \item $S_{\sigma}$ is the union of $S'_{\sigma}$ and $S'_{\sigma'}$ of the cell $\sigma'\in \Xi_j$ containing $\sigma$ for all $1\leq j\leq i$, i.e., if we consider $\sigma$ an ancestor cell of itself, then $S_{\sigma}=\bigcup_{\sigma'\in \Xi_j \text{ is an ancestor of }\sigma,\  1\leq j\leq i+1}S'_{\sigma'}$. 
    \item $|S'_{\sigma}|=O(n^{1-i\epsilon})$. 
\end{enumerate}
\end{lemma}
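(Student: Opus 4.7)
The plan is to derive both statements from a single monotonicity property of the family of stable source images along the cutting hierarchy, combined with the already-recalled fact that each ridge-tree edge passing through a cell can destabilize only $O(1)$ source images.

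The first step will be the key observation that if $\sigma \subseteq \sigma'$, then $E_\sigma \subseteq E_{\sigma'}$: every segment of $E$ that crosses the interior of $\sigma$ also crosses the interior of $\sigma'$. Recall that a source image of $\bigstar_s$ is unstable with respect to a cell $\tau$ precisely when some segment in $E_\tau$ can change its two adjacent polyhedron-vertex images as $s$ moves inside $\tau$; consequently, shrinking the cell from $\sigma'$ down to $\sigma$ can only eliminate potential disturbances, which gives $S_{\sigma'} \subseteq S_\sigma$. Applying this inclusion along the ancestor chain $\sigma = \sigma_{i+1} \subset \sigma_i \subset \cdots \subset \sigma_1$ with $\sigma_j \in \Xi_j$ produces a monotone chain
$$S_{\sigma_1} \subseteq S_{\sigma_2} \subseteq \cdots \subseteq S_{\sigma_{i+1}} = S_\sigma.$$
Since by definition $S'_{\sigma_1} = S_{\sigma_1}$ and $S'_{\sigma_j} = S_{\sigma_j} \setminus S_{\sigma_{j-1}}$ for $j \geq 2$, telescoping this chain immediately yields the first claim, and in fact shows that the union is disjoint.

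For the second claim, I will invoke the observation (stated in the paragraph preceding Figure~\ref{fig:kernelchange} and attributed to~\cite{ref:CookSh12}) that crossing any single ridge-tree edge alters at most four source images of $\bigstar_s$. It follows that the total number of source images unstable with respect to any cell $\tau$ is $O(|E_\tau|)$. Since every element of $S'_\sigma = S_\sigma \setminus S_{\sigma'}$ is stable with respect to $\sigma$ but was unstable with respect to the parent cell $\sigma' \in \Xi_i$, we get $|S'_\sigma| = O(|E_{\sigma'}|)$. Finally, because $\Xi_i$ is a $(1/n^{i\epsilon})$-cutting of $E_B$ and $|E_B| \leq n$, the cutting guarantee gives $|E_{\sigma'}| \leq n^{1-i\epsilon}$, hence $|S'_\sigma| = O(n^{1-i\epsilon})$.

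The only real conceptual step is the monotonicity $S_{\sigma'} \subseteq S_\sigma$ for nested cells, which amounts to noting that the set of edges able to disturb a given source image only shrinks as the cell shrinks; once this is in hand, both parts of the lemma reduce to a direct combination of the per-edge $O(1)$ destabilization bound and the cutting-size inequality, with no further calculation needed.
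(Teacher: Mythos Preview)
Your proposal is correct and follows essentially the same approach as the paper: both establish the monotonicity $S_{\sigma'}\subseteq S_{\sigma}$ for nested cells via the inclusion $E_{\sigma}\subseteq E_{\sigma'}$ (the paper phrases this dually as $V_{\sigma_i}\subseteq V_{\sigma_{i+1}}$), telescope to get part~(1), and then bound $|S'_{\sigma}|$ by the number of unstable images with respect to the parent cell using the $O(1)$-per-edge destabilization fact together with $|E_{\sigma'}|\leq n^{1-i\epsilon}$.
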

\begin{proof}
For each $\Xi_j$, $1\leq j\leq i$, let $\sigma_j$ denote the cell of $\Xi_j$ containing $\sigma$. We let $\sigma_{i+1}$ be $\sigma$ itself. 
For the first lemma statement, our goal is to show that $S_{\sigma_{i+1}}=\bigcup_{j=1}^{i+1}S'_{\sigma_j}$. We assume inductively that $S_{\sigma_i}=\bigcup_{j=1}^{i}S'_{\sigma_j}$. The base case holds since $S_{\sigma_1}=S'_{\sigma_1}$ by definition. Hence, our goal is to prove that $S_{\sigma_{i+1}}=S'_{\sigma_{i+1}}\cup S_{\sigma_i}$. By definition, $S'_{\sigma_{i+1}}=S_{\sigma_{i+1}}\setminus S_{\sigma_i}$. Therefore, it suffices to argue that $S_{\sigma_i}\subseteq S_{\sigma_{i+1}}$. 

Consider a source image $s_g\in S_{\sigma_i}$. Let $V_{\sigma_i}$ denote the set of vertices $v$ of $\calP$ such that $T_v$ does not have a segment in $E_{\sigma_i}$. 
By definition, $s_g$ in $\bigstar_s$ is always adjacent to the same two vertices of $V_{\sigma_i}$ for all $s\in \sigma_{i}$. 
We define $V_{\sigma_{i+1}}$ similarly. As $\sigma_{i+1}\subseteq \sigma_i$, $E_{\sigma_{i+1}}\subseteq E_{\sigma_i}$, and this implies that $V_{\sigma_i}\subseteq V_{\sigma_{i+1}}$. Hence, $s_g$ in $\bigstar_s$ is always adjacent to the same two vertices of $V_{\sigma_{i+1}}$ for all $s\in \sigma_{i+1}\subseteq \sigma_i$. Therefore, $s_g\in S_{\sigma_{i+1}}$. 
This proves $S_{\sigma_i}\subseteq S_{\sigma_{i+1}}$. The first lemma statement thus follows. 

We now prove the bound $|S'_{\sigma_{i+1}}|=O(n^{1-i\epsilon})$. Consider a source image $s_g\in S'_{\sigma_{i+1}}$. By definition, $s_g\in S_{\sigma_{i+1}}$ but $s_g\not\in S_{\sigma_i}$. This means that $s_g$ in $\bigstar_s$ is always adjacent to the same two vertices of $V_{\sigma_{i+1}}$ for all $s\in \sigma_{i+1}$, but when $s$ moves in $\sigma_i$, $s$ will cross a segment of the ridge tree $T_v$ of at least one vertex $v\in V_{\sigma_{i+1}}$ whose image is adjacent to $s_g$ in $\bigstar_s$ when $s\in \sigma_{i+1}$. As $|E_{\sigma_i}|\leq n^{1-i\epsilon}$, the number of such source images $s_g$ is at most $O(n^{1-i\epsilon})$. We thus conclude that $|S'_{\sigma_{i+1}}|=O(n^{1-i\epsilon})$. 
\end{proof}

In light of Lemma~\ref{lem:multicut}, we construct a lower envelope data structure $\lowenv_{\sigma}$ for the source images of $S'_{\sigma}$ with respect to $s\in \sigma$ and $t^*\in \Pi_B$, which takes $O(n^{4(1-i\epsilon)+\epsilon})$ time and space as $|S'_{\sigma}|=O(n^{1-i\epsilon})$ by Lemma~\ref{lem:multicut}(2). Doing this for all cells of $\Xi_{i+1}$ takes $O(n^{4-2i\epsilon+3\epsilon})$ time and space, since $\Xi_{i+1}$ has $O(n^{2(i+1)\epsilon})$ cells. This finishes the preprocessing for $\Xi_{i+1}$. 

We continue this until we have computed a cutting $\Xi_k$ such that $n^{1-k\epsilon}\leq n^{\epsilon}$. For simplicity, we choose $\epsilon$ so that $(k+1)\epsilon=1$, making $k=O(1)$. Note that $\Xi_k$ has $O(n^{2k\epsilon})$ cells and $|E_\sigma|\leq n^{1-k\epsilon}\leq n^{\epsilon}$ for all cells $\sigma\in \Xi_k$. The total preprocessing time and space for all cuttings $\Xi_1,\Xi_2,\ldots,\Xi_k$ is on the order of $\sum_{i=0}^{k-1}n^{4-2i\epsilon+3\epsilon}$, which is $O(n^{4+3\epsilon})$. Note that these cuttings can also be obtained by first computing a hierarchical sequence of cuttings for $E_B$ using Chazelle's algorithm~\cite{ref:ChazelleCu93} and then picking above cuttings from the sequence. 

Next, we perform the following additional preprocessing for each cell $\sigma$ in the last cutting $\Xi_k$. 
We compute the arrangement $A_{\sigma}$ of the segments of $E_{\sigma}$ inside $\sigma$.
For each cell $R\in A_{\sigma}$, consider its star unfolding $\bigstar_{R}$ (note that the interior of $R$ does not intersect any segment of $E$). As $R\subseteq \sigma$, 
by definition, each point of $S_\sigma$ is also a source image in $\bigstar_R$. Let $S_{R}$ denote the set of source images in $\bigstar_{R}$ and let $S_R'=S_R\setminus S_\sigma$. By Lemma~\ref{lem:multicut}, $S_R$ is the union of $S_R'$ and $S'_{\sigma'}$ for all ancestor cells $\sigma'\in \Xi_i$ of $\sigma$ for all $1\leq i\leq k$ (we consider $\sigma$ an ancestor of itself).
Since $|E_\sigma|\leq n^{\epsilon}$, by the same argument as in the proof of Lemma~\ref{lem:multicut}(2), $|S_R'|=O(n^{\epsilon})$. We build a lower envelope data structure $\lowenv_R$ for the source images of $S_R'$ with respect to $s\in R$ and $t^*\in \Pi_B$. This takes $O(|S_R'|^{4+\epsilon})$ preprocessing time and space, which is $O(n^{4\epsilon+\epsilon^2})$ as $|S_R'|=O(n^{\epsilon})$. 
Given any query $(s,t^*)$ with $s\in R$ and $t^*\in \Pi_B$, the point of $S_R$ closest to $t^*$ is the point of $S_R'$ closest to $t^*$ or the point of $S'_{\sigma'}$ closest to $t^*$ for some ancestor cell $\sigma'\in \Xi_i$ of $\sigma$, $1\leq i\leq k$. As $k=O(1)$, the point of $S_R$ closest to $t^*$ can be computed in $O(\log n)$ time using the lower envelope data structures $\lowenv_R$ and $\lowenv_{\sigma'}$ for all ancestor cells $\sigma'\in \Xi_i$ of $\sigma$ for all $1\leq i\leq k$. 
Since $|E_{\sigma}|=O(n^{\epsilon})$, $|A_{\sigma}|=O(n^{2\epsilon})$. 
As $\Xi_k$ has $O(n^{2k\epsilon})$ cells, preprocessing all cells $\sigma\in \Xi_k$ as above takes $O(n^{2k\epsilon+2\epsilon +4\epsilon+\epsilon^2})$ time and space, which is $O(n^{2+4\epsilon+\epsilon^2})$. 

For convenience, we use $\Xi_{k+1}$ to denote the collection of all arrangement cells $R\in A_{\sigma}$ for all $\sigma\in \Xi_k$. Hence, the number of cells $R$ in $\Xi_{k+1}$ is $O(n^{2k\epsilon+2\epsilon})=O(n^2)$. We construct a point location data structure on the cells of $\Xi_{k+1}$ in $O(n^2)$ time.

In addition, as before, for each cell $R$ of $\Xi_{k+1}$, we compute the kernel $\calK_R$ of its star unfolding $\bigstar_R$ embedded in $\Pi_B$, and obtain its preimage $\Phi_R$ on $\calP$. 
Then, for each polyhedron face $f$, for each component $C$ of $f\setminus \Phi_R$, we label it whether its image is in $\calK_R$. If not, then we do the same proprocessing work as in the outside-kernel case of the AAOS algorithm. 
If $C$ is in $\calK_R$, then we compute the coordinate transformation from the $f$-based coordinates to the coordinates of $\Pi_B$. As before, doing the above for each cell $R$ of $\Xi_{k+1}$ takes $O(n^2)$ time, and the total time over all $O(n^2)$ cells of $\Xi_{k+1}$ is $O(n^4)$.

This finishes the preprocessing for $B$, which takes $O(n^{4+3\epsilon})$ time and space in total. 


Consider a query $(s,t)$ with $s\in B$ and $t\in \calP$. Using a point location query, we find the cell $R$ of $\Xi_{i+1}$ that contains $s$ in $O(\log n)$ time. Let $f$ be the face of $\calP$ that contains $t$.  Using the
point location data structure on $f\setminus \Phi_R$, we locate the connected component $C$ of
$f\setminus \Phi_R$ that contains $t$. From the
label at $C$, we know whether the image of $C$ in $\bigstar_s$ is in $\calK_R$. If not, we can compute $d(s,t)$ in $O(1)$ time using the
coordinate transformation at $C$. 

If the image of $C$ is in $\calK_R$, then using the coordinate transformation at $C$, we compute the coordinates of $t^*$ in $\Pi_B$. As discussed above, using the lower envelope data structures, we can find the source image $s_i\in S_{R}$ closest to $t^*$ in $O(\log n)$ time, and return their distance as $d(s,t)$. The total query time is $O(\log n)$. 

\paragraph{\bf The original problem.}
We now solve the original problem using our algorithm for the above subproblem. 

For each face $f$ of $P$. If $n_f\leq n$, then we build the above data structure for $f$ and $E_f$, which takes $O(n^{4+3\epsilon})$ time and space. Let $\calD_f$ denote this data structure. Using $\calD_f$, $d(s,t)$ for each query $(s,t)$ with $s\in f$ can be computed in $O(\log n)$ time. As $\calP$ has $O(n)$ faces, the total preprocessing time and space for all such faces $f$ of $\calP$ with $n_f\leq n$ is $O(n^{5+3\epsilon})$. 

If $n_f>n$, 
we compute a $(1/r)$-cutting $\Xi$ for $E_f$ with a parameter $r$ such that $n_f/r=n$. As discussed in Section~\ref{sec:pre},  such a cutting of size $O(r^{1+\epsilon}+m_f\cdot r^2/n_f^2)$ exists and can be computed in $O(n_f\cdot r)$ time~\cite{ref:WangUn23}. For each cell $\sigma\in \Xi$, which is a trapezoid, the number of segments of $E_f$ intersecting the interior of $\sigma$ is at most $n_f/r=n$. 
Hence, it takes $O(n^{4+3\epsilon})$ space and preprocessing time to construct the above subproblem data structure on $\sigma$; let $\calD_{\sigma}$ denote the data structure. Using $\calD_{\sigma}$, $d(s,t)$ for each query $(s,t)$ with $s\in \sigma$ can be computed in $O(\log n)$ time. 
If we do this for all polyhedron faces, the total number of cells in the cuttings of all faces is on the order of 
\begin{align*}
    &\sum_{f\in \calP,n_f>n} \left(r^{1+\epsilon}+m_f\cdot r^2/n_f^2\right)     \leq \sum_{f\in \calP}\left(\frac{n_f}{n}\right)^{1+\epsilon} + \sum_{f\in \calP} m_f/n^2 \\
    &\leq n^{\epsilon} \cdot \sum_{f\in \calP}n_f/n + O(n^4/n^2) =O(n^{3-1+\epsilon}+ n^2) = O(n^{2+\epsilon}).
\end{align*}

Therefore, constructing $\calD_{\sigma}$ for all cells $\sigma$ in all cuttings takes $O(n^{6+4\epsilon})$ time and space.
The total time of constructing the cuttings for all faces is $\sum_{f\in \calP,n_f>n} O(n_f\cdot r)= O(n^3)\cdot \sum_{f\in \calP,n_f>n}r=O(n^3\cdot n^{3-1})=O(n^5)$.
Hence, the total preprocessing time and space is bounded by $O(n^{6+4\epsilon})$, which can be written as $O(n^{6+\epsilon})$ for a bigger $\epsilon$.

For each query $(s,t)$, let $f$ be the face of $\calP$ that contains $s$. If $n_f\leq n$, then using the data structure $\calD_f$, $d(s,t)$ can be computed in $O(\log n)$ time. Otherwise, let $\Xi$ be the cutting on $f$. Using $\Xi$, we can find the cell $\sigma\in \Xi$ containing $s$ in $O(\log n)$ time~\cite{ref:ChazelleCu93,ref:WangUn23}. Then, using the data structure $\calD_{\sigma}$, $d(s,t)$ can be computed in $O(\log n)$ time.

\subsubsection{Reporting a shortest path $\boldsymbol{\pi(s,t)}$}
The above focuses on computing the geodesic distance $d(s,t)$. How to report $\pi(s,t)$ was not discussed in the AAOS algorithm. In the following, we show that with some extra work (without affecting the overall preprocessing complexities asymptotically), $\pi(s,t)$ can be reported in $O(\log n+|\pi(s,t)|)$ time. 

Let $\calR$ denote the set of the ridge-free regions of $\calP$ subdivided by the ridge trees $T_v$ of all vertices $v$ of $\calP$. Recall that $|\calR|=O(n^4)$ and $\calR$ can be computed in $O(n^4)$ time~\cite{ref:AgarwalSt97}. We construct a point location data structure on the ridge-free regions in each face of $\calP$. This takes $O(n^4)$ time and space in total for all faces. Given a query point $s$ on a face of $\calP$, the ridge-free region $R\in \calR$ containing $s$ can be found in $O(\log n)$ time by a point location query. 

\paragraph{The outside-kernel case $\boldsymbol{t^*\not\in \calK_R}$.}
For the outside-kernel case $t^*\not\in \calK_R$, where $R\in \calR$ is the ridge-free region containing $s$ and $t^*$ is the image of $t$ in the star unfolding $\bigstar_s$, we perform the following additional work in the preprocessing. We follow the notation in Section~\ref{sec:review}. 
For each ridge-free region $R$, using Chen and Han's algorithm~\cite{ref:ChenSh90}, we compute a sequence tree in $O(n^2)$ time and space that stores shortest path edge sequences from $p_R$ to all $O(n^2)$ representative points $q_C$, so that for any $q_C$, the edge sequence of $\pi(p_R,q_C)$ can be output in time linear in the number of edges in the edge sequence. As there are $O(n^4)$ ridge-free regions, the total time and space for doing this for all ridge-free regions is $O(n^6)$. 

Given a query $(s,t)$ with $s\in R$, $t\in \calP$, and $t^*\not\in \calK_R$, suppose that $C$ is the connected component of $f\setminus \Phi_R$ that contains $t$, where $f$ is the polyhedron face containing $t$. Then, the edge sequence of $\pi(s,t)$ is the same as the edge sequence of $\pi(p_R,q_C)$, which can be found in linear time using the edge sequence tree for $R$. 
Consequently, $\pi(s,t)$ can be obtained in $O(|\pi(s,t)|)$ time by unfolding the faces along the edge sequence. 

\paragraph{The inside-kernel case $\boldsymbol{t^*\in \calK_R}$.}
In what follows, we discuss how to report $\pi(s,t)$ in the insider-kernel case. We will resort to pasting trees as defined below. 

For a ridge-free region $R\in \calR$, let $\Upsilon_R$ denote the dual graph of $\bigstar_R$, i.e., its nodes are the plates of $\bigstar_R$ and two nodes have an arc if their corresponding plates share an edge of $\bigstar_R$ (which is the image of a segment of a polyhedron edge); it is known that $\Upsilon_R$ is a tree, called the {\em pasting tree}~\cite{ref:AgarwalSt97}. The pasting tree has $n$ leaves corresponding to the $n$ triangular plates incident to the $n$ source images. Since $\bigstar_R$ has $O(n^2)$ edges and nodes, $\Upsilon_R$ also has $O(n^2)$ arcs and nodes. For any $s\in R$ and $t\in \calP$, $\pi(s,t)$ corresponds to a simple path in $\Upsilon_R$. More specifically, let $t^*$ be the image of $t$ in the star unfolding $\bigstar_s$. We assume $t^*\in \calK_R$ since we are considering the inside-kernel case. The kernel $\calK_R$ is subdivided into cells by the images of the polyhedron edges and this subdivision is independent of $s\in R$. Each cell of the subdivision of $\calK_R$ is contained in a plate of $\bigstar_R$ that thus corresponds to a node of $\Upsilon_R$. If the image of $\pi(s,t)$ is the line segment $\overline{s_it^*}$ in $\bigstar_s$, where $s_i$ is a source image, then the edge sequence of $\pi(s,t)$ corresponding to the arcs of the path in $\Upsilon_R$ from $v_s$ to $v_t$, where $v_s$ is the leaf of $\Upsilon_R$ corresponding to the plate incident to $s_i$ and $v_t$ is the node of $\Upsilon_R$ corresponding to the cell of the subdivision of $\calK_R$ containing $t^*$. Therefore, once we know the path from $v_s$ to $v_t$ in $\Upsilon_R$, $\pi(s,t)$ can be output in $O(|\pi(s,t)|)$ time. Hence, given a query $(s,t)$ with $s\in R$ in the inside-kernel case, to report $\pi(s,t)$, we need to know the following information: (1) the source image $s_i$ that gives the geodesic distance $d(s,t)$, i.e., $\overline{s_it^*}$ is the image of $\pi(s,t)$ in $\bigstar_s$; (2) the cell of the subdivision of $\calK_R$ that contains $t^*$. In addition, we need to have a data structure for $\Upsilon_R$ so that given a leaf $v_s$ and a node $v_t$, the path in $\Upsilon_R$ from $v_s$ to $v_t$ can be output in time linear in the number of arcs of the path. To this end, we perform the following additional work in the preprocessing.

First of all, notice that our above query algorithm for computing the geodesic distance $d(s,t)$ also determines the source image $s_i$ of $\bigstar_s$ that gives the length $d(s,t)$. Indeed, the lower envelope data structure query algorithm also returns the function that gives the value $d(s,t)$ and the source image that defines the function is $s_i$. 

For each ridge-free region $R\in \calR$, after computing the kernel $\calK_R$, we construct a point location data structure on the subdivision of $\calK_R$. Since the subdivision has $O(n^2)$ cells, this takes $O(n^2)$ time and space~\cite{ref:EdelsbrunnerOp86,ref:KirkpatrickOp83}. Next, we compute the pasting tree $\Upsilon_R$ in $O(n^2)$ time. We construct a {\em path query} data structure to answer the following queries:
Given two nodes of $\Upsilon_R$, return the path in $\Upsilon_R$ connecting the two nodes. Since $\Upsilon_R$ has $O(n^2)$ nodes, it is
possible to build a data structure in $O(n^2)$ time and space such that each path query can
be answered in linear time in the number of arcs of the path. For example, one can
build a lowest common ancestor data structure in $O(n^2)$ time for $\Upsilon_R$ so that the lowest
common ancestor of two query nodes can be found in $O(1)$ time~\cite{ref:HarelFa84,ref:BenderTh00}. Using the lowest
common ancestor, it is easy to find the path connecting the two query nodes in
linear time in the number of edges of the path. In this way, we spend $O(n^2)$ additional preprocessing time and space for each ridge-free region $R\in \calR$. The total additional preprocessing time and space is thus $O(n^6)$.

For each query $(s,t)$ in the inside-kernel case, suppose that $R\in \calR$ is the ridge-free region
containing $s$ and $s_i$ is the source image giving the geodesic distance
$d(s,t)$. Using a point location query, we can find the cell $\sigma$ of the subdivision of $\calK_R$
that contains $t^*$ (recall that $t^*$ can be obtained after locating the connected component $C$ of $f\setminus \Phi_R$ containing $t$, where $f$ is the polyhedron face containing $t$), and thus the node of $\Upsilon$ corresponding to $\sigma$ is also determined. In this way, the two nodes $v_s$ and $v_t$ of $\Upsilon_R$ are obtained. Then, using the path query data structure for $\Upsilon_R$, $\pi(s,t)$ can be finally reported in $O(|\pi(s,t)|)$ time. 

\medskip
\medskip

The following theorem summarizes our result in this section. 

\begin{theorem}
Given a convex polyhedral surface $\calP$ of $n$ vertices, we can construct a data structure in $O(n^{6+\epsilon})$ preprocessing time and space, such that for any two query points $s$ and $t$ on $\calP$, the geodesic distance $d(s,t)$ can be computed in $O(\log n)$ time and $\pi(s,t)$ can be output in additional $O(|\pi(s,t)|)$ time. 
\end{theorem}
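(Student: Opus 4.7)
The plan is to assemble the machinery developed in Sections~\ref{sec:review}, \ref{sec:improved}, and the path-reporting discussion into one coherent pipeline and sum up its costs. First I would perform the shared global preprocessing: compute the ridge trees $T_v$ of every polyhedron vertex, the collection $\calR$ of $O(n^4)$ ridge-free regions, and a point-location structure on the ridge-free regions inside each polyhedron face; all of this fits in $O(n^4)$ time and space. Then I would install the outside-kernel machinery reviewed in Section~\ref{sec:review}: for each $R\in \calR$ a kernel $\calK_R$, the preimage $\Phi_R$, labels and coordinate transforms for each component $C$ of $f\setminus \Phi_R$, plus a per-face point-location structure on $f\setminus \Phi_R$. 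Summing $O(n^2)$ per region over $O(n^4)$ regions gives an $O(n^6)$ contribution and $O(\log n)$ query time in this case.

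Next I would handle the inside-kernel case using the hierarchical cutting construction of Section~\ref{sec:improved}. For each polyhedron face $f$ with $n_f\leq n$, I apply the subproblem data structure to the entire face at cost $O(n^{4+3\epsilon})$; for $n_f>n$, I build a $(1/r)$-cutting with $r=n_f/n$ so that each cell $\sigma$ has at most $n$ segments of $E_f$ through its interior, and install the subproblem data structure $\calD_\sigma$ on $\sigma$. The calculation already carried out in Section~\ref{sec:improved} shows that the total number of cells across all cuttings is $O(n^{2+\epsilon})$, so the total inside-kernel preprocessing is $O(n^{6+4\epsilon})$, which I rewrite as $O(n^{6+\epsilon})$ after rescaling $\epsilon$. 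A query $(s,t)$ is answered by locating the face $f$ and then the enclosing cell $\sigma$, producing $t^\ast$ in the fixed unfolding plane $\Pi_B$, and taking the minimum over the $O(1)$ lower-envelope queries $\lowenv_R$ and $\lowenv_{\sigma'}$ on the ancestor chain of $\sigma$; by Lemma~\ref{lem:multicut} these together cover every source image of $\bigstar_s$, yielding $d(s,t)$ in $O(\log n)$ time.

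To output $\pi(s,t)$, I would augment each ridge-free region $R$ with the reporting data structures of Section~\ref{sec:twopoint}: a Chen--Han sequence tree rooted at $p_R$ that stores edge sequences to every representative point $q_C$ (for the outside-kernel case), and, for the inside-kernel case, the pasting tree $\Upsilon_R$, a point-location structure on the subdivision of $\calK_R$, and a lowest-common-ancestor/path-reporting structure on $\Upsilon_R$. Each of these costs $O(n^2)$ per region, for a total of $O(n^6)$. The lower-envelope query already reports which source image $s_i$ attains $d(s,t)$, which identifies the leaf $v_s$ of $\Upsilon_R$; point-locating $t^\ast$ in the subdivision of $\calK_R$ identifies $v_t$; the path query in $\Upsilon_R$ then reads off $\pi(s,t)$ in $O(|\pi(s,t)|)$ time. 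Non-generic query points are handled by continuity, invoking the data structure of any incident ridge-free region.

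The main obstacle is simply bookkeeping: one must confirm that combining the multi-level cutting hierarchy with the outside-kernel and path-reporting structures does not exceed $O(n^{6+\epsilon})$. The dominant term is the inside-kernel preprocessing, and the crucial point is that by Lemma~\ref{lem:multicut} the stable source images at each level split into disjoint slices $S'_\sigma$ of size $O(n^{1-i\epsilon})$, so each $\lowenv_\sigma$ is built on $O(n^{1-i\epsilon})$ functions rather than $\Theta(n)$. This localization is exactly what prevents the naive $O(n^{4+\epsilon})$-per-cell cost from blowing up when summed over $O(n^{2+\epsilon})$ cells, and once it is in hand the theorem follows by adding the per-case bounds verified above.
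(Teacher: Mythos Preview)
Your proposal is correct and follows essentially the same approach as the paper: global ridge-tree/ridge-free-region preprocessing, the AAOS outside-kernel machinery at $O(n^6)$, the hierarchical-cutting subproblem of Section~\ref{sec:improved} inside each cell of an outer $(1/r)$-cutting (with $O(n^{2+\epsilon})$ cells total) for the inside-kernel case, and the pasting-tree/LCA augmentation for path reporting. One minor expository quibble: in your final paragraph, the phrase ``prevents the naive $O(n^{4+\epsilon})$-per-cell cost from blowing up when summed over $O(n^{2+\epsilon})$ cells'' is slightly garbled---$O(n^{4+\epsilon})$ per outer cell \emph{is} the target, and the hierarchy is what brings the per-$B$ cost down from the naive $O(n^{6+\epsilon})$ (one $\lowenv$ on $n$ images for each of the $O(n^2)$ ridge-free regions inside $B$) to $O(n^{4+\epsilon})$---but the underlying reasoning you invoke via Lemma~\ref{lem:multicut} is exactly right.
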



\section{Two-point shortest path queries -- the edge-face case} 
\label{sec:edge}

In this section, we consider the edge-face case of the two-point shortest
path query problem. In this case, one of the query points $s$ and $t$ is
required to be on an edge of $\calP$. Without loss of generality, we assume that
$s$ is on an edge. 
We follow the idea in Section~\ref{sec:twopoint} to use hierarchical cuttings, but 
the method is much easier here since we only need cuttings in the 1D space. 

Recall from Section~\ref{sec:pre} that the ridge trees of all polyhedron vertices partition the edges of $\calP$ into $O(n^3)$
edgelets and can be computed in $O(n^3\log n)$ time~\cite{ref:AgarwalSt97}. For each edgelet
$\eta$, if $s$ changes in $\eta$, its star unfolding $\bigstar_s$ does not
change topologically, so we use $\bigstar_{\eta}$ to denote it and use $\calK_{\eta}$ to denote its kernel, which is fixed for all $s\in \eta$~\cite{ref:AgarwalSt97}. 

Consider a query $(s,t)$ with $s\in \eta\subseteq e$ for an edgelet $\eta$ on a polyhedron edge $e$. Let $t^*$ be the image of $t$ in the star unfolding $\bigstar_s$. As in the general case, depending on whether $t^*$ is in the kernel $\calK_s$ of $\bigstar_s$, there are two cases. Using the same approach as in Section~\ref{sec:review}, we can compute $t^*$ and determine whether $t^*$ is in $\calK_{s}$ in $O(\log n)$ time after performing $O(n^2)$ time preprocessing for $\eta$. 
If $t^*\not\in \calK_s$, then we can use the same method as before to answer each query in $O(\log n)$ time after $O(n^2)$ time preprocessing for $\eta$. Since there are $O(n^3)$ edgelets in total, the total preprocessing time and space for this case is $O(n^5)$. In the following, we discuss the inside-kernel case $t^*\in \calK_s$. 

We follow an approach similar to that in Section~\ref{sec:twopoint}. But since $s$ is now required to be on the polyhedron edges, instead of constructing cuttings in 2D, we construct cuttings on the edges of $\calP$, which are simply partitions of the edges into segments. Specifically, for each polyhedron edge $e$, we partition $e$ into segments such that each segment has exactly $n$ edgelets except that one segment possibly has less than $n$ edgelets. In this way, the edges of $\calP$ are partitioned into $O(n^{2})$ segments such that each segment has no more than $n$ edgelets. In the following, for each such segment $\beta$, we build a data structure in $O(n^{3+\epsilon})$ preprocessing time and space, such that each query $(s,t)$ with $s\in e$ can be answered in $O(\log n)$ time.

\paragraph{Solving a subproblem for $\boldsymbol{\beta}$.} 
For convenience, we assume that $\beta$ has exactly $n$ edgelets. 
Following the method in Section~\ref{sec:improved}, we will construct a hierarchical sequence of cuttings $\Xi_1,\Xi_2,\cdots,\Xi_k$ with $k=O(1)$ for $\beta$ as follows. 

Since $\beta$ has $n$ edgelets, we partition $\beta$ into $O(n^{\epsilon})$ subsegments each containing at most $n^{1-\epsilon}$ edgelets. These subsegments form our first cutting $\Xi_1$. For each subsegment $\sigma\in \Xi_1$ (also called a {\em cell} of $\Xi_1$), as before, we define {\em stable source images} of $\bigstar_s$ with respect to $s\in \sigma$, and 
let $S_{\sigma}$ denote the set of stable source images. 
Since $\sigma$ has at most $n^{1-\epsilon}$ edgeles, $S_{\sigma}$ can be computed in $O(n^{3-\epsilon})$ time. Note that $|S_{\sigma}|\leq n$.
We further construct a lower envelope data structure $\lowenv_{\sigma}$ for the source images of $S_{\sigma}$ for $s\in \sigma$ and $t^*\in \Pi_\beta$, where $\Pi_\beta$ is the unfolding plane of $\calK_{\eta}$ for an arbitrary edgelet $\eta$ on $\beta$ (i.e., the distance each source image of $S_{\sigma}$ and $t^*$ in $\Pi$ defines a $3$-variate algebraic function of constant degree). 
By a linearization method to reduce the problem to ray-shooting queries on the lower envelope of $O(n)$ 6-variate hyperplanes, the lower envelope data structure can be constructed 
in $O(n^{3+\epsilon})$ space and preprocessing time such that each query can be answered in $O(\log n)$ time (i.e., given $(s,t)$ with $s\in \sigma$ and $t\in \Pi_\beta$, the source image of $S_{\sigma}$ closest to $t^*$ can be found in $O(\log n)$ time)~\cite{ref:AgarwalSt97,ref:MatousekRay93}. 
Computing $S_{\sigma}$ and $\lowenv_\sigma$ for all $\sigma\in \Xi_1$ takes $O(n^{3+2\epsilon})$ time and space in total since $\Xi_1$ has $O(n^{\epsilon})$ cells.

Suppose that we already have the cutting $\Xi_i$ that has $O(n^{i\epsilon})$ subsegments each containing 
$n^{1-i\epsilon}$ edgelets. For each cell $\sigma\in \Xi_i$, we partition it into $O(n^{\epsilon})$ subsegments each containing $n^{1-(i+1)\epsilon}$ edgelets. Let $\Xi_{i+1}$ be the collection of subsegments of all cuttings $\Xi_{\sigma}$ for all $\sigma\in \Xi_i$.
Constructing $\Xi_{\sigma}$ can be easily done in $O(n^{1-i\epsilon}\log n)$ time (e.g., by first sorting the edgelet endpoints). 
Hence, computing $\Xi_{i+1}$ takes $O(n\log n)$ time since $\Xi_i$ has $O(n^{i\epsilon})$ cells. Note that $\Xi_{i+1}$ has $O(n^{(i+1)\epsilon})$ cells. Furthermore, each cell $\sigma$ of $\Xi_{i+1}$ is completely contained in a single cell $\sigma'$ of $\Xi_i$, and we call $\sigma'$ the {\em parent cell} of $\sigma$. 
With respect to $\sigma$, we define {\em stable sources images} of the star unfolding $\bigstar_s$ when $s$ moves in $\sigma$, and let $S_\sigma$ denote the set of stable source images. In addition, define $S'_{\sigma}=S_{\sigma}\setminus S_{\sigma'}$, where $\sigma'$ is the parent cell of $\sigma$. In particular, we let $S'_{\sigma}=S_{\sigma}$ for all cells $\sigma\in \Xi_1$. 
Lemma~\ref{lem:multicut} is still applicable here. 
We construct a lower envelope data structure $\lowenv_{\sigma}$ for the source images of $S'_{\sigma}$ with respect to $s\in \sigma$ and $t^*\in \Pi_\beta$, which takes $O(n^{3(1-i\epsilon)+\epsilon})$ time and space as $|S'_{\sigma}|=O(n^{1-i\epsilon})$ by Lemma~\ref{lem:multicut}(2). Doing this for all cells of $\Xi_{i+1}$ takes $O(n^{3-2i\epsilon+2\epsilon})$ time and space, since $\Xi_{i+1}$ has $O(n^{(i+1)\epsilon})$ cells. This finishes the preprocessing for $\Xi_{i+1}$. 

We continue this until we have computed a cutting $\Xi_k$ such that $n^{1-k\epsilon}\leq n^{\epsilon}$. For simplicity, we choose $\epsilon$ so that $(k+1)\epsilon=1$, making $k=O(1)$. Note that $\Xi_k$ has $O(n^{k\epsilon})$ cells each containing at most $n^{\epsilon}$ edgelets. The total preprocessing time and space for all cuttings $\Xi_1,\Xi_2,\ldots,\Xi_k$ is on the order of $\sum_{i=0}^{k-1}n^{3-2i\epsilon+2\epsilon}$, which is $O(n^{3+2\epsilon})$. 

Next, we perform the following additional preprocessing for each cell $\sigma$ in the last cutting $\Xi_k$. 
For each edgelet $\eta$ in $\sigma$, consider its star unfolding $\bigstar_{\eta}$. As $\eta\subseteq \sigma$, 
by definition, each point of $S_\sigma$ is also a source image in $\bigstar_\eta$. Let $S_{\eta}$ denote the set of source images in $\bigstar_{\eta}$ and let $S_\eta'=S_\eta\setminus S_\sigma$. By Lemma~\ref{lem:multicut}, $S_\eta$ is the union of $S_\eta'$ and $S'_{\sigma'}$ for all ancestor cells $\sigma'\in \Xi_i$ of $\sigma$ for all $1\leq i\leq k$.
Since $\sigma$ has at most $n^{\epsilon}$ subsegments, we have $|S_\eta'|=O(n^{\epsilon})$. We build a lower envelope data structure $\lowenv_\eta$ for the source images of $S_\eta'$ with respect to $s\in \eta$ and $t^*\in\Pi_\beta$. This takes $O(|S_\eta'|^{3+\epsilon})$ preprocessing time and space, which is $O(n^{3\epsilon+\epsilon^2})$ as $|S_\eta'|=O(n^{\epsilon})$. 
Given any pair of points $(s,t^*)$ with $s\in \eta$ and $t^*\in \Pi_\beta$, the point of $S_\eta$ closest to $t^*$ is the point of $S_\eta'$ closest to $t^*$ or the point of $S'_{\sigma'}$ closest to $t^*$ for some ancestor cell $\sigma'\in \Xi_i$ of $\sigma$, $1\leq i\leq k$. As $k=O(1)$, the point of $S_\eta$ closest to $t^*$ can be computed in $O(\log n)$ time using the lower envelope data structures $\lowenv_\eta$ and $\lowenv_{\sigma'}$ for all ancestor cells $\sigma'\in \Xi_i$ of $\sigma$ for all $1\leq i\leq k$. 
As $\Xi_k$ has $O(n^{k\epsilon})$ cells each containing at most $n^{\epsilon}$ edgelets, preprocessing all cells $\sigma\in \Xi_k$ as above takes $O(n^{k\epsilon+\epsilon +3\epsilon+\epsilon^2})$ time and space, which is $O(n^{1+3\epsilon+\epsilon^2})$. 

For convenience, we use $\Xi_{k+1}$ to denote the collection of all edgelets on $\beta$. Hence, $|\Xi_{k+1}|\leq n$. We sort the endpoints of all edgelets of $\Xi_{k+1}$ on $\beta$ to facilitate binary search. For each edgelet $\eta\in \Xi_{k+1}$, we also associate with it the cell of $\sigma$ of $\Xi_k$ that contains it. This finishes the preprocessing for $\beta$, which takes $O(n^{3+2\epsilon})$ time and space in total. 

For a query $(s,t)$ with $s\in \beta$ in the inside-kernel case, by binary search we find the edgelet $\eta$ of $\Xi_{k+1}$ that contains $s$. Then, as discussed above, using the lower envelope data structures, we compute the source image of $s_i\in S_{\eta}$ closest to $t^*$ in $O(\log n)$ time (recall that $t^*$ can be computed in $O(\log n)$ time), and return their distance as $d(s,t)$. The total query time is $O(\log n)$.  

\paragraph{Solving the original problem.} 
As the polyhedron edges are partitioned into $O(n^{2})$ segments $\beta$, the total preprocessing time and space is $O(n^{5+2\epsilon})$, which can be written as $O(n^{5+\epsilon})$ for a larger $\epsilon$. For each query $(s,t)$, it takes $O(\log n)$ time to compute the geodesic distance $d(s,t)$. To output the shortest path $\pi(s,t)$, in the same way as in Section~\ref{sec:twopoint} for the general case, we can perform $O(n^2)$ time additional preprocessing work for each edgelet, with a total of $O(n^5)$ additional preprocessing time and space, 
so that $\pi(s,t)$ can be reported in $O(\log n+ |\pi(s,t)|)$ time.


The following theorem summarizes our result. 

\begin{theorem}\label{theo:edgeface}
Given a convex polyhedral surface $\calP$ of $n$ vertices, we can construct a data structure in $O(n^{5+\epsilon})$ preprocessing time and space, such that for any two query points $s$ and $t$ on $\calP$ with $s$ on an edge of $\calP$, the geodesic distance $d(s,t)$ can be computed in $O(\log n)$ time and a shortest path $\pi(s,t)$ can be output in additional $O(|\pi(s,t)|)$ time. 
\end{theorem}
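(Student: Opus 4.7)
The plan is to handle the two sub-cases separately: the outside-kernel case (where the image $t^*$ of $t$ in $\bigstar_s$ lies outside $\calK_s$) and the inside-kernel case. For the outside-kernel case, I will reuse the AAOS-style preprocessing at the level of edgelets: since $\calP$ has $O(n^3)$ edgelets and each edgelet needs only $O(n^2)$ time for the representative-point/coordinate-transformation structure, the total cost is $O(n^5)$, already within our target. This also stores enough information (via a Chen-Han sequence tree per edgelet) to output $\pi(s,t)$ in additional $O(|\pi(s,t)|)$ time.

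The main work is the inside-kernel case. The plan is to partition the edges of $\calP$ into $O(n^2)$ maximal segments $\beta$, each containing at most $n$ edgelets. For each such $\beta$, I build a hierarchy $\Xi_1,\Xi_2,\ldots,\Xi_k$ of one-dimensional cuttings of constant depth $k$, where $\Xi_i$ partitions $\beta$ into $O(n^{i\epsilon})$ cells each containing $O(n^{1-i\epsilon})$ edgelets; iterating until $n^{1-k\epsilon}\leq n^{\epsilon}$ gives $k=O(1)$. For each cell $\sigma$ of every level, I define stable source images $S_\sigma$ and the incremental set $S'_\sigma = S_\sigma\setminus S_{\sigma'}$ where $\sigma'$ is the parent cell, and build a lower-envelope data structure $\lowenv_\sigma$ over $S'_\sigma$ for queries $(s,t^*)$ with $s\in\sigma$ and $t^*\in\Pi_\beta$ (the fixed unfolding plane for $\beta$). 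Lemma~\ref{lem:multicut} applies verbatim here, guaranteeing $|S'_\sigma|=O(n^{1-i\epsilon})$ at level $i$ and that the stable source images of any leaf edgelet decompose as a disjoint union over the $O(1)$ ancestor cells. Using the $O(n^{3+\epsilon})$-time/space construction for $n$ functions (via linearization to ray-shooting on hyperplanes in constant dimension, as in AAOS), the preprocessing for level $i$ is $O(n^{(i+1)\epsilon}\cdot n^{(1-i\epsilon)(3+\epsilon)})=O(n^{3-2i\epsilon+2\epsilon})$, and summing the geometric series over $i=0,\ldots,k-1$ gives $O(n^{3+2\epsilon})$ per segment $\beta$.

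At the bottom level, for each edgelet $\eta$ inside a cell $\sigma\in\Xi_k$, I build an extra lower-envelope data structure on $S'_\eta = S_\eta\setminus S_\sigma$, which has size $O(n^\epsilon)$ and so costs only $O(n^{3\epsilon+\epsilon^2})$; summing over all $O(n^{(k+1)\epsilon})=O(n)$ edgelets in $\beta$ is $O(n^{1+O(\epsilon)})$. The total preprocessing for one segment $\beta$ is therefore $O(n^{3+\epsilon})$, and summing over the $O(n^2)$ segments yields $O(n^{5+\epsilon})$ for the inside-kernel case. A query $(s,t)$ in the inside-kernel case is answered by locating the edgelet $\eta$ containing $s$ via binary search, evaluating each of the $k+1=O(1)$ relevant lower-envelope structures in $O(\log n)$ time, and returning the minimum distance.

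Combining the two cases gives the $O(n^{5+\epsilon})$ preprocessing and $O(\log n)$ distance-query bounds. For reporting the actual path, the plan is to mirror Section~\ref{sec:twopoint}: the distance query identifies which source image $s_i$ achieves $d(s,t)$, and combining this with a point-location query on the (fixed-per-edgelet) subdivision of $\calK_\eta$ together with a pasting-tree path-query structure (lowest-common-ancestor based, built per edgelet in $O(n^2)$ time) outputs the edge sequence and hence $\pi(s,t)$ in $O(|\pi(s,t)|)$ additional time; the total extra preprocessing is $O(n^5)$. The main obstacle is the delicate bookkeeping that $S_\eta$ of a leaf edgelet decomposes cleanly into disjoint incremental sets along the $O(1)$ ancestor chain, but this is exactly what Lemma~\ref{lem:multicut} provides, so once that lemma is invoked the remainder is routine accounting.
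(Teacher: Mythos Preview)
Your proposal is correct and follows essentially the same approach as the paper: the outside-kernel case is handled per edgelet via the AAOS representative-point/coordinate-transformation machinery (including a Chen--Han sequence tree for path reporting), and the inside-kernel case is handled by partitioning each polyhedron edge into $O(n^2)$ segments $\beta$ of at most $n$ edgelets each, building a constant-depth hierarchy of one-dimensional cuttings on $\beta$, and constructing lower-envelope data structures on the incremental stable-source-image sets $S'_\sigma$, with Lemma~\ref{lem:multicut} supplying both the size bound $|S'_\sigma|=O(n^{1-i\epsilon})$ and the decomposition of $S_\eta$ along the ancestor chain. Your cost accounting, query algorithm, and pasting-tree/point-location scheme for reporting $\pi(s,t)$ all match the paper's argument; the only cosmetic difference is that the paper carries the bound as $O(n^{3+2\epsilon})$ per segment before rescaling $\epsilon$, whereas you write $O(n^{3+\epsilon})$ directly.
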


\paragraph{The single-edge case.}
If a query point, say $s$, is required to be on a single edge $e$ of $\calP$, then the preprocessing complexities of Theorem~\ref{theo:edgeface} can be reduced to $O(n^{4+\epsilon})$. Indeed, since each ridge tree can intersect $e$ at most $O(n)$ times, the number of edgelets on $e$ is $O(n^2)$. Hence, $e$ can be partitioned into $O(n)$ segments $\beta$. Therefore, the total preprocessing time and space become $O(n^{4+\epsilon})$. 

\begin{theorem}\label{theo:singleedgeface}
Given a convex polyhedral surface $\calP$ of $n$ vertices and an edge $e$, we can construct a data structure in $O(n^{4+\epsilon})$ preprocessing time and space, such that for any two query points $s\in e$ and $t\in \calP$, the geodesic distance $d(s,t)$ can be computed in $O(\log n)$ time and a shortest path $\pi(s,t)$ can be output in additional $O(|\pi(s,t)|)$ time. 
\end{theorem}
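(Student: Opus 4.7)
The plan is to specialize the construction used in the proof of Theorem~\ref{theo:edgeface} by exploiting the fact that restricting $s$ to a single polyhedron edge $e$ drastically reduces the number of edgelets that must be handled. The key observation, stated already in the preliminaries, is that each ridge tree $T_v$ intersects every polyhedron edge at most $O(n)$ times. Summing over the $n$ polyhedron vertices, the edge $e$ is crossed $O(n^2)$ times, so $e$ is partitioned into at most $O(n^2)$ edgelets (rather than the $O(n^3)$ total edgelets over all polyhedron edges).

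Given this bound, I would first partition $e$ into $O(n)$ consecutive subsegments $\beta$, each containing at most $n$ edgelets, exactly as in the partition used in Section~\ref{sec:edge}. For each such $\beta$, I invoke the subproblem construction from the proof of Theorem~\ref{theo:edgeface}: build the hierarchy of $O(1)$ one-dimensional cuttings $\Xi_1,\ldots,\Xi_k$ on $\beta$, the associated stable-source-image sets $S_\sigma$ and difference sets $S'_\sigma$ with the bound from Lemma~\ref{lem:multicut}, and the corresponding lower-envelope data structures $\lowenv_\sigma$ (each built by the linearization to ray-shooting on lower envelopes of hyperplanes). This subproblem preprocessing costs $O(n^{3+\epsilon})$ time and space per segment $\beta$, and yields $O(\log n)$ query time for points $(s,t)$ with $s\in\beta$ and $t$ such that $t^*\in\calK_s$.

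Multiplying by the $O(n)$ segments $\beta$ on $e$ gives $O(n^{4+\epsilon})$ total preprocessing for the inside-kernel case. The outside-kernel case, handled via the coordinate-transformation/point-location scheme inherited from the AAOS algorithm, needs only $O(n^2)$ preprocessing per edgelet, so across the $O(n^2)$ edgelets of $e$ it contributes $O(n^4)$ total, which is absorbed. For reporting $\pi(s,t)$ itself, I would attach a Chen--Han sequence tree and a pasting-tree path-query structure to each edgelet, exactly as in Section~\ref{sec:twopoint}; this is $O(n^2)$ per edgelet and $O(n^4)$ in total, again absorbed. A query is answered by binary search on the endpoints of the subsegments $\beta$ along $e$ to locate the segment containing $s$ in $O(\log n)$ time, followed by the subproblem query; reporting $\pi(s,t)$ then takes additional $O(|\pi(s,t)|)$ time.

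There is no genuinely new difficulty here: once the $O(n^2)$ edgelet bound on a single edge is observed, the entire machinery of Section~\ref{sec:edge} transfers verbatim, and the only thing that changes is the outer multiplier ($O(n)$ segments instead of $O(n^2)$). The only point that warrants care is verifying that Lemma~\ref{lem:multicut} and the dimension of the linearization used for the lower-envelope data structure are unchanged when $s$ is restricted to a single edge, which is immediate because both depend only on the one-dimensional nature of the $s$-parameter space and on the algebraic form of the distance functions, which are inherited unchanged from the edge-face setting.
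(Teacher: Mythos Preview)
Your proposal is correct and follows precisely the same approach as the paper: observe that a single edge $e$ carries only $O(n^2)$ edgelets (since each of the $n$ ridge trees meets $e$ at most $O(n)$ times), partition $e$ into $O(n)$ segments $\beta$ of at most $n$ edgelets each, and invoke the $O(n^{3+\epsilon})$ subproblem construction from Section~\ref{sec:edge} verbatim. The paper's own argument is just the brief paragraph preceding Theorem~\ref{theo:singleedgeface}, and your write-up is in fact more detailed than the paper's.
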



\subsection{A grouping method}
\label{sec:group}

We present another method, called the {\em grouping method}, which can achieve $O(\log^2 n)$ query time with the same preprocessing as Theorem~\ref{theo:edgeface} (the preprocessing time becomes expected). Although the query time is worse, the method has the virtue of simplicity. In addition, the same method will also be used later in Section~\ref{sec:seq}. 

As in Section~\ref{sec:newsol}, our grouping method is motivated by a simple observation: If $\eta$ and $\eta'$ are two adjacent edgelets on the same edge of $\calP$, then their star unfoldings differ by at most four source images~\cite{ref:CookSh12}. 
As discussed in the above cutting approach, each source image $s_i$ in the star unfolding $\bigstar_{\eta}$ of $\eta$ defines a $3$-variate algebraic function $f_i(s,t^*)$ of constant degree for $s\in \eta$ and $t^*$ in the unfolding plane of $\calK_{\eta}$. Let $F(\eta)$ denote the set of functions $f_i$ defined by all $n$ source images. If a source image $s_i$ is still in the star unfolding of $\eta'$, then $s_i$ still defines the same function $f_i$ but with the domain of $s$ changed to $\eta'$. Since $\eta$ and $\eta'$ lie in the same line, the two functions are essentially one function with the domain of $s$ set to $\eta\cup \eta'$ (note that we can have $\calK_{\eta}$ and $\calK_{\eta'}$ lie in the same plane~\cite{ref:CookSh12} so that the domain of $t^*$ is always in that plane). In this way, $F(\eta)\cup F(\eta')$ only have at most $n+4$ functions. 

We propose the following {\em grouping method}. We group every adjacent $n$ edgelets on each edge $e$ of $\calP$ with the exception that the last group may have less than $n$ edgelets. Following the above discussion, the total number of functions defined by all source images in the star unfoldings of all $n$ edgelets in the same group is $O(n)$. Then, we construct a vertical ray-shooting  query data structure for the lower envelope $\lowenv_G$ of all functions defined by all edgelets in each group $G$. Note that this time we cannot reduce the problem to vertical ray-shootings on lower envelopes of hyperplanes anymore because functions may not fully be defined on the segment which is the union of the edglets in each group. Instead, since each function is a $3$-variate algebraic function of constant degree and constant size, we can use the algorithm in \cite{ref:AgarwalCo97} to compute a data structure of $O(n^{3+\epsilon})$ space in $O(n^{3+\epsilon})$ randomized expected time so that each vertical ray-shooting query on $\lowenv_G$ can be answered in $O(\log^2 n)$ time. 

For each query $(s,t)$ in the inside-kernel case with $s$ in the union of the edgelets of $G$, computing $d(s,t)$ can be reduced to a vertical ray-shooting query on $\lowenv_G$, which can then be answered in $O(\log^2 n)$ time. 

As there are $O(n^3)$ edgelets in all edges of $\calP$, there are a total of $O(n^2)$ groups. The total preprocessing time and space is thus bounded by $O(n^{5+\epsilon})$. For each query $(s,t)$, we first determine which group contains $s$ by binary search, and then apply the query algorithm for the group. Therefore, each query can be answered in $O(\log^2 n)$ time. Note that this is for the inside-kernel case queries. For the outside-kernel case queries, we already discussed in the beginning of Section~\ref{sec:edge} that with $O(n^{5})$ time preprocessing, each query can be answered in $O(\log n)$ time. 

\section{Shortest path edge sequences}
\label{sec:seq}

In this section, we present our $O(n^{5+\epsilon})$ time algorithm to compute the exact set of shortest path edge sequences on $\calP$. Instead of using star unfolding as in \cite{ref:AgarwalSt97} or using kinetic Voronoi diagrams as in~\cite{ref:GuibasVo91}, we propose a new direction to tackle the problem.  We begin with an overview of our approach. 

\subsection{Overview}
Let $s$ be a point on an edge $e$ of $\calP$. Define $\Sigma_s$ as the set of edge sequences of $\pi(s,t)$ for all points $t\in \calP$. Our first observation is that the set $\Sigma_s$ is uniquely determined by the ridge tree $T_s$ of $s$. Recall that a vertex $v$ of $T_s$ belongs to the following three types: (1) $v$ is a leaf, which is of degree one and is a vertex of $\calP$; (2) $v$ is on a polyhedron edge and is of degree two; (3) $v$ is a high-degree vertex (i.e., of degree at least $3$), which has at least three shortest paths from $s$. Due to our general position assumption, each vertex of $T_s$ has degree at most $4$, and has degree at most $3$ if the vertex is on a polyhedron edge. 

Define $\Sigma(e)$ to be the set of edge sequences of $\pi(s,t)$ for all points $s\in e$ and $t\in \calP$. Our goal is to compute $\Sigma(e)$ for all polyhedron edges $e$. We will describe our algorithm to compute $\Sigma(e)$ for a fixed polyhedron edge $e$ and the algorithm runs in $O(n^{4+\epsilon})$ time. 

When $s$ moves on $e$, $T_s$ also changes. However, $\Sigma_s$ will not change as long as $T_s$ does not change topologically. We say that $T_s$ {\em changes topologically} if one of the following two types of {\em events} happen: (1) two high-degree vertices of $T_s$ merge into one; (2) a high-degree vertex crosses an edge of $\calP$. Due to our general position assumption, the high-degree vertices in both types of events are degree-3 vertices. 

To compute $\Sigma(e)$, we will first give an algorithm to dynamically maintain $T_s$ as $s$ moves from one end of $e$ to the other. The algorithm runs in  $O((k_1+k_2)\cdot \log n)$ time, where $k_1$ and $k_2$ are the numbers of first type and second type events, respectively. Then, we will prove that both $k_1$ and $k_2$ are bounded by $O(n^{4+\epsilon})$. This will result in an $O(n^{4+\epsilon})$ time algorithm to compute $\Sigma(e)$ (note that the above $\log n$ factor is absorbed by $n^{\epsilon}$). 

Proving $k_1=O(n^{4+\epsilon})$ follows almost directly from the analysis in the grouping method in Section~\ref{sec:group}. However, 
establishing the bound $k_2=O(n^{4+\epsilon})$ proves to be the most significant challenge of this paper. 
We propose the following approach. For each edge $e'\in \calP$ with $e'\neq e$, we wish to bound the number of the second type events on $e'$ (i.e., a degree-3 vertex of $T_s$ crosses $e'$) by $O(n^{3+\epsilon})$, which would lead to $k_2=O(n^{4+\epsilon})$. To this end, we had tried to use the traditional methods such as the source unfolding and the star unfolding, but the best bound we were able to obtain is $O(n^{4+\epsilon})$. We instead propose a new unfolding method. Roughly speaking, we unfold all faces intersecting all possible shortest paths $\pi(s,t)$ with $s\in e$ and $t\in e'$ to a plane containing $e'$. The new unfolding is inspired by the ``one angle one split'' property of \cite{ref:ChenSh90}. Most importantly, like the star unfolding, we discover that there is no ``short-cut'' in the unfolding, i.e., for any image $s'$ of $s\in e$ and any point $t\in e'$, $\Vert s't\Vert\geq d(s,t)$ always holds. Using this property, we prove that the number of second type events on $e'$ is $O(n^{3+\epsilon})$. 

In the rest of this section, we present our algorithm for maintaining $T_s$ in Section~\ref{sec:maintain}. We prove the bounds for $k_1$ and $k_2$ in Sections~\ref{sec:boundk1} and \ref{sec:boundk2}, respectively. 

\subsection{Algorithm for maintaining $T_s$}
\label{sec:maintain}

We now describe our algorithm to dynamically maintain $T_s$ when $s$ moves on $e$ from one end to the other. 

We say that a point $p\in e$ is an {\em event point} if an event happens on $T_s$ when $s$ is at $p$. Correspondingly, there are two types of event points on $e$ corresponding to the two types of events defined above. To maintain $T_s$, we need to compute the event points on $e$. Initially, we compute $T_s$ in $O(n^2)$ time when $s$ is at one end of $e$~\cite{ref:ChenSh90}. 
As $s$ moves on $e$, we will maintain certain information. Note that the movement of each vertex of $T_s$ is an algebraic function of the position of $s\in e$. The function is of constant degree. In other words, as $s$ moves on $e$, the location of each vertex $v$ of $T_s$ in the plane of the face that contains $v$ defines an algebraic curve of constant degree. At each event point, new shortest path edge sequences of $\Sigma(e)$ may also be generated and we need to find them. 

In what follows, we first discuss how to compute the event points on $e$, then explain how to generate shortest path edge sequences of $\Sigma(e)$ using the ridge tree $T_s$, and finally describe the main loop of the algorithm. 

\subsubsection{Computing event points}
To compute the first type event points, we main the following information. We assume that $s$ is not an event point. By our general position assumption, every high-degree vertex of $T_s$ is a degree-3 vertex. For each edge $\xi$ of $T_s$  whose both endpoints are degree-3 vertices, we calculate the next position $p$ of $s$ on $e$ (if any) where $\xi$ will contract to a single point, and we consider $p$ as a {\em candidate} event point and add it to an event point queue $H_s$. Note that such a position $p$ (if it exists) can be calculated in $O(1)$ time. Indeed, let $v_1$ and $v_2$ be two endpoints of $\xi$, and let $f$ be the face of $\calP$ where $\xi$ lies. As the location of $v_1$ (resp., $v_2$) lies on a curve in the plane of $f$ when $s$ moves on $e$ and the curve is of constant size and constant degree, the two curves defined by $v_1$ and $v_2$ intersect $O(1)$ times. Each intersection corresponds to a location on the supporting line of $e$ such that $v_1$ and $v_2$ coincide. Among these $O(1)$ locations, we choose the one closest to $s$ as the next position $p$ for $s$ when $v_1$ and $v_2$ will merge. As such, computing $p$ can be done in $O(1)$ time. 

To compute the second type event points, for each degree-3 vertex $v$ of $T_s$, if $v$ connects to a vertex $v'$ on an edge $e''$ of $\calP$, then we calculate the next position $p$ for $s$ when $v$ will merge with $v'$, i.e., $v$ will be on $e'$. Similarly to the above, this can be done in $O(1)$ time. We also consider $p$ a candidate event point and add $p$ to the event point queue $H_s$.

\subsubsection{Generating shortest path edge sequences from $\boldsymbol{T_s}$}
\label{sec:generate}
We now discuss how to produce shortest path edge sequences of $\Sigma(e)$ by using the ridge tree $T_s$. Our algorithm stores the edge sequences of $\Sigma(e)$ implicitly in a {\em sequence tree $\Psi(e)$}; a similar concept was also discussed in \cite{ref:AgarwalSt97}. Each node $v$ of $\Psi(e)$ stores a polyhedron edge, denoted by $e_v$. In particular, $e_v=e$ if $v$ is the root. The sequence of edges stored in the nodes of the path from the root to $v$ is an edge sequence in $\Sigma(e)$, and let $\sigma_v$ denote that sequence. By definition, for two nodes $u$ and $v$ of $\Psi(e)$, $\sigma_v$ is a prefix of $\sigma_u$ if $v$ is an ancestor of $u$. 
For any point $s\in e$, we let $\Psi_s$ denote the sequence tree storing the edge sequences of $\Sigma_s$. Since $\Sigma_s\subseteq \Sigma(e)$, $\Psi_s$ is a subtree of $\Psi(e)$ with the same root.

The following lemma implies that the edge sequences of $\Sigma_s$ are implicitly determined by $T_s$. 

\begin{lemma}\label{lem:seqtree}
\begin{enumerate}
    \item For each degree-3 vertex $v$ of $T_s$, for any shortest path $\pi(s,v)$, every prefix of the edge sequence of $\pi(s,v)$ is a shortest path edge sequence in $\Sigma_s$. 
    \item For every edge sequence $\sigma$ of $\Sigma_s$, it must be a prefix of the edge sequence of a shortest path $\pi(s,v)$ for some degree-3 vertex $v$ of $T_s$. 
\end{enumerate}    
\end{lemma}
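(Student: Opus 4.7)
The first part is essentially immediate: any prefix of a shortest path is itself a shortest path (else swapping in a strictly shorter subpath would contradict the optimality of $\pi(s,v)$), and the edge sequence of that prefix is by definition the edge sequence of a shortest path from $s$ to an intermediate point lying on an edge of $\calP$, hence belongs to $\Sigma_s$.

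For the second part, fix $\sigma=(e_1,\ldots,e_k)\in \Sigma_s$ witnessed by some $\pi(s,t)$, and I would work in the source unfolding of $\calP$ with respect to $s$: the planar region obtained by cutting $\calP$ along $T_s$ and unfolding, in which $s$ maps to some image $s^*$, every shortest path $\pi(s,t')$ maps to the straight segment $\overline{s^*(t')^*}$, and the boundary is exactly the collection of images of the ridges of $T_s$. Call the ``$\sigma$-fan'' the set of directions $\theta$ from $s^*$ whose ray crosses the unfolded images of $e_1,\ldots,e_k$ in order; since $\sigma$ is realized by a two-dimensional family of destinations, this fan is a nondegenerate angular interval $[\theta_1,\theta_2]$, and at each extreme the ray is tangent to the channel by passing through a polyhedron vertex $v_i$ that is an endpoint of some edge in $\sigma$. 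Under our general-position assumption on $s\in e$, each such $v_i$ is a leaf of $T_s$ and hence maps to a corner on the boundary of the source unfolding.

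For each $\theta\in[\theta_1,\theta_2]$ I would extend the ray outward from $s^*$ until it first meets the boundary at a point $w(\theta)^*$; its preimage $w(\theta)$ lies on $T_s$ and its associated shortest path $\pi(s,w(\theta))$ has edge sequence with $\sigma$ as a prefix. The map $\theta\mapsto w(\theta)$ is continuous and monotone along the boundary, with $w(\theta_i)=v_i$, so as $\theta$ sweeps $[\theta_1,\theta_2]$ the exit point traces a connected boundary arc from the leaf-corner $v_1^*$ to the leaf-corner $v_2^*$. The final step — which I expect to be the crux — is to show this traced arc must cross at least one corner that is the image of a degree-$3$ vertex. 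This follows from the structure of $T_s$: because $\calP$ has $n\ge 3$ vertices, $T_s$ has at least three leaves, so in the tree $T_s$ no two leaves are adjacent (a single ridge cannot have two leaves as endpoints); translating to the boundary, no two leaf-corners are joined by a single boundary arc, so every boundary path between distinct leaf-corners — in particular the one traced by $w$ — must contain a high-degree-vertex corner, which under general position is a degree-$3$ vertex $v$. Picking the corresponding $\theta^\star$ yields the required $v=w(\theta^\star)$.

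The principal obstacle is making the last step rigorous: I must argue carefully that the map $w$ is continuous and traverses a genuinely connected sub-arc of the boundary (rather than jumping), and that what it sweeps includes the leaf-corner at each endpoint, so that the tree-combinatorial fact about leaf-separation really forces a high-degree-vertex corner in between. Edge cases — empty $\sigma$ (fan is all of the boundary, argument is trivial), or degenerate $s$ where a polyhedron vertex fails to be a leaf (covered by the general-position assumption) — need to be handled separately but do not affect the main line.
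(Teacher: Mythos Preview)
Your Part 1 is fine and matches the paper. For Part 2 your source-unfolding sweep is a genuinely different route from the paper's, but it has a real gap at the step you flagged only partially.

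The problem is the claim that the extreme rays of the $\sigma$-fan pass through polyhedron vertices $v_1,v_2$, so that $w(\theta_i)=v_i$ is a leaf of $T_s$. In the \emph{source} unfolding $U$, the relevant image of each $e_j$ is not the full edge but only the segment $\hat e_j\subseteq U$ consisting of points whose shortest path from $s$ actually has edge sequence $(e_1,\ldots,e_j)$; the endpoints of $\hat e_j$ lie on $\partial U$ and are either polyhedron vertices \emph{or ridge points where $T_s$ crosses $e_j$}, the latter being degree-$2$ vertices of $T_s$. Since your fan is exactly the angular range of $\hat e_k$, its extremes $w(\theta_1),w(\theta_2)$ can be such degree-$2$ ridge points whenever $T_s$ cuts into the $\sigma$-channel, which it generically does. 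If instead you take the ``geometric'' fan bounded by the $\sigma$-unfolding positions of the polyhedron vertices, those positions need not coincide with the leaf images $v_i^*$ in $U$ (they lie outside $U$ whenever the $\sigma$-prefix geodesic to $v_i$ is not shortest), so again $w(\theta_i)\neq v_i$, and moreover for $\theta$ in the geometric fan but outside the shortest-path fan the path $\pi(s,w(\theta))$ does \emph{not} have $\sigma$ as a prefix. Either way your leaf-separation argument does not apply, and you have not shown the swept arc contains a degree-$3$ vertex whose shortest path has $\sigma$ as a prefix.

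The paper sidesteps this entirely: it extends $\pi(s,t)$ to $\pi(s,t')$ with $t'\in T_s$, then walks along $T_s$ from $t'$ always in the direction that \emph{lengthens} the edge sequence (into the next face, away from the last edge already crossed). Each step either hits a degree-$3$ vertex (done) or a degree-$2$ vertex on a polyhedron edge (continue into the adjacent face, appending one edge). Termination is forced because a shortest-path edge sequence cannot repeat a polyhedron edge. This argument directly certifies that the degree-$3$ vertex found has $\sigma$ as a prefix, without ever needing to identify the fan boundaries. Your sweep can be repaired, but the repair is essentially this walk: from $w(\theta_0)=t'^*$ move along $\partial U$ in the sequence-extending direction and invoke the same termination reasoning.
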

\begin{proof}
The first statement of the lemma is straightforward since $\pi(s,v)$ is a shortest path. In the following, we prove the second statement. 


Let $\overline{T_s}$ denote the union of $T_s$ and shortest paths from $s$ to all polyhedron vertices. We will use the following property of $\overline{T_s}$~\cite{ref:SharirOn86}: For each face $f\in \calP$, for each region $R_f$ of $f$ subdivided by $\overline{T_s}$, there is an edge sequence $\sigma$ such that
every point $t\in R_f$ has a shortest path $\pi(s,t)$ whose edge sequence is $\sigma$. Note that since shortest paths from $s$ to polyhedron vertices do not contain any ridge point, for each edge $\xi$ of $T_s$ that lies on $f$, $\xi$ must be a common edge of two such regions $R_f$. 

Consider a shortest path $\pi(s,t)$ for a point $t\in \calP$. Let $\sigma_t$ denote the edge sequence of $\pi(s,t)$. We can always extend $\pi(s,t)$ to another shortest path $\pi(s,t')$ such that $\pi(s,t)\subseteq \pi(s,t')$ and $t'\in T_s$~\cite{ref:SharirOn86}. As such, $\sigma_t$ is a prefix of $\sigma_{t'}$, the edge sequence of $\pi(s,t')$. Let $\xi$ be the edge of $T_s$ containing $t'$ and let $f_{\xi}$ be the face of $\calP$ where $\xi$ lies. Depending on whether $\xi$ has an endpoint that is a degree-3 vertex of $T_s$, there are two cases. 

\begin{figure}[t]
\begin{minipage}[t]{\linewidth}
\begin{center}
\includegraphics[totalheight=1.7in]{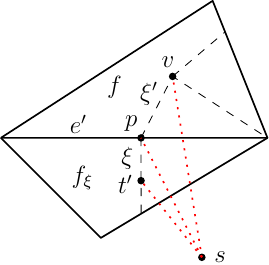}
\caption{
Illustrating the proof of Lemma~\ref{lem:seqtree}: the dashed segments are edges of $T_s$ and the red dotted segments represent shortest paths from $s$ to $t'$, $p$, and $v$, respectively.}
\label{fig:edgeseq}
\end{center}
\end{minipage}
\end{figure}

\begin{itemize}
    \item 
If an endpoint of $\xi$ is a degree-3 vertex $v$, then since $v$ and $t'$ lie on the same edge of $T_s$, they lie in the same region of $f_{\xi}$ subdivided by $\overline{T_s}$. According to the above property, $v$ has a shortest path $\pi(s,v)$ whose edge sequence is the same as $\sigma_{t'}$. In this case, we are done with proving the second lemma statement since $\sigma_t$ is a prefix of $\sigma_{t'}$. 

\item 
If neither endpoint of $\xi$ is a degree-3 vertex, then the two endpoints of $\xi$ lie on two distinct polyhedron edges of $f$, one of which must be the last edge of $\sigma_{t'}$; let $e'$ be the one of the two edges of $f$ that is not the last edge of $\sigma_{t'}$ (see Figure~\ref{fig:edgeseq}). We let $p$ be the endpoint of $\xi$ on $e'$. Then, there is a shortest path $\pi(s,p)$ whose edge sequence is $\sigma_p=\sigma_{t'} \circ \{e'\}$, i.e., concatenate $\sigma_{t'}$ and $\{e'\}$ by appending $e'$ at the end of the $\sigma_{t'}$. Now to prove the second lemma statement, it suffices to show that there exists a degree-3 vertex $v\in T_s$ that has a shortest path edge sequence containing $\sigma_p$ as a prefix. 

The edge $e'$ has two incident polyhedron faces, one of which is $f_{\xi}$, and we let $f$ be the other face. Let $\xi'$ be the edge of $T_s$  in $f$  incident to $p$. Then, $p$ is an endpoint of $\xi'$. If the other endpoint of $\xi'$ is a degree-3 vertex $v$ (see Figure~\ref{fig:edgeseq}), then following above the argument, $v$ has a shortest path from $s$ whose edge sequence is exactly $\sigma_p$ and we are done with proving the second lemma statement. Otherwise, continuing the above argument, we can find another point $p'\in \xi'$ that has a longer shortest path edge sequence $\sigma_{p'}=\sigma_{p}\circ \{e''\}$ for another edge $e''\in \calP$. This process must eventually reach a degree-3 vertex. Indeed, every iteration makes the edge sequence with one  edge longer than before. Since every shortest path edge sequence contains each polyhedron edge at most once, we must eventually reach a degree-3 vertex. This proves the second lemma statement.   
\end{itemize}

Therefore, the second lemma statement is proved. 
\end{proof}

In light of Lemma~\ref{lem:seqtree}, to compute $\Sigma_s$, it suffices to compute edge sequences of all three shortest paths $\pi(s,v)$ for all degree-3 vertices $v$ of $T_s$ (note that each degree-3 vertex $v$ has three shortest paths $\pi(s,v)$). Furthermore, as $s$ moves on $e$, as long as $T_s$ does not change combinatorially, $\Sigma_s$ does not change. Therefore, it suffices to find new shortest path edge sequences when $s$ passes an event point on $e$. To compute $\Psi(e)$, we use the following strategy. Initially when $s$ is at one end of $e$, we compute $\Psi_s$ and set $\Psi(e)=\Psi_s$. Then, we move $s$ on $e$. At each event point, if we find new shortest path edge sequences, we will update $\Psi(e)$ by adding new nodes. After $s$ reaches the other end, $\Psi(e)$ will be computed completely. In what follows, we first discuss how to construct $\Psi_s$ initially when $s$ is at an endpoint of $e$. 

\paragraph{\bf Constructing $\boldsymbol{\Psi_s}$.}
We start with computing the ridge tree $T_s$, which takes $O(n^2)$ time~\cite{ref:ChenSh00}. Recall that $T_s$ has $O(n)$ degree-3 vertices. Since there are three shortest paths from $s$ to each degree-3 vertex, there are $O(n)$ edge sequences in $\Sigma_s$. Since each shortest path edge sequence has $O(n)$ edges, the total size of all edge sequences of $\Sigma_s$ is $O(n^2)$. We can easily construct $\Psi_s$ in $O(n^2)$ time, as follows.  

Suppose we already have the shortest paths from $s$ to all degree-3 vertices of $T_s$. For each path, we add its edge sequence to $\Psi_s$. Initially, $\Psi_s$ consists of a single node storing the edge $e$. Now consider adding the edge sequence of the next path $\pi$. Suppose the edge sequence of $\pi$ is $e=e_1,e_2,\ldots,e_m$. Starting from the root of $\Psi_s$ and the first edge of the sequence, suppose we are considering a node $v\in \Psi_s$ and $e_i$. Recall that $e_v$ denote the edge stored at $v$. 
We assume that $e_v=e_i$, which is true in the beginning of the process for adding $\pi$. If $i=m$, then we are done with adding $\pi$. Otherwise, if $v$ has a child $u$ with $e_u=e_{i+1}$, then proceed with $v=u$ and $e_{i+1}$. If $v$ does not have such a node, then we create a new node $u$ storing $e_{i+1}$ and make $u$ a child of $v$. If $i+1=m$, then we are done with $\pi$; otherwise we  proceed with $v=u$ and $e_{i+1}$. Note that each node $v$ of $\Psi_s$ has at most four children because $e_v$ is incident to two polyhedron faces and each of them has two polyhedron edges other than $e_v$. Therefore, adding a path $\pi$ takes $O(n)$ time, and the total time for constructing $\Psi_s$ is $O(n^2)$. 

As discussed above, to compute $\Sigma(e)$, we initially let $\Sigma(e)=\Sigma_s$ and move $s$ on $e$. If $s$ encounters an event point and a new shortest path edge sequence is identified, we will update $\Psi(e)$ by adding new nodes. Note that we never delete nodes from $\Psi(e)$. To help update $\Psi(e)$, for each degree-3 vertex $v$ of $T_s$, for each shortest path $\pi(s,v)$, we associate $v$ with the node $u\in \Psi(e)$ such that $\sigma_u$ is the edge sequence of $\pi(s,v)$. As there three shortest paths from $s$ to $v$, $v$ is associated with three nodes of $\Psi(e)$. We will discuss how to update $\Sigma(e)$ in the  main loop of the algorithm. 

\subsubsection{Main loop of the algorithm}
\label{sec:mainloop}
The main loop of the algorithm works as follows. In each iteration, we extract the first candidate event point $s_p$ from $H_s$ and process it as follows. 
We first report $s_p$ as a ``true'' event point. Then, depending on whether $s_p$ is the first type event point or the second type, there are two cases. For each case, there are two tasks: update $H_s$ and update $\Psi(e)$. 

\paragraph{\bf The first type event points.}
In the first case, the event point $s_p$ is generated because two degree-3 vertices connecting an edge $\xi$ of $T_s$ merge into one degree-4 vertex at $s=s_p$. After $s$ passes $s_p$, two new degree-3 vertices along with a new edge connecting them will emerge in $T_s$; see Figure~\ref{fig:firsttypeevent}.

\begin{figure}[t]
\begin{minipage}[t]{\linewidth}
\begin{center}
\includegraphics[totalheight=1.3in]{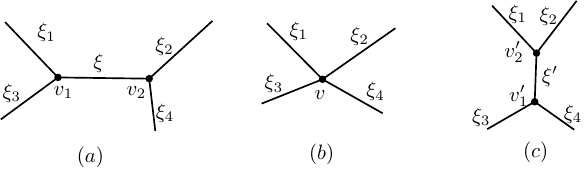}
\caption{
Illustrating the topological changes of $T_s$ at a first type event point $s_p$. 
(a) Before $s$ crosses $s_p$: The two degree-3 vertices $v_1$ and $v_2$ connected by the edge $\xi$ will be merged. (b) $s$ is at $s_p$: $v_1$ and $v_2$ are merged into a degree-4 vertex $v$, and the edge $\xi$ disappears. (c) After $s$ passes $s_p$: a new edge $\xi'$ along with two new degree-3 vertices $v_1'$ and $v_2'$ are created.}
\label{fig:firsttypeevent}
\end{center}
\end{minipage}
\end{figure}

We update $T_s$ by removing $\xi$ and adding the new created edge $\xi'$ and the two new vertices. This is only a local change to $T_s$ and costs $O(1)$ time since the degree of each vertex of $T_s$ is at most $4$ by our general position assumption. After the update of $T_s$, for each edge of $T_s$ that has one of its incident degree-3 vertices updated (including the new edge), we update its corresponding candidate event points in $H_s$ (this includes removing the corresponding old candidate event points from $H_s$ and insert new candidate event points to $H_s$); note that the number of such candidate event points is $O(1)$ because only $O(1)$ edges of $T_s$ are changed. This involves $O(1)$ deletions and insertions for $H_s$. In addition, each incident vertex of $\xi$ is associated with three nodes of $\Psi_s$; we remove these associations. For each incident vertex of $\xi'$, we associate it with the three nodes of $\Psi_s$ whose edge sequences correspond to its three shortest paths from $s$; note that each of these nodes is one of the nodes associated with the two incident vertices of $\xi$ and thus can be found in $O(1)$ time. 

\paragraph{\bf The second type event points.}
In the second case, $s_p$ is generated because a degree-3 vertex $v$ of $T_s$ moves on a polyhedron edge $e'$ at $s=s_p$ and $v$ connects to a vertex $v'$ of $T_s$ on $e'$ by an edge $\xi$ of $T_s$; see Figure~\ref{fig:secondtypeevent}. Let $f$ be the polyhedron face containing $v$  before $s$ crosses $s_p$, and $f'$ the other polyhedron face incident to $e'$.

\begin{figure}[t]
\begin{minipage}[t]{\linewidth}
\begin{center}
\includegraphics[totalheight=1.7in]{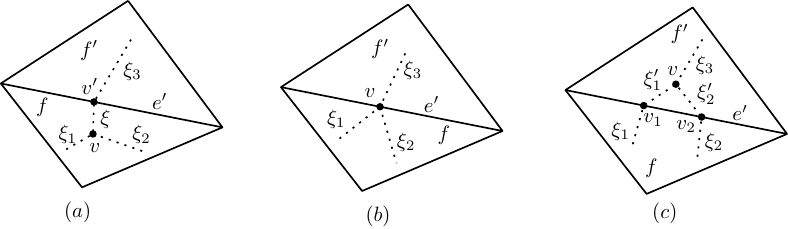}
\caption{
Illustrating the changes of $T_s$ at a second type event point $s_p$. 
(a) Before $s$ crosses $s_p$: a degree-3 vertex $v$ connects a vertex $v'$ of $T_s$ by an edge $\xi$ and $v'$ is on a polyhedron edge $e'$. (b) $s$ is at $s_p$: $v$ is on $e'$ and the edge $\xi$ disappears. (c) After $s$ passes $s_p$: two new edges $\xi_1'$ and $\xi_2'$ of $T_s$ along with two new vertices $v_1$ and $v_2$ on $e'$ are created.}
\label{fig:secondtypeevent}
\end{center}
\end{minipage}
\end{figure}

We first claim that at least one of the three incident edges of $v$ does not intersect $e'$. To see this, one may consider the three edges as portions of the Voronoi edges of three source images of $s$ in an unfolding to the plane of $f$~\cite{ref:MountOn85}. Since each Voronoi region is convex, it is not possible that the three Voronoi edges incident to the Voronoi vertex $v$ all intersect the same line not containing $v$. Therefore, $v$ has at most two incident edges intersecting $e'$. Depending on whether $v$ has one or two incident edges intersecting $e'$, there are two cases (in fact, the two cases are ``inverse'' to each other). 

\begin{itemize}
\item 
If $v$ has one incident edge intersecting $e'$, i.e., at $v'$ (see Figure~\ref{fig:secondtypeevent}), then after $s$ crosses $s_p$, 
$v$ crosses $e'$ and moves to $f'$ (note that if $v$ does not crosses $e'$ and moves back to $f$, then we do not need to process this event because it does not generate any new shortest path edge sequence and we can proceed with the next event point). Recall that $v$ is associated with three nodes of $\Psi_s$ before the event. Two of them are for the two shortest paths from $s$ to $v$ crossing the edge $e'$ to the left and right of $v'$, respectively; let $a$ and $b$ denote these two nodes, and $c$ the third node. By definition, the last edge in the edge sequence of $\sigma_a$ (resp., $\sigma_b$) is $e'$. Let $\pi_a$ (resp., $\pi_b$) be the shortest path from $s$ to $v$ whose edge sequence is $\sigma_a$ (resp., $\sigma_b$). 
After $v$ crosses $e'$ and enters $f'$, $e'$ is not the last edge of the new path $\pi_a$ (resp., $\pi_b$) anymore. The edge sequence of the new path $\pi_a$ is $\sigma_{a'}$, where $a'$ is the parent of $a$ in $\Psi(e)$. Similarly, the edge sequence of the new path $\pi_b$ is $\sigma_{b'}$, where $b'$ is the parent of $b$ in $\Psi(e)$. We remove the associations of $v$ with $a$ and $b$, and instead associate $v$ with $a'$ and $b'$. For $c$, let $\pi_c$ be the shortest path from $s$ to $v$ whose edge sequence is $\sigma_c$. After $v$ crosses $e'$ and enters $f'$, $e'$ becomes the last edge of the edge sequence of $\pi_c$. Hence, we possibly obtain a new edge sequence that is the concatenation of $\sigma_c$ and $e'$. Correspondingly, we add a new child $c'$ storing $e'$ and set it as a new child of $c$. However, before doing so, we check whether $c$ already has a node storing $e'$. If so, then the concatenation of $\sigma_c$ and $e'$ is already stored in $\Psi(e)$ and we do not need to add a new child for $c$. In either case, let $c'$ denote the child of $c$ storing $e'$. We remove the association of $v$ with $c$ and instead assocaite $v$ with $c'$. 

In addition, since $v$ is now in a new face $f'$, we compute new candidate event points associated with $v$, which can be done in $O(1)$ time as $v$ is incident to three edges of $T_s$. We insert them to $H_s$. We also remove the old candidate event points associated with $v$ when $v$ was in $f$. Since two new vertices on $e'$ are created after after $v$ moves to $f'$ (e.g., $v_1$ and $v_2$ in Figure~\ref{fig:secondtypeevent}(c)), we also compute candidate event points associated with them and insert them to $H_s$. 

\item 
If $v$ has two incident edges intersecting $e'$, this can be considered an ``inverse'' case of the above (e.g., in  Figure~\ref{fig:secondtypeevent}, one may consider (c) as the situation before $s$ crosses $s_p$ and consider (a) as the situation after $s$ crosses $s_p$). Our algorithm for this case basically reverses the operations in the above case. Specifically, let $c$ be the associated node of $v$ in $\Psi_s$ whose edge sequence $\sigma_c$ is the one for the shortest path from $s$ to $v$ through the portion of $e'$ between the two intersections of $e'$ with the two incident edges of $v$. Let $a$ and $b$ be the other two associated nodes of $v$, respectively. After $v$ crosses $e'$, we remove the association of $v$ with $c$ and instead associate $v$ with $c$'s parent. For $a$, if $a$ does not have a child storing $e'$, then we add such a child. In either case, let $a'$ be the child of $a$ storing $e'$. We remove the association of $v$ with $a$ and instead associate $v$ with $a'$. We do the same for $b$. Finally, we update the associated candidate event points for $v$ and the two intersections between $e'$ and the two incident edges of $v$. 
\end{itemize}

The algorithm stops once $s$ reaches the other end of $e$, after which all event points on $e$ are computed and $\Psi(e)$ is constructed completely. 

For the time analysis, since processing each event involves $O(1)$ candidate event point updates on $H_s$, 
it takes $O(\log |H_s|)$ time to update $H_s$ for each event. Note that $|H_s|$ is always bounded by $O(|T_s|)$ since each vertex of $T_s$ defines $O(1)$ candidate events in $H_s$. As $|T_s|=O(n^2)$, we obtain $|H_s|=O(n^2)$, and thus $\log |H_s|=O(\log n)$. Also, updating $\Psi(e)$ for each event takes $O(1)$ time because only $O(1)$ nodes of $\Psi(e)$ are updated. 
Therefore, processing each event takes $O(\log n)$ time and the total time of the algorithm for computing $\Psi(e)$ is $O((k_1+k_2)\cdot \log n)$.

In Section~\ref{sec:boundk1} and \ref{sec:boundk2}, we will prove that both $k_1$ and $k_2$ are bounded by $O(n^{4+\epsilon})$. This leads to an $O(n^{4+\epsilon})$ time algorithm to compute $\Sigma(e)$. Applying the algorithm to all polyhedron edges $e$ will compute the exact set of all shortest path edge sequences on $\calP$ in $O(n^{5+\epsilon})$ time. We summarize the result in the following theorem. 

\begin{theorem}
Given a convex polyhedral surface $\calP$ of $n$ vertices, the exact set of all shortest path edge sequences on $\calP$ can be computed in $O(n^{5+\epsilon})$ time. In particular, for each edge $e\in \calP$, the exact set of all shortest path edge sequences on $\calP$ whose first edge is $e$ can be computed in $O(n^{4+\epsilon})$ time. 
\end{theorem}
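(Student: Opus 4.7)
The plan is to build the algorithm and then combine it with the two event-count bounds promised later in the paper. Concretely, for a fixed polyhedron edge $e$ I would run the maintenance procedure of Section~\ref{sec:maintain}: initialize $T_s$ and the sequence tree $\Psi_s$ in $O(n^2)$ time when $s$ sits at one endpoint of $e$, seed the event queue $H_s$ with the $O(|T_s|)=O(n^2)$ candidate event points coming from current edges of $T_s$ and from degree-3 vertices adjacent to a vertex lying on a polyhedron edge, and then slide $s$ toward the other endpoint of $e$. At each extracted event point I would locally update $T_s$, refresh $O(1)$ entries of $H_s$, and attach/create at most $O(1)$ nodes in the global sequence tree $\Psi(e)$, preserving the invariant that the three shortest paths from $s$ to every degree-3 vertex of $T_s$ point to their associated nodes of $\Psi(e)$. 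By Lemma~\ref{lem:seqtree}, $\Sigma_s$ is precisely the set of prefixes of the edge sequences $\pi(s,v)$ for degree-3 vertices $v$ of $T_s$, so no edge sequence of $\Sigma(e)$ is ever missed: every new sequence appears either when a new degree-3 vertex is born (first-type event) or when a degree-3 vertex crosses a polyhedron edge and appends a new symbol to some branch (second-type event).

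Next I would bookkeep the running time. Each event triggers $O(1)$ updates on $H_s$ and $\Psi(e)$, with $|H_s|=O(n^2)$ throughout, so each event is processed in $O(\log n)$ time, giving $O((k_1+k_2)\log n)$ per edge $e$. Plugging in the bounds $k_1 = O(n^{4+\epsilon})$ from Section~\ref{sec:boundk1} and $k_2 = O(n^{4+\epsilon})$ from Section~\ref{sec:boundk2}, together with the initialization cost of $O(n^2)$, yields the per-edge bound $O(n^{4+\epsilon})$, where the $\log n$ factor is absorbed into $n^{\epsilon}$. Summing over the $O(n)$ polyhedron edges $e$ gives the global bound $O(n^{5+\epsilon})$, and the deduplication built into $\Psi(e)$ (we only create a new child storing $e'$ when none already exists) guarantees that the output is the exact set, not a superset.

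The only genuinely hard step is the event-count bounds, and by far the harder of the two will be $k_2 = O(n^{4+\epsilon})$. For $k_1$ I would reuse the grouping method of Section~\ref{sec:group}: a first-type event corresponds to a coincidence of two source images in a star unfolding, so after grouping edgelets into blocks of $n$ along each edge one can charge the first-type events to complexity features of the lower envelopes of $O(n)$ trivariate algebraic functions per block, giving $O(n^{3+\epsilon})$ per edge and $O(n^{4+\epsilon})$ total. For $k_2$ the traditional source/star unfoldings only seem to give $O(n^{4+\epsilon})$ per edge, which would be too weak; the plan (executed in Section~\ref{sec:boundk2}) is to bound the number of crossings of a degree-3 vertex of $T_s$ with a fixed polyhedron edge $e'$ by $O(n^{3+\epsilon})$ via a new ``one-angle-one-split'' style unfolding into a plane containing $e'$, proving the no-shortcut property $\Vert s't\Vert \ge d(s,t)$ for every image $s'$ of $s$ and every $t\in e'$, and then charging events to envelope complexity in that unfolded plane. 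Granting that bound, summing over the $O(n)$ choices of $e'$ gives $k_2 = O(n^{4+\epsilon})$ and closes the argument.
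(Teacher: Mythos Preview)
Your proposal is correct and follows essentially the same approach as the paper: initialize $T_s$ and $\Psi_s$ at an endpoint of $e$, sweep $s$ along $e$ while processing events in $O(\log n)$ time each, invoke the bounds $k_1,k_2=O(n^{4+\epsilon})$ from Sections~\ref{sec:boundk1} and~\ref{sec:boundk2}, and sum over the $O(n)$ edges. One small slip: in your sketch of the $k_1$ argument, a first-type event is not a ``coincidence of two source images'' but rather a point $v$ that is simultaneously equidistant from four source images (a vertex of the lower envelope of the $3$-variate distance functions); also the per-group envelope complexity is $O(n^{3+\epsilon})$ and the total over the $O(n)$ groups on a single edge $e$ is $O(n^{4+\epsilon})$, so your ``$O(n^{3+\epsilon})$ per edge'' should read ``per group.''
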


\subsection{Proving $\boldsymbol{k_1=O(n^{4+\epsilon})}$}
\label{sec:boundk1}

If $s$ is at a first type event point, then two degree-3 vertices of $T_s$ are merged into a degree-4 vertex $v$, which has four shortest paths from $s$. This implies that the pair $(s,v)$ corresponds to a vertex in the lower envelope of the functions defined by the $n$ source images of $s$ in the plane of the kernel $\calK_{\eta}$ of the star unfolding $\bigstar_{\eta}$ for the edgelet $\eta$ that contains $s$ on $e$ (recall in Section~\ref{sec:group} that each source image $s_i$ defines a 3-variate algebraic function $f_i(s,t^*)$ for $s\in \eta$ and $t^*$ in the plane of $\calK_{\eta}$). Therefore, $k_1$ is no more than the total number of vertices of all lower envelopes for all edgelets on $e$. Since $e$ has $O(n^2)$ edgelets~\cite{ref:AgarwalSt97}, following the grouping technique in Section~\ref{sec:group}, we can partition the edgelets of $e$ into $O(n)$ groups such that each group defines $O(n)$ functions. Hence, the number of vertices in the lower envelope of all functions in each group is $O(n^{3+\epsilon})$~\cite{ref:SharirAl94}. Therefore, the total number of vertices in the lower envelopes in all $O(n)$ groups is $O(n^{4+\epsilon})$. This proves $k_1=O(n^{4+\epsilon})$.

\subsection{Proving $\boldsymbol{k_2=O(n^{4+\epsilon})}$}
\label{sec:boundk2}

We now prove the bound for $k_2$, which is significantly more challenging. By definition, each second type event point corresponds to 
a pair $(s,t)$ such that $s\in e$ and $t\in e'$ for another polyhedron edge $e'$ with $e\neq e'$ and there are three shortest \st\ paths; we refer to the event point $s$ an {\em $e'$-event}. 
Consider an edge $e'$ of $\calP$ with $e'\neq e$. Let $k_{e'}$ denote the number of $e'$-events. In the following, we will prove $k_{e'}=O(n^{3+\epsilon})$, which leads to the bound $k_2=O(n^{4+\epsilon})$. For convenience, we consider $e'$ an open segment without including its endpoints. 


Roughly speaking, our strategy is to find the edge sequences of all possible shortest paths $\pi(s,t)$ with $s\in e$ and $t\in e'$ and unfold these edge sequences into a single plane $\Pi_{e'}$ containing $e'$. The plane $\Pi_{e'}$ contains a number of source images of $s\in e$ so that the location of each source image is a linear function of the location of the source $s\in e$. Each source image $s'$ defines an algebraic function that is the distance from $s'$ to any point on $e'$ in $\Pi_{e'}$, and the function is bivariate and is of constant degree and constant size. We will show that bounding $k_{e'}$ boils down to bounding the number of vertices in the lower envelope of these functions. For a set of $n$ such functions, the number of vertices in their lower envelope is bounded by $O(n^{2+\epsilon})$~\cite{ref:HalperinNe94,ref:SharirAl94}. Using the grouping technique as in Section~\ref{sec:group} or in Section~\ref{sec:boundk1}, we finally show that $k_{e'}=O(n^{3+\epsilon})$. The details are given below.

Let $f_1$ and $f_2$ be the two incident polyhedron faces of $e'$, respectively. We say a path $\pi$ from $s\in e$ to $t\in e'$ is a {\em $f_1$-constrained} path if the path does not intersect the interior of the face $f_2$ (it was called {\em $f_2$-free path} in \cite{ref:MitchellTh87}). We define the {\em $f_1$-constrained} shortest \st\ path as a shortest path among all $f_1$-constrained \st\ paths, and we use $\pi_{f_1}(s,t)$ to denote it. We define an {\em $f_2$-constrained shortest path} $\pi_{f_2}(s,t)$ analogously. Clearly, the shorter one of $\pi_{f_1}(s,t)$ and $\pi_{f_2}(s,t)$ is a shortest \st\ path~\cite{ref:MitchellTh87}. 

\paragraph{Admissible interval sets.}
For a fixed point $s\in e$, we wish to compute a set $\calI_s$ of intervals (i.e., segments) on $e'$ such that 
(1) each interval $I\in \calI_s$ is associated with an edge sequence $\sigma_I$ such that there is a geodesic (not necessarily shortest) $f_1$-constrained path $\pi_I(s,t)$ from $s$ to $t$ whose edge sequence is exactly $\sigma_I$ for any point $t\in I$; (2) for any point $t\in e'$, if $\pi_{f_1}(s,t)$ is a shortest \st\ path, then there is an interval $I\in\calI_s$ such that $t\in I$ and $\pi_I(s,t)$ is $\pi_{f_1}(s,t)$. We call $\calI_s$ an {\em $f_1$-admissible interval set} of $e'$. 

Now suppose that $s$ moves on $e$. We wish to understand combinatorial changes of $\calI_s$ during the movement of $s$. 
We say that two $f_1$-admissible interval sets $\calI_s$ and $\calI_{s'}$ are {\em combinatorially equivalent} if $|\calI_s|=|\calI_{s'}|$ and each interval $I$ of $\calI_s$ corresponds to an interval $I'\in \calI'$ such that $\sigma_I=\sigma_{I'}$ (note that the lengths of $I$ and $I'$ may be different; we also say these two intervals are {\em combinatorially equivalent}). 

A segment of $e$ is called an {\em elementary segment} if $\calI_s$ is combinatorially equivalent for all points $s$ in the segment. We will show that it is possible to partition $e$ into $O(n^2)$ elementary segments. Using the elementary segments, we will prove the bound $k_{e'}=O(n^{3+\epsilon})$.

In the following, for any fixed point $s\in e$, we first adapt Chen and Han's algorithm~\cite{ref:ChenSh90} (referred to as the CH algorithm) in Section~\ref{sec:ch} to compute an $f_1$-admissible interval set $\calI_s$ of size $O(n)$ in $O(n^2)$ time. Then, we show in Section~\ref{sec:eleint} that $e$ can be partitioned into $O(n^2)$ elementary segments. We finally prove the bound $k_{e'}=O(n^{3+\epsilon})$ in Section~\ref{sec:bound}. 

\subsubsection{Computing an $\boldsymbol{f_1}$-admissible set $\boldsymbol{\calI_s}$}
\label{sec:ch}
Let $n_f$ denote the number of faces of $\calP$. 
The CH algorithm is based on the following two properties of shortest paths: (1) any shortest path on $\calP$ intersects at most $n_f$ faces; (2) the interiors of any two shortest paths from $s$ do not intersect. 

The CH algorithm builds a tree $\Gamma_s$ to store edges sequences. The tree (which is called ``sequence tree'' in \cite{ref:ChenSh90}) is similar to but not the same as the sequence tree defined in Section~\ref{sec:generate}; for differentiation, we call it the {\em CH-tree}. Its root is $s$ itself and stores the edge $e$. Every other node $v$ of $\Gamma_s$ stores an edge $e_v$ such that (1) $e_v$ and $e_{p_v}$ are in the same face of $\calP$, where $p_v$ is the parent of $v$, and (2) there is a geodesic path $\pi_v(s,t)$ from $s$ to a point $t\in e_v$ and the edge sequence of the path is exactly the sequence of edges stored in the nodes of the path of $\Gamma_s$ from the root to $e_v$ (we use $\sigma_v$ to denote the edge sequence). In other words, if we unfold the faces intersecting the geodesic path $\pi_v(s,t)$ to the plane $\Pi_v$ of the face containing $e_v$ and $e_{p_v}$, then there is a line segment connecting the image of $s$ to $t$ inside the union $P_v$ of the images of the above faces in the unfolding (i.e., the image of $s$ is visible to $t$ in $P_v$); note that $P_v$ is a simple polygon. We also store at $v$ the image of $s$ in the unfolding plane $\Pi_v$, denoted by $s_v$. In addition, since $P_v$ is a simple polygon, all points of $e_v$ visible to $s$ form an interval, denoted by $I_v$ and called the {\em projection interval} of $s$ on $e_v$, which is also associated with $v$; see Figure~\ref{fig:triple}. As such, each node $v\in T_s$ is associated with a triple $(e_v,s_v,I_v)$. 
If $v$ is the root, then the triple is $(e,s,e)$. The algorithm works as follows.

\begin{figure}[t]
\begin{minipage}[t]{\linewidth}
\begin{center}
\includegraphics[totalheight=1.7in]{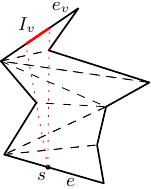}
\caption{
The polygon with solid boundary is $P_v$. The edge sequence $\sigma_v$ consists of the dashed segments as well as $e$ and $e_v$. The red segment on $e_v$ is $I_v$.}
\label{fig:triple}
\end{center}
\end{minipage}
\end{figure}

Initially, we set $s$ as the root of $\Gamma_s$ and associate the triple $(e,s,e)$ with it. The algorithm runs in $n_f$ iterations. Each $i$-th iteration will construct nodes in the $(i+1)$-th level (the root is in the first level). In the $i$-th iteration, for each leaf $v=(e_v,s_v,I_v)$ in the $i$-th level of the tree, we do the following. Note that $e_v$ is incident to two faces, one containing both $e_v$ and $e_{p_v}$, where $p_v$ is the parent of $v$, denoted by $f_v'$, and the other face is called the {\em shadowed face} of $e_v$ and denoted by $f_v$. Let $f_v=\triangle abc$ with $e_v=\overline{bc}$; see Figure~\ref{fig:projection}. We first unfold $s_v$ about $e_v$ to the plane of $f_v$ (so that the image of $s_v$ and $f_v$ are in the opposite side of $e_v$) and let $s_v'$ be the image of $s_v$ in the unfolding. For the edge $\overline{ab}$, we create a child $u$ for $v$ storing $\overline{ab}$. Then, based on $I_v$ and $s_v'$, we can determine the projection interval $I_u$ on $\overline{ab}$. We associate the triple $(\overline{ab},s_v',I_u)$ with $u$. Note that if $I_u$ is empty, then we do not create $u$. We do the same for the edge $\overline{ac}$.

\begin{figure}[t]
\begin{minipage}[t]{\linewidth}
\begin{center}
\includegraphics[totalheight=1.6in]{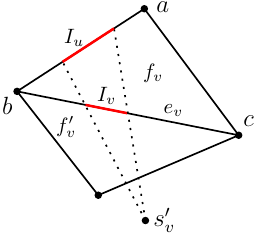}
\caption{
Illustrating the computation of a node $u$ associated with triple $(\overline{ab},s_v',I_u)$.}
\label{fig:projection}
\end{center}
\end{minipage}
\end{figure}

After $n_f$ iterations, for each node $v$ with $e_v=e'$ and $f_v'=f_1$, we add the projection interval $I_v$ to $\calI_s$. We can prove that $\calI_s$ is indeed an $f_1$-admissible interval set for $e'$ (see Lemma~\ref{lem:admissible}). However, the tree $\Gamma_s$ computed in the above algorithm may have an exponential number of nodes because each leaf of the tree in each iteration creates two children. We can reduce the size by the following observation (called ``one angle one split'' property in \cite{ref:ChenSh90}). Suppose $v$ and $u$ are two nodes with $e_v=e_u=\overline{bc}$ and $f_v=f_u=\triangle abc$. Then, at most one of them can have two children. Indeed, if $a$ is not in the projection interval of $s_v'$ or $a$ is not in the projection interval of $s_u'$, then it is vacuously true that only one of $u$ and $v$ can have two children. Otherwise, assume that $\overline{bc}$ is horizontal, $b$ is left of $c$, and $s_v'$ is left of $s_u'$; see Figure~\ref{fig:anglesplit}. 
If $\Vert s_v'a\Vert\leq \Vert s_u'a \Vert$, then the key observation is that any point on $\overline{ab}$ has a shorter path from $s$ through $s_v'$ than through $s_u'$, and thus, we do not need to create a child for $u$ storing the edge $\overline{ab}$. In this case, we say that $v$ {\em occupies} the polyhedron vertex $a$. Similarly, if $\Vert s_v'a\Vert\geq \Vert s_u'a\Vert$, then we do not need to create a child for $v$ storing $\overline{ac}$. 

\begin{figure}[t]
\begin{minipage}[t]{\linewidth}
\begin{center}
\includegraphics[totalheight=1.2in]{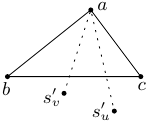}
\caption{
Illustrating the ``one angle one split'' property~\cite{ref:ChenSh90}.}
\label{fig:anglesplit}
\end{center}
\end{minipage}
\end{figure}

With the observation, we can reduce the size of $\Gamma_s$ by modifying the above algorithm as follows. When processing $v$ in the algorithm, if the projection of $s_v'$ does not cover $a$, then we do the same as before, which will result in exactly one child of $v$. Otherwise, if $a$ was not occupied by any other node, then we do the same as before, which will create two children of $v$. If $a$ was occupied by another node $u$, then we compare $\Vert s_v'a\Vert$ and $\Vert s_u'a\Vert$. If $\Vert s_v'a\Vert\geq \Vert s_u'a\Vert$, then we only create one child for $v$, storing the edge $\overline{ab}$. Otherwise, we create two children for $v$ and delete the subtree at the child of $u$ storing $\overline{ab}$.

In this way, the total number of nodes of the resulting tree $\Gamma_s$ is $O(n^2)$ because there are $O(n)$ leaves at each iteration due to the ``one angle one split'' property. As before, for each node $v$ with $e_v=e'$ and $f_v'=f_1$, we add the projection interval $I_v$ to $\calI_s$. 
Furthermore, for each interval $I_v\in \calI_s$, if $v$ has an ancestor $u$ with $e_u=e'$, then the edge sequence $\sigma_v$ contains the polyhedron edge $e_v$ more than once, meaning that $\sigma_v$ cannot be a shortest path edge sequence; in this case, we remove $I_v$ from $\calI_s$ and we call this the {\em post-pruning} strategy. Due to this strategy, we have the following observation.

\begin{observation}\label{obser:postpruning}
    For each leaf $u$ of $\Gamma_s$, there is at most one node $v$ in the path of $\Gamma_s$ from the root to $u$ such that the projection interval $I_v$ is in $\calI_s$. 
\end{observation}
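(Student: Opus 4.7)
The plan is to derive this observation directly from the post-pruning rule introduced just before the statement. Suppose, toward a contradiction, that some root-to-leaf path of $\Gamma_s$ contains two distinct nodes $v_1$ and $v_2$ whose projection intervals both lie in $\calI_s$, with $v_1$ an ancestor of $v_2$. By the definition of $\calI_s$, each such node must satisfy $e_{v_1}=e_{v_2}=e'$ (and $f_{v_1}'=f_{v_2}'=f_1$, though only the edge condition is relevant here). Hence $v_2$ has a strict ancestor, namely $v_1$, whose stored edge $e_{v_1}$ equals $e'$.

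But this is exactly the situation handled by the post-pruning strategy: whenever a node $v$ with $e_v=e'$ has an ancestor $u$ with $e_u=e'$, its edge sequence $\sigma_v$ would contain the polyhedron edge $e'$ twice and therefore cannot be a shortest path edge sequence, so $I_v$ is explicitly removed from $\calI_s$. Applied to $v_2$, this forces $I_{v_2}\notin \calI_s$, contradicting our assumption. Therefore along any root-to-leaf path at most one node can have its projection interval retained in $\calI_s$.

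No real obstacle is expected here; the observation is essentially a syntactic consequence of how $\calI_s$ is defined after post-pruning. The only thing to be careful about is to verify that ``ancestor'' in the post-pruning rule is meant in the strict sense (otherwise the rule would remove every interval, which would be inconsistent with the rest of the algorithm), and that the condition $f_v'=f_1$ in the definition of $\calI_s$ does not interfere: even if $v_1$'s $f_{v_1}'$ differed from $f_1$, the rule as stated triggers purely on the repetition of the edge $e'$ in $\sigma_{v_2}$, so the argument still goes through. This suffices to establish the observation.
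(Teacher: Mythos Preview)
Your proof is correct and follows essentially the same contradiction argument as the paper: assume two nodes on a root-to-leaf path have intervals in $\calI_s$, observe one is an ancestor of the other, and invoke the post-pruning rule to reach a contradiction. Your additional remarks about the strictness of ``ancestor'' and the irrelevance of the $f_v'=f_1$ condition are sound clarifications but not needed for the argument.
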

\begin{proof}
Assume to the contrary that there are two nodes $v$ and $v'$ in the path of $\Gamma_s$ from the root to $u$ such that both projection intervals $I_v$ and $I_{v'}$ are in $\calI_s$. Then, one of $v$ and $v'$ must the ancestor of the other. Without loss of generality, we assume that $v$ is the ancestor of $v'$. According to our post-pruning strategy, $I_{v'}$ cannot be in $\calI_s$, a contradiction. 
\end{proof}

Since $\Gamma_s$ has $O(n)$ leaves, by the above observation, we have $|\calI_s|=O(n)$. 




\begin{lemma}\label{lem:admissible}
    $\calI_s$ is an $f_1$-admissible interval set for $e'$. 
\end{lemma}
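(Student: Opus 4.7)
The plan is to verify the two defining properties of an $f_1$-admissible interval set separately. Property (1) will follow directly from the invariants maintained by the modified CH-tree construction, while property (2), the coverage property, will be proved by induction on the length of the edge sequence of $\pi_{f_1}(s,t)$, using the correctness of the ``one angle one split'' pruning together with the observation that the post-pruning can only discard edge sequences that list $e'$ more than once.

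For property (1), fix a node $v \in \Gamma_s$ with $e_v = e'$ and $f_v' = f_1$ whose projection interval $I_v$ has survived the post-pruning. Its triple $(e_v,s_v,I_v)$ is obtained by unfolding the faces listed in $\sigma_v$ into the plane of $f_v' = f_1$, producing a simple polygon $P_v$ in which $s_v$ sees exactly $I_v$ along $e_v$. For each $t \in I_v$, folding $\overline{s_v t}$ back onto $\calP$ yields a geodesic from $s$ to $t$ whose edge sequence is precisely $\sigma_v$. Since $e'$ occurs only as the final edge of $\sigma_v$ and the face on the $s$-side of $e'$ in the unfolding is $f_v' = f_1$, none of the unfolded faces is $f_2$, so the resulting geodesic is $f_1$-constrained.

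For property (2), let $t \in e'$ be a point for which $\pi_{f_1}(s,t)$ happens to be a global shortest path, and let $\sigma = e_1 e_2 \cdots e_k$ be its edge sequence, with $e_1 = e$ and $e_k = e'$. I will construct, by induction on $k$, a node $v \in \Gamma_s$ with $\sigma_v = \sigma$, $e_v = e'$, $f_v' = f_1$, and $t \in I_v$, and then verify that $I_v$ is kept in $\calI_s$. The inductive hypothesis supplies a parent node $u$ realizing the prefix through $e_{k-1}$, with the relevant portion of $e_{k-1}$ inside the projection interval of $s_u$; when processing $u$ we attempt to create a child $v$ carrying $e_k$. If the ``one angle one split'' rule would discard $v$ in favor of a competitor at some polyhedron vertex $a$, the Chen--Han correctness argument~\cite{ref:ChenSh90} forces the competitor's subtree to contain a node whose unfolded geodesic to $t$ is no longer than the one through $v$; combining this with $\pi_{f_1}(s,t)$ being globally shortest and having edge sequence exactly $\sigma$ forces the pruning to actually preserve $v$. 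Containment $t \in I_v$ holds because the unfolding of $\pi_{f_1}(s,t)$ places $t$ at the endpoint of $\overline{s_v t}$ inside $P_v$. The post-pruning leaves $I_v$ alone: otherwise some ancestor $u$ of $v$ would satisfy $e_u = e'$, making $e'$ appear twice in $\sigma$ and contradicting the fact that a shortest path crosses each polyhedron edge at most once~\cite{ref:SharirOn86}.

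The main obstacle is the inductive step above, specifically reconciling the sibling-local ``one angle one split'' rule with the need to retain the node carrying the \emph{exact} edge sequence of $\pi_{f_1}(s,t)$. The delicate case is an exact tie $\|s_v' a\| = \|s_u' a\|$ at a shared polyhedron vertex $a$, where the tie-breaking could in principle discard the node holding $\sigma$. I will handle this either by invoking the general position assumption to exclude such ties, or by sharpening the pruning's tie-breaking so that the retained child is always one whose subtree contains a representative of every surviving $f_1$-constrained shortest edge sequence; the admissibility conclusion then follows.
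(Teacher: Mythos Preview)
Your argument for Property~(1) matches the paper's. For Property~(2), the paper takes a simpler route: rather than inducting on the length of $\sigma$, it considers the \emph{unpruned} CH-tree $\Gamma_s'$ produced by the first version of the algorithm (before ``one angle one split'' is applied). In $\Gamma_s'$ every geodesic edge sequence of length at most $n_f$ appears, so the node $v$ with $\sigma_v=\sigma$ and $t\in I_v$ certainly exists there. The paper then observes in one line that the pruning only removes branches that cannot carry shortest paths; since $\pi_{f_1}(s,t)$ \emph{is} a shortest path, $v$ survives in $\Gamma_s$. The post-pruning step is then identical to yours.

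Your induction can be made to work, but the key step is stated inaccurately. You write that if the pruning discards $v$, ``the competitor's subtree [must] contain a node whose unfolded geodesic to $t$ is no longer than the one through $v$.'' Chen--Han does not guarantee this: the competitor $w$ need not reach $t$ at all. What the ``one angle one split'' analysis does give is that the competitor's image $s_w'$ is strictly closer than $s_u'$ to every point on the edge where $u$'s child is suppressed. Apply this at the point $q$ where $\pi_{f_1}(s,t)$ crosses that edge: then $\|s_w' q\| < \|s_u' q\|$, but $\|s_u' q\|$ is the length of the initial portion of $\pi_{f_1}(s,t)$ from $s$ to $q$, which, as a subpath of a shortest path, is itself a shortest $s$--$q$ path. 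The contradiction is obtained at $q$, not at $t$, and without locating any geodesic inside $w$'s subtree. With this correction your induction goes through; the paper's comparison with the unpruned tree simply packages the same contradiction without the level-by-level bookkeeping and without needing to discuss ties separately.
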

\begin{proof}
To prove the lemma, we need to prove that $\calI_s$ has the two properties in the definition of $f_1$-admissible interval sets. 
For each interval $I\in \calI_s$, for any point $t\in I$, according to our algorithm, there is a geodesic $f_1$-constrained \st\ path whose edge sequence is $\sigma_I$. Therefore, the first property holds. We prove the second property below. 

For any point $t\in e$, suppose that $\pi_{f_1}(s,t)$ is a shortest \st\ path. Let $\sigma$ be its edge sequence. Let $\Gamma_s'$ be the CH-tree constructed in our first algorithm that does not use the ``one angle one split'' property to prune the sequence tree. 
Since the path is a geodesic path intersecting at most $n_f$ polyhedron faces, according to our first algorithm, there must be a node $v$ in $\Gamma_s'$ with $\sigma_v=\sigma$ and $t\in I_v$. In our second algorithm, some branches of $\Gamma_s$ are pruned. However, if a path is pruned, then the path cannot be a shortest path. Since $\pi_{f_1}(s,t)$ is a shortest path, it can never be pruned. Hence, the node $v$ is still in $\Gamma_s$, which implies that $\pi_{I_v}(s,t)$ is $\pi_{f_1}(s,t)$. Finally, since $\pi_{f_1}(s,t)$ is a shortest path, no two edges in $\sigma_v$ are identical, meaning that $I_v$ will ``survive'' the post-pruning strategy. Therefore, the interval $I_v$ must be in $\calI_s$. 
This proves the second property. 
\end{proof}


\subsubsection{Partitioning $\boldsymbol{e}$ into $\boldsymbol{O(n^2)}$ elementary segments}
\label{sec:eleint}
We now show that $e$ can be partitioned into $O(n^2)$ elementary segments. To this end, we consider the following question: When will $\calI_s$  change topologically if $s$ moves on $e$? 
First of all, an interval $I_v$ of $\calI_s$ for a node $v\in \Gamma_s$ may disappear because the visibility of $s_v$ on $e'$ in the unfolding polygon $P_v$ is blocked. An interval may also disappear because the $f$-constrained shortest path from $s$ to a polyhedron vertex $v$ is topologically changed (i.e., the edge sequence is changed) for a polyhedron vertex-face pair $(v,f)$ with $v$ being a vertex of $f$. The latter reason may also cause a new interval to be added to $\calI_s$. The details are discussed below. 


We first partition $e$ into $O(n^2)$ segments (we call them {\em quasi-elementary segments}) so that in each segment $\tau$, for all vertex-face pairs $(v,f)$ of $\calP$, $s$ has topologically the same $f$-constrained shortest paths from $s$ to $v$ for all points $s\in \tau$. This is done in the following lemma. 


\begin{lemma}\label{lem:quasiinterval}
It is possible to partition $e$ into $O(n^2)$ quasi-elementary segments. 
\end{lemma}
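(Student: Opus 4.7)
My plan is to partition $e$ using the change-points of the $f$-constrained shortest path $\pi_f(s, v)$, taken over every polyhedron vertex-face pair $(v, f)$ with $v$ a vertex of $f$, then bound the total count of such change-points. Because $\calP$ is triangulated with $O(n)$ triangular faces, each having exactly three vertices, there are only $O(n)$ such pairs; so it suffices to show that each contributes $O(n)$ change-points on $e$, after which the claimed $O(n^2)$ bound follows by summing.

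Fix a pair $(v, f)$. For each candidate edge sequence $\sigma$ that ends at $v$ through $f$, unfolding the faces of $\sigma$ into the plane of $f$ places $v$ at a fixed point and maps $e$ to a line segment along which the image of $s$ slides linearly. Hence the length $\ell_\sigma(s)$ of the unfolded path from $s$ to $v$ is a constant-degree, constant-description-size algebraic function of the parameter of $s$ on $e$, and the edge sequence of $\pi_f(s, v)$ changes precisely at a breakpoint of the lower envelope of $\{\ell_\sigma\}_\sigma$ or at a degeneracy where some $\sigma$ ceases to define a valid geodesic (e.g., the unfolded segment $\overline{s v}$ grazes a polyhedron vertex on the boundary of the unfolded region). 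I would then bound the family of relevant candidates $\mathcal{E}_{v,f}$ by $O(n)$ using the ``one angle one split'' property of the CH-tree $\Gamma_s$ (Section~\ref{sec:ch}): for each fixed $s$, the edge sequence of $\pi_f(s, v)$ is encoded by the unique root-to-leaf path in $\Gamma_s$ ending at the leaf that ``occupies'' the pair $(v, f)$, and one-angle-one-split forbids more than one such occupier at a time. Since $\Gamma_s$ has only $O(n)$ leaves, and as $s$ traverses $e$ the occupier can change only at length-tie events between surviving candidates, a continuity/amortization argument confines the total number of distinct sequences ever occupying $(v, f)$ to $O(n)$.

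Given $|\mathcal{E}_{v, f}| = O(n)$ algebraic distance functions of constant description complexity, their lower envelope has complexity $O(n \alpha(n))$ by Davenport--Schinzel theory~\cite{ref:SharirAl94}, yielding $O(n)$ change-points per pair after absorbing the inverse-Ackermann factor. Summing over the $O(n)$ vertex-face pairs produces $O(n^2)$ change-points on $e$; these partition $e$ into $O(n^2)$ quasi-elementary segments, as desired.

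The principal obstacle is the $O(n)$ bound on $|\mathcal{E}_{v, f}|$: since $\Gamma_s$ can have $\Theta(n^2)$ nodes in total, a naive count of candidate sequences gives $\Theta(n^2)$, which would degrade the lemma to $O(n^3)$. The argument must exploit that only root-to-leaf paths ending at the unique leaf ``slot'' associated with $(v, f)$ yield candidates, and that $\Gamma_s$ has only $O(n)$ leaves at every $s$; coupling this structural constraint with careful tracking of how the occupying leaf migrates as $s$ moves along $e$ — perhaps by identifying the ``competitors'' of a given occupier with the $O(n)$ polyhedron edges whose unfoldings share the image of $v$ in the plane of $f$ — should be enough to keep $|\mathcal{E}_{v,f}|$ linear in $n$.
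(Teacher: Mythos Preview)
Your high-level decomposition --- sum over the $O(n)$ vertex--face pairs and show each contributes $O(n)$ change-points on $e$ --- is the right skeleton, but the per-pair argument has a real gap, namely the one you flag yourself.

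The sketch for $|\mathcal{E}_{v,f}|=O(n)$ does not close. Knowing that $\Gamma_s$ has $O(n)$ leaves for each \emph{fixed} $s$ says nothing about how many distinct leaves ever occupy the slot $(v,f)$ as $s$ sweeps over $e$; the tree itself changes combinatorially during the sweep. In fact, the paper's own control over how $\Gamma_s$ evolves with $s$ (the analysis of $\Gamma_e$ and Lemma~\ref{lem:edgeseq}) is \emph{derived from} the quasi-elementary partition you are trying to establish here, so invoking CH-tree structure at this point is circular. Your proposed amortization (``the occupier can change only at length-tie events between surviving candidates'') is precisely the quantity to be bounded, not a bound on it. And even granting $|\mathcal{E}_{v,f}|=O(n)$, Davenport--Schinzel yields $O(n\alpha(n))$ breakpoints, not $O(n)$; the inverse Ackermann factor does not disappear in a clean $O(n^2)$ statement.

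The paper sidesteps all of this with a single idea you are missing: \emph{reverse source and target}. For a fixed pair $(v,f)$, instead of tracking $\pi_f(s,v)$ as $s$ moves, run the single-source shortest-path algorithm (MMP / continuous Dijkstra) \emph{from the vertex} $v$, initialized so the wavefront leaves $v$ only through the face $f$ --- either by creating the initial candidate interval solely on the edge of $f$ opposite $v$, or equivalently by erecting two infinite ``wall'' faces along the other two edges of $f$ incident to $v$. The resulting $f$-constrained shortest-path map from $v$ is a standard single-source structure on a polyhedral surface, and it is a known property of such maps that every polyhedron edge --- in particular $e$ --- is subdivided into $O(n)$ intervals, each carrying a fixed edge sequence back to $v$. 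That is the per-pair $O(n)$ bound directly, with no lower-envelope machinery and no $\alpha(n)$. Summing over the $O(n)$ pairs gives the $O(n^2)$ quasi-elementary segments.
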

\begin{proof}
Consider a polyhedron vertex-face pair $(v,f)$. Let $e_f$ denote the edge of $f$ opposite to $v$. Let $v_1$ and $v_2$ be the other two vertices of $f$; see Figure~\ref{fig:vertexface}. 

\begin{figure}[t]
\begin{minipage}[t]{\linewidth}
\begin{center}
\includegraphics[totalheight=1.0in]{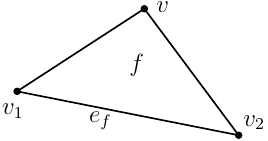}
\caption{
Illustrating a vertex-face pair $(v,f)$.}
``````````\label{fig:vertexface}
\end{center}
\end{minipage}
\end{figure}

We adapt the algorithm in \cite{ref:MitchellTh87} (called MMP algorithm) to compute $f$-constrained shortest paths from $s$ to $v$ for all points $s\in \calP$. To adapt the algorithm, we start the algorithm from the vertex $v$ and change the initial step as follows (using the terminology in \cite{ref:MitchellTh87}): Create a candidate interval whose extent is $\overline{v_1v_2}$ and whose root is $v$ (the original algorithm does this for all edges of $\calP$ opposite to $v$ in all incident faces of $v$, while here we only do this for $\overline{v_1v_2}$). The rest of the algorithm is the same as before. 
An alternative way to adapt the algorithm is the following. Add a rectangular face of infinitely long with an edge coincident with $\overline{vv_1}$ and the face extends out of $\calP$. Do the same for the edge $\overline{vv_2}$. We consider $v$ as a point in the interior of the region bounded by $f$ and the two new faces. As such, these two new faces play the role of ``walls'' to make sure paths from $v$ only go through $f$ initially. Note that the polyhedron with the two walls are not convex anymore, but the MMP algorithm works for non-convex polyhedra. 

The above adapted MMP algorithm subdivides each polyhedron edge (in particular, the edge $e$) into $O(n)$ intervals such that $f$-constrained shortest $v$-$s$ paths for all $s$ in the same interval have the same edge sequence (note that the path may contain either $\overline{vv_1}$ or $\overline{vv_2}$, in which case the vertex $v_1$ or $v_2$ is considered a special ``vertex edge'' of the edge sequence). 

We do the above for all polyhedron vertex-face pairs $(v,f)$. As there are $O(n)$ such pairs, the total number of intervals on $e$ is $O(n^2)$. The endpoints of these intervals partition $e$ into $O(n^2)$ segments such that for each segment $\tau$, for all vertex-face pairs $(v,f)$, $s$ has topologically equivalent $f$-constrained shortest $s$-$v$ paths for all points $s\in \tau$. This proves the lemma. 
\end{proof}

Suppose that we partition $e$ into quasi-elementary segments by Lemma~\ref{lem:quasiinterval}. Consider a quasi-elementary segment $\tau$. Suppose that we have the CH-tree $\Gamma_s$ when $s$ is at one end of $\tau$. Now image that we move $s$ in $\tau$. During the moving, since $\tau$ is a quasi-elementary segment, by definition, for all vertex-face pairs $(v,f)$, $s$ has topologically equivalent $f$-constrained shortest $s$-$v$ paths. According to our algorithm for constructing $\Gamma_s$, $\Gamma_s$ does not change except in the following situations: (1) a projection interval $I_v$ of a node $v\in \Gamma_s$ become empty because $s_v$ is not visible to $e_v$  anymore in the unfolding polygon $P_v$; (2) a projection interval $I_v$ of a node $v\in \Gamma_s$ emerges because $s_v$ becomes visible to $e_v$ in $P_v$. Observe that if $s_v$ is visible to $e_v$ in $P_v$, then $s_u$ must be visible to $e_u$ in $P_u$ for an ancestor $u$ of $v$ in $\Gamma_s$. This implies that in the above first situation, $v$ must be a leaf of $\Gamma_s$, while in the second situation $v$ becomes a new leaf of $\Gamma_s$. Hence, during the moving of $s$ in $\tau$, to update $\Gamma_s$, in the first case we remove $v$ from $\Gamma_s$ while in the second case we add $v$ as a new leaf of $\Gamma_s$. We say that other nodes of $\Gamma_s$ do not change although the lengths of the projection intervals may change. In general, for two points $s$ and $s'$ on $e$, we consider a node $v\in \Gamma_s$ and a node $v'\in \Gamma_{s'}$ the same node if $e_v=e_{v'}$, $\sigma_v=\sigma_{v'}$, and both $I_v$ and $I_{v'}$ are non-empty. We now define $\Gamma_e$ as the union of $\Gamma_s$ for all points $s\in e$. In the following, we show that (1) $\Gamma_s$ for any point $s\in e$ is a subtree of $\Gamma_e$ with the same root, and (2) any leaf of $\Gamma_e$ is a leaf of $\Gamma_s$ for some point $s\in e$. 

Suppose $\tau$ is the first quasi-elementary segment of $e$. Let $s$ be the common endpoint of $\tau$ and $e$. Intially, set $\Gamma_e=\Gamma_s$. Then, we move $s$ on $e$. During the movement, we will update $\Gamma_e$ by only adding new nodes. If $s$ moves in $\tau$, one of the two situations discussed above will happen. In the first situation, we do not change $\Gamma_e$. In the second situation, however, we add a new leaf $v$ as a child of a current node $u$ of $\Gamma_e$. As such, when $s$ reaches the other end of $\tau$, the current $\Gamma_e$ has the above two properties because all updates to $\Gamma_e$ only involve leaves. Furthermore, we have the following observation. 

\begin{observation}
    When $s$ moves in $\tau$, the number of leaves in $\Gamma_e$ does not change. 
\end{observation}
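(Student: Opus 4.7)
The plan is to track the leaf count of $\Gamma_e$ event by event as $s$ moves through $\tau$. A first-situation event removes a leaf of $\Gamma_s$ but, by construction, does not alter $\Gamma_e$; so it preserves the leaf count trivially. The substantive work concerns second-situation events.

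In a second-situation event, a new node $v$ appears in $\Gamma_s$ as a child of some $u$, and (if $v$ is not already in $\Gamma_e$) is appended to $\Gamma_e$ as a new leaf. The net change to the leaf count of $\Gamma_e$ equals $+1$ (for $v$) minus $1$ if $u$ was a leaf of $\Gamma_e$ just before the event (since $u$ becomes internal). In particular, the leaf count is preserved exactly when $u$ is a leaf of $\Gamma_e$ immediately before $v$ is added. It therefore suffices to prove that in $\tau$ every such parent $u$ is a leaf of $\Gamma_e$ right before the corresponding second-situation event.

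I plan to establish this by exploiting the quasi-elementary property of $\tau$ in two ways. First, throughout $\tau$ the ``one angle one split'' decisions at $u$ are fixed by the stable vertex-face shortest-path topology, so the set of polyhedron edges in the shadowed face $f_u$ on which $u$ is permitted to spawn a child in $\Gamma_s$ is determined from the outset of $\tau$. Second, for each such permitted edge $e'$, the emergence of the corresponding child of $u$ in $\Gamma_s$ coincides with a visibility transition of the unfolded image $s_u'$ crossing a specific window in $P_u$, and during $\tau$ the motion of $s_u'$ traces a continuous algebraic curve in the unfolding plane of $f_u$ that crosses each such visibility boundary at most once. A case analysis on whether $u$ has one or two permitted child slots, combined with these two facts, will force that any emergence of a child of $u$ in $\tau$ is in fact the first child ever added to $u$ in $\Gamma_e$ during $\tau$; hence $u$ is a leaf of $\Gamma_e$ immediately before $v$ is added.

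The main obstacle is the sub-case where ``one angle one split'' leaves $u$ with two distinct permitted child slots, since then in principle $u$ could acquire two different children at two different moments of $\tau$, and the second acquisition would find $u$ already internal in $\Gamma_e$. Ruling this out is the delicate point: I will argue, using the monotonic single-crossing behavior of the two visibility transitions along the two edges of $f_u$ during the quasi-elementary motion of $s$, that the two slots cannot be filled in sequence within $\tau$, so at most one slot-emergence event at $u$ can actually occur in $\tau$. This identifies the event that happens with $u$'s first child in $\Gamma_e$ within $\tau$, completing the preservation of the leaf count.
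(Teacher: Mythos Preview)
Your reduction is exactly the paper's: first-situation events leave $\Gamma_e$ untouched, and for a second-situation event it suffices to show the parent $u$ of the newly emerging leaf $v$ was a leaf of $\Gamma_e$. But from that point on your route diverges sharply from the paper's, and the case you single out as the obstacle is in fact the trivial one.

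The paper's argument is a two-line contradiction. If $u$ already had a child $v'$ in $\Gamma_e$, then since each node has at most two children, $u$ had exactly the one child $v'$ before the event. Now the key equivalence from the CH construction: a node of $\Gamma_s$ has \emph{two} children iff it occupies the apex $a$ of its shadowed face $\triangle abc$, i.e., iff its edge sequence realizes the $f$-constrained shortest path from $s$ to $a$ with $f=\triangle abc$. Hence before the event $u$ did not occupy $a$, and after the event (now with both $v$ and $v'$) it does. That is a topological change of this $f$-constrained shortest path, contradicting that $s$ lies in a quasi-elementary segment. No visibility analysis, no crossing counts.

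Your ``delicate'' two-slot case is actually vacuous for this argument: if $u$ occupies $a$ throughout $\tau$, then whenever $u$ exists in $\Gamma_s$ its projection cone covers $a$ and \emph{both} children already exist, so a second-situation event can never attach a new child below an already-present $u$ in this case. All the content sits in the one-slot case, and there the ``two children $\Leftrightarrow$ occupant'' equivalence gives the contradiction instantly. Your proposed single-crossing argument is therefore unnecessary, and as stated it is also not clearly correct: the visibility window bounding $I_v$ depends on the blocking vertex of the simple polygon $P_u$, which can change as $s_u'$ moves, so a bare ``linear motion crosses a fixed line once'' claim does not apply without further work.
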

\begin{proof}
According to our analysis, $\Gamma_e$ only changes in the second situation, in which a new leaf $v$ is added to $\Gamma_e$ as a child of a node $u$. We claim that $u$ was a leaf before the update. Assume to the contrary this is not true. Then, before the update, $u$ has a child $v'$. Let $\triangle abc$ be the shadow face of $u$ with $e_u=\overline{bc}$. Then, $e_{v'}$ is either $\overline{ab}$ or $\overline{ac}$. Note that $u$ cannot have two children before the update since otherwise $v$ would not become a child of $u$ after the update (this is further because every node of $\Gamma_s$ can have at most two children). Since $u$ only had one child before the update, $u$ cannot occupy the polyhedron vertex $a$. Since $u$ has two children after the update, $u$ must occupy $a$. This means the $f$-constrained shortest path from $s$ to $a$ with $f=\triangle abc$ has been changed topologically. But this contradicts with the fact that $\tau$ is a quasi-elementary segment (and $s$ moves inside $\tau$). 
\end{proof}

Now consider the situation where $s$ cross the common endpoint $z$ between $\tau$ and the next quasi-elementary segment $\tau'$ on $e$. As $s$ crosses $z$, the $f$-constrained shortest path from $s$ to $a$ changes topologically for some polyhedron vertex-face pair $(a,f)$. Let $e''=\overline{bc}$ denote the edge of $f$ opposite to $a$. Let $V$ be the set of nodes $v\in \Gamma_e$ whose associated edge $e_v$ is $e''$ and whose shadowed face is $f$. Based on the algorithm for constructing $\Gamma_s$, among all nodes of $V$, only one node $v_1$ that occupies $a$ can have two children, which is the node that determines the $(a,f)$-constrained shortest path from $s$ to $a$. After $s$ crosses $z$, another node $v_2\in V$ could occupy $a$ and have two children. As such, one of the subtrees of $v_1$ may be clipped off and a new subtree may be added to $v_2$. Since the $f'$-constrained shortest paths from $s$ to $v'$ for all other vertex-face pairs $(v',f')$ do not change topologically, each node of the new subtree of $v_2$ must have a single child (since if a node has two children, then it must occupy a polyhedron vertex $v'$, meaning that the $f'$-constrained shortest paths from $s$ to $v'$ has changed for some incident face $f'$ of $v'$). Therefore, the new subtree at $v_2$ has exactly one leaf. Hence, after $s$ crosses $z$ and enters $\tau'$, $\Gamma_e$ only gain at most one more leaf. Also, since a new subtree is added to a current node $v_2$ of $\Gamma_e$ after $s$ enters $\tau'$, the above two properties of $\Gamma_e$ still hold. 

According to the above discussion, since there are $O(n^2)$ quasi-elementary segments on $e$, $\Gamma_e$ has at most $O(n^2)$ leaves. 

For any point $s\in e$, recall that each interval $I\in \calI_s$ corresponds to an edge sequence $\sigma_I$ that is the edge sequence of a geodesic path from a point $s\in e$ to a point $t\in e'$. We define $\Lambda_s=\{\sigma_I\ |\ I\in \calI_s\}$ and  $\Lambda_e=\bigcup_{s\in e}\Lambda_s$, i.e., $\Lambda_e$ is the set of distinct edge sequences $\sigma_I$ of $\Lambda_s$  of all points $s\in e$. Since $\Gamma_s$ is a subtree of $\Gamma_e$ with the same root, we have the following lemma. 

\begin{lemma}\label{lem:edgeseq}
\begin{enumerate}
    \item $|\Lambda_e|=O(n^2)$. 
    \item Each edge sequence $\sigma\in \Lambda_e$ corresponds to a segment $\tau_{\sigma}\subseteq e$ such that $\calI_s$ contains an interval $I$ with $\sigma_I=\sigma$ if and only if $s\in \tau_{\sigma}$.
\end{enumerate}
\end{lemma}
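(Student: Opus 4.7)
The plan is to view each sequence $\sigma \in \Lambda_e$ as a node $v$ of the global CH-tree $\Gamma_e$ built in Section~\ref{sec:eleint}: such a sequence is spelled out along the root-to-$v$ path, and $\sigma \in \Lambda_s$ iff (a)~$v \in \Gamma_s$ (equivalently, $I_v(s) \neq \emptyset$), (b)~$e_v = e'$ and $f_v' = f_1$, and (c)~$v$ survives the post-pruning, i.e., no proper ancestor $u$ of $v$ in $\Gamma_e$ has $e_u = e'$. A preliminary observation is that (c) depends only on $\Gamma_e$ and not on $s$: since each CH projection interval is contained in the ``shadow'' of its parent's, $I_v(s) \neq \emptyset$ forces $I_u(s) \neq \emptyset$ at every ancestor $u$, so ``ancestor in $\Gamma_s$'' agrees with ``ancestor in $\Gamma_e$'' whenever $v \in \Gamma_s$.

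For part~1, I would argue that along any root-to-leaf path of $\Gamma_e$ there is at most one node satisfying both (b) and (c)---namely, the first node whose stored edge is $e'$ (every later occurrence of $e'$ on the path violates (c)). Thus the number of nodes that can generate a sequence in $\Lambda_e$ is bounded by the number of leaves of $\Gamma_e$, which was shown to be $O(n^2)$ in Section~\ref{sec:eleint}. This yields $|\Lambda_e| = O(n^2)$.

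For part~2, I would fix $\sigma = \sigma_v$ and unfold the entire sequence into a fixed simple polygon $P_v$ that is independent of $s$. As $s$ moves along $e$, its image $s_v$ in $P_v$ moves by a rigid affine bijection along the edge $L \subset \partial P_v$ that is the image of $e$. Since (b) and (c) are $s$-independent, $\tau_\sigma$ equals the preimage of $L \cap V(e_v)$ under $s \mapsto s_v$, where $V(e_v) \subseteq P_v$ denotes the visibility region of $e_v$ inside $P_v$. Part~2 thus reduces to proving that $L \cap V(e_v)$ is a single segment.

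The main obstacle will be this connectedness claim. I would exploit the fact that $P_v$ is a chain of triangles $\triangle_0, \ldots, \triangle_k$ glued along successive edges $e_0, \ldots, e_{k-1}$, obtained by unfolding a geodesic sequence with the ``one angle one split'' property, with $L \subset \partial \triangle_0$ and $e_v \subset \partial \triangle_k$. A segment $\overline{s_v t}$ with $s_v \in L$ and $t \in e_v$ lies in $P_v$ precisely when, for each reflex vertex $w$ of $P_v$, the line through $s_v$ and $t$ passes on the correct side of $w$. The feasible locus in $L \times e_v$ is the classical hourglass/funnel region between two terminal edges of a simple polygon, whose projection onto $L$ is a single interval. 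Pulling back via the affine bijection $s \mapsto s_v$ then gives that $\tau_\sigma$ is a segment.
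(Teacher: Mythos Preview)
Your approach is essentially the same as the paper's. For part~1, both you and the paper argue that each root-to-leaf path of $\Gamma_e$ carries at most one node contributing to $\Lambda_e$ (the paper phrases this via Observation~\ref{obser:postpruning} applied to a leaf of $\Gamma_e$ viewed as a leaf of some $\Gamma_s$), so $|\Lambda_e|$ is bounded by the $O(n^2)$ leaves of $\Gamma_e$. For part~2, both reduce the question to showing that the set of points on the image of $e$ weakly visible to $e'$ inside the simple unfolding polygon $P_v$ is a single segment; the paper simply invokes this as a known weak-visibility fact in simple polygons rather than re-deriving it via the hourglass/funnel structure you sketch.
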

\begin{proof}
Consider a leaf $u$ of $\Gamma_e$. According to our algorithm for constructing $\Gamma_e$, $u$ is a leaf of $\Gamma_s$ for some point $s\in e$. By Observation~\ref{obser:postpruning}, among all nodes in the path $\pi_u$ from $u$ to the root of $\Gamma_s$, there is at most one node $v$ whose projection interval $I_v$ is in $\calI_s$, and thus there is at most one node $v$ in $\pi_u$  whose edge sequence $\sigma_v$ is in $\Lambda_e$. 

On the other hand, consider an edge sequence $\sigma\in \Lambda_e$, which must be $\sigma_v$ for some node $v\in \Gamma_s$ for some point $s\in e$. Since $\Gamma_s$ is a subtree of $\Gamma_e$ with the same root, there is a leaf $u\in \Gamma_e$ such that $v$ is a node of $\Gamma_e$ in the path from the root to $u$. 

The above implies that $|\Lambda_e|$ is no more than the number of leaves of $\Gamma_e$, which is $O(n^2)$. This proves the first lemma statement. 

For the second lemma statement, consider an edge sequence $\sigma\in \Lambda_e$, which must be $\sigma_v$ for some node $v\in \Gamma_s$ for some point $s\in e$. Recall that the unfolding polygon $P_v$ is a simple polygon that has $e'$ as an edge, and one of its edges, denoted by $e''$, is the image of $e$ that contains $s_v$. Since $P_v$ is a simple polygon and $e'$ and $e''$ are two edges of $P_v$, $e''$ contains a single segment $\tau''$ that consists of points of $e''$ weakly visible to $e'$ (a point $p\in P_v$ is {\em weakly visible} to $e'$ if $p$ is visible to at least one point on $e'$)~\cite{ref:GuibasLi87}. Note that $\calI_s$ has an interval $I$ with $\sigma_I=\sigma$ if and only if $s_v$ is weakly visible to $e'$. Therefore, $\calI_s$ has an interval $I$ with $\sigma_I=\sigma$ if and only if $s$ is in the preimage of $\tau''$, which is a segment of $e$. This proves the second lemma statement. 
\end{proof}

For ease of exposition, we assume that the endpoints of the segments $\tau_{\sigma}$ for all edge sequences $\sigma\in \Lambda_e$ on $e$ are distinct. 

By the definition of elementary segments, for any elementary segment $\tau$ of $e$, 
$\Lambda_s$ is the same for all points $s\in \tau$, and thus we define $\Lambda_{\tau}$ as $\Lambda_s$ for any point $s\in \tau$. 
Recall that $|\calI_s|=O(n)$. Hence, $|\Lambda_{\tau}|=|\Lambda_{s}|=|\calI_s|=O(n)$. 
Then, we have the following lemma. 

\begin{lemma}\label{lem:eleinterval}
    The edge $e$ can be partitioned into $O(n^2)$ elementary segments such that for every two adjacent segments $\tau$ and $\tau'$, $\Lambda_{\tau}$ and $\Lambda_{\tau'}$ differ by at most one edge sequence. 
\end{lemma}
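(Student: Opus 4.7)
My plan is to read the desired partition directly off of Lemma~\ref{lem:edgeseq}. That lemma already gives us a family $\{\tau_\sigma : \sigma \in \Lambda_e\}$ of $O(n^2)$ segments on $e$, where $\tau_\sigma$ is the set of positions $s \in e$ whose admissible set $\calI_s$ contains an interval with edge sequence $\sigma$. The idea is that the $\le 2|\Lambda_e| = O(n^2)$ endpoints of these segments, which we may assume distinct, chop $e$ into $O(n^2)$ relatively open sub-segments $\tau$. Inside each such $\tau$, the indicator ``$s \in \tau_\sigma$'' is constant for every $\sigma \in \Lambda_e$, so the set
\[
\Lambda_\tau \;=\; \{\sigma \in \Lambda_e : \tau \subseteq \tau_\sigma\}
\]
coincides with $\Lambda_s$ for every $s \in \tau$.

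Next I have to upgrade ``$\Lambda_s$ is constant on $\tau$'' to ``$\calI_s$ is combinatorially equivalent on $\tau$.'' For this I intend to observe that the map $I \mapsto \sigma_I$ from $\calI_s$ to $\Lambda_s$ is a bijection. Injectivity follows from the construction of $\calI_s$: each $I \in \calI_s$ arises from a distinct node $v$ of $\Gamma_s$ with $e_v = e'$ and $f_v' = f_1$, and distinct nodes of the rooted tree $\Gamma_s$ have distinct root-to-node edge sequences $\sigma_v$. Surjectivity holds by the definition $\Lambda_s = \{\sigma_I : I \in \calI_s\}$. Consequently, constancy of $\Lambda_s$ on $\tau$ gives $|\calI_s|$ constant and a natural one-to-one correspondence (via $\sigma$) between the intervals of $\calI_s$ and $\calI_{s'}$ that preserves the associated edge sequence; this is exactly the definition of combinatorial equivalence, so $\tau$ is an elementary segment.

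Finally, I need to verify the ``differ by at most one'' condition for adjacent elementary segments $\tau$ and $\tau'$. Since the partition was obtained by splitting $e$ at the endpoints of the segments $\tau_\sigma$ and we assumed these endpoints are distinct, crossing the common boundary between $\tau$ and $\tau'$ passes through exactly one such endpoint; that endpoint is the start or end of a single $\tau_{\sigma_0}$, so $\Lambda_{\tau'} = \Lambda_\tau \triangle \{\sigma_0\}$. Through the bijection above, this means $\calI_s$ loses or gains exactly one interval as $s$ crosses the boundary, which is precisely the conclusion of the lemma.

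The argument is essentially bookkeeping once Lemma~\ref{lem:edgeseq} is in hand; the only substantive point to be careful about is the bijection $\calI_s \leftrightarrow \Lambda_s$, which I expect to be the one spot where a referee could complain if it is not spelled out. The general-position assumption that endpoints of the $\tau_\sigma$'s are distinct is harmless: if several endpoints coincide, a standard perturbation argument lets us treat them as consecutive, so each crossing still changes $\Lambda$ by one element and the bound $O(n^2)$ on the number of segments is preserved.
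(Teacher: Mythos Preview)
Your proposal is correct and follows essentially the same approach as the paper: partition $e$ by the endpoints of the basic segments $\tau_\sigma$ from Lemma~\ref{lem:edgeseq}, observe that $\Lambda_s$ is constant on each resulting piece, and use the assumed distinctness of endpoints to get the ``differ by one'' property. You are actually more careful than the paper in one respect: you explicitly justify the bijection $\calI_s \leftrightarrow \Lambda_s$ (injectivity of $I \mapsto \sigma_I$ via distinct root-to-node sequences in $\Gamma_s$), whereas the paper simply asserts that Lemma~\ref{lem:edgeseq} yields combinatorial equivalence of $\calI_s$ on each $\tau$ without spelling this out.
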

\begin{proof}
For each edge sequence $\sigma\in \Lambda_e$, we call the segment $\tau_{\sigma}$ in Lemma~\ref{lem:edgeseq} a {\em basic segment}. As $|\Lambda_e|=O(n^2)$, there are $O(n^2)$ basic segments and their endpoints partition $e$ into $O(n^2)$ segments $\tau$ such that every basic segment either completely contains $\tau$ or is disjoint from the interior of $\tau$. By Lemma~\ref{lem:edgeseq}, $\calI_s$ is topologically equivalent for all points $s\in \tau$. Hence, $\tau$ is an elementary segment on $e$. Further, for any two such segments $\tau$ and $\tau'$ that are adjacent, the subset of basic segments containing $\tau$ differ from the subset of basic segments containing $\tau'$ by at most one element. By Lemma~\ref{lem:edgeseq}, this implies that $\Lambda_{\tau}$ and $\Lambda_{\tau'}$ differ by at most one edge sequence. The lemma thus follows. 
\end{proof}

The following lemma summarizes our result in this subsection. 

\begin{lemma}\label{lem:summaryf1}
The edge $e$ can be partitioned into $O(n^2)$ elementary segments with respect to the face $f_1$ such that the following properties hold.
\begin{enumerate}
    \item For each elementary segment $\tau$, $|\Lambda_{\tau}|=O(n)$. 
    \item  For every two adjacent elementary segments $\tau$ and $\tau'$, $\Lambda_{\tau}$ and $\Lambda_{\tau'}$ differ by at most one edge sequence. 
    \item  For any elementary segment $\tau$, for any point $s\in \tau$ and any edge sequence $\sigma\in \Lambda_{\tau}$, there are a point $t\in e'$ and a geodesic \st\ path whose edge sequence is $\sigma$.
    \item For any elementary segment $\tau$, for any point $s\in \tau$ and any point $t\in e'$, if $\pi(s,t)$ is an $f_1$-constrained path, then the edge sequence of $\pi(s,t)$ is in $\Lambda_{\tau}$.
\end{enumerate}
\end{lemma}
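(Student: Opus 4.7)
The plan is to obtain Lemma~\ref{lem:summaryf1} essentially as a repackaging of the preceding material, so I would present it as a short consequence of Observation~\ref{obser:postpruning}, Lemma~\ref{lem:admissible}, and Lemma~\ref{lem:eleinterval}, rather than reprove anything from scratch. The organization of the proof will be to invoke Lemma~\ref{lem:eleinterval} to get both the partition and Property 2 at once, and then to handle Properties 1, 3, and 4 one by one.

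First I would take as the partition the one produced in Lemma~\ref{lem:eleinterval}, which already delivers an $O(n^2)$ bound on the number of elementary segments and the statement that $\Lambda_\tau$ and $\Lambda_{\tau'}$ differ by at most one edge sequence for adjacent $\tau,\tau'$. This handles the size bound and Property 2. For Property 1, by definition of an elementary segment, $\calI_s$ is combinatorially equivalent for all $s\in\tau$, so $|\Lambda_\tau|=|\calI_s|$ for any such $s$; the bound $|\calI_s|=O(n)$ was recorded in Section~\ref{sec:ch} and follows because $\Gamma_s$ has $O(n)$ leaves (``one angle one split'') while Observation~\ref{obser:postpruning} contributes at most one interval of $\calI_s$ per root-to-leaf path.

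For Property 3, I would simply unwind the definitions: each edge sequence $\sigma\in \Lambda_\tau$ equals $\sigma_I$ for some interval $I\in\calI_s$ produced by a node $v\in\Gamma_s$, and by construction of $\Gamma_s$ every point $t\in I=I_v$ admits the geodesic $f_1$-constrained path $\pi_v(s,t)$ whose edge sequence is exactly $\sigma_v=\sigma$; picking any such $t$ proves the claim. For Property 4, if $\pi(s,t)$ is an $f_1$-constrained shortest $s$-$t$ path, then $\pi_{f_1}(s,t)=\pi(s,t)$ is a shortest path, so Lemma~\ref{lem:admissible} supplies an interval $I\in\calI_s$ with $t\in I$ and $\pi_I(s,t)=\pi_{f_1}(s,t)$, hence the edge sequence of $\pi(s,t)$ is $\sigma_I\in\Lambda_s=\Lambda_\tau$.

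Since each piece is already developed earlier, I do not anticipate a real obstacle here; the only point requiring a little care is keeping the two notions of ``geodesic'' (for Property 3, where the CH-tree only guarantees a geodesic, not necessarily a shortest path) and ``shortest'' (for Property 4, where we invoke admissibility) cleanly separated, so that Properties 3 and 4 are not conflated. With those roles disentangled the proof is a short paragraph citing Lemmas~\ref{lem:admissible} and~\ref{lem:eleinterval} and Observation~\ref{obser:postpruning}.
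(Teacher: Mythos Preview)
Your proposal is correct and matches the paper's own proof essentially step for step: the paper likewise invokes Lemma~\ref{lem:eleinterval} for the partition and Property~2, the earlier bound $|\calI_s|=O(n)$ (via Observation~\ref{obser:postpruning}) for Property~1, and Lemma~\ref{lem:admissible} (the definition of $f_1$-admissible interval sets) for Properties~3 and~4. The only cosmetic difference is that for Property~3 the paper cites Lemma~\ref{lem:admissible} directly rather than the construction of $\Gamma_s$, but these are equivalent here.
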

\begin{proof}
We have already proved the first lemma statement. The second lemma statement is proved in Lemma~\ref{lem:eleinterval}. For the third lemma statement, consider a point $s\in \tau$ and an edge sequence $\sigma\in \Lambda_{\tau}$. By definition, there is an interval $I\in \calI_s$ whose edge sequence $\sigma_I$ is $\sigma$. By Lemma~\ref{lem:admissible}, $\calI_s$ is an $f_1$-admissible interval set for $e'$. Hence, for any point $t\in I$, there is a geodesic $f_1$-constrained \st\ path whose edge sequence is $\sigma$. Note that a geodesic $f_1$-constrained path is a geodesic path. This proves the third lemma statement. For the fourth one, since $\pi(s,t)$ is an $f_1$-constrained path, i.e., $\pi_{f_1}(s,t)$ is $\pi(s,t)$, and $\calI_s$ is an $f_1$-admissible interval set for $e'$, by the definition of $f_1$-admissible interval sets, there is an interval $I\in \calI_s$ such that $t\in I$ and $\pi_I(s,t)$ is $\pi(s,t)$. Since $s\in \tau$, the edge sequence of $\pi_I(s,t)$ is in $\Lambda_{\tau}$. This proves that the edge sequence of $\pi(s,t)$ is in $\Lambda_{\tau}$. The fourth lemma statement thus follows. 
\end{proof}

\subsubsection{Proving the bound $\boldsymbol{k_{e'}=O(n^{3+\epsilon})}$}
\label{sec:bound}
We are now ready to prove the bound $k_{e'}=O(n^{3+\epsilon})$ using Lemma~\ref{lem:summaryf1}. We can obtain a lemma similar to Lemma~\ref{lem:summaryf1} with respect to the face $f_2$ (i.e., consider $f_2$-constrained paths). For each elementary segment $\tau$ of $e$ with respect to $f_2$, we define $\Lambda'_{\tau}$ in the same way as $\Lambda_{\tau}$ with respect to $f_1$. 

The endpoints of the elementary segments of $e$ with respect to $f_1$ and the endpoints of the elementary segments with respect to $f_2$ together partition $e$ into $O(n^2)$ segments (for ease of exposition, we assume that all these segment endpoints are distinct), which we call {\em refined-elementary segments}.  For each refined-elementary segment $\tau$, it is contained in an elementary segment $\tau_1$ for $f_1$ and an elementary segment $\tau_2$ for $f_2$; we define $\Lambda''_{\tau}=\Lambda_{\tau_1}\cup \Lambda'_{\tau_2}$. Since both $\Lambda_{\tau_1}$ and $\Lambda'_{\tau_2}$ are $O(n)$, we have $|\Lambda''_{\tau}|=O(n)$. We then have the following lemma. 

\begin{lemma}\label{lem:summary}
The edge $e$ can be partitioned into $O(n^2)$ refined-elementary segments such that the following properties hold.
\begin{enumerate}
    \item For each refined-elementary segment $\tau$, $|\Lambda_{\tau}''|=O(n)$. 
    \item  For every two adjacent refined-elementary segments $\tau$ and $\tau'$, $\Lambda''_{\tau}$ and $\Lambda''_{\tau'}$ differ by at most one edge sequence. 
    \item  For any refined-elementary segment $\tau$, for any point $s\in \tau$ and any edge sequence $\sigma\in \Lambda_{\tau}''$, there are a point $t\in e'$ and a geodesic \st\ path whose edge sequence is $\sigma$. 
    \item For any refined-elementary segment $\tau$, for any point $s\in \tau$ and any point $t\in e'$, the edge sequence of any shortest \st\ path is in $\Lambda''_{\tau}$. 
\end{enumerate}
\end{lemma}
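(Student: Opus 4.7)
The plan is to obtain the refined partition by overlaying the two single-face partitions supplied by Lemma~\ref{lem:summaryf1} (applied symmetrically to $f_1$ and to $f_2$), and then to transfer the four properties through the union definition $\Lambda''_{\tau}=\Lambda_{\tau_1}\cup \Lambda'_{\tau_2}$.

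First I would invoke Lemma~\ref{lem:summaryf1} with respect to $f_1$ to partition $e$ into $O(n^2)$ elementary segments, and apply the entirely symmetric argument (swapping the roles of $f_1$ and $f_2$) to obtain another $O(n^2)$ elementary segments with respect to $f_2$. Under the distinctness assumption on endpoints that was already granted just before the lemma, overlaying the two partitions yields at most $O(n^2)$ refined-elementary segments, each of which sits inside a unique pair $(\tau_1,\tau_2)$ of $f_1$- and $f_2$-elementary segments; this is exactly the setup already fixed for $\Lambda''_{\tau}$.

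Properties (1)--(3) then follow by routine bookkeeping. Property (1) is the triangle-style bound $|\Lambda''_{\tau}|\leq |\Lambda_{\tau_1}|+|\Lambda'_{\tau_2}|=O(n)$ from two applications of Lemma~\ref{lem:summaryf1}(1). For property (2), two adjacent refined-elementary segments $\tau,\tau'$ are separated by a single endpoint that, by distinctness, belongs to exactly one of the two partitions; only the corresponding one of the pairs $(\tau_1,\tau_1')$ or $(\tau_2,\tau_2')$ changes, and the associated $\Lambda$'s differ by at most one edge sequence by Lemma~\ref{lem:summaryf1}(2), while the other $\Lambda$ is identical. Property (3) is immediate: every $\sigma\in \Lambda''_{\tau}$ lies in $\Lambda_{\tau_1}$ or in $\Lambda'_{\tau_2}$, and Lemma~\ref{lem:summaryf1}(3) applied to the matching face produces a point $t\in e'$ and a geodesic \st\ path whose edge sequence is $\sigma$.

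The only step needing a short additional argument is property (4), which I expect to be the main (and still very mild) obstacle. I would argue that every shortest $s$-$t$ path $\pi(s,t)$ with $t\in e'$ is either $f_1$-constrained or $f_2$-constrained. Indeed, $\pi(s,t)$ is a geodesic on a convex polyhedron and therefore crosses each polyhedron edge at most once; since $t$ already lies on $e'$, the path must approach $t$ from exactly one of the two incident faces $f_1,f_2$, and it cannot afterwards visit the interior of the other face without re-crossing $e'$. Consequently $\pi(s,t)$ coincides with $\pi_{f_i}(s,t)$ for some $i\in\{1,2\}$ (its length is optimal, hence optimal among $f_i$-constrained paths), and Lemma~\ref{lem:summaryf1}(4) applied on that side places the edge sequence of $\pi(s,t)$ in $\Lambda_{\tau_1}$ or $\Lambda'_{\tau_2}$, hence in $\Lambda''_{\tau}$. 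Everything else in the proof is direct transfer through the union definition.
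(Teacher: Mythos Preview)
Your proposal is correct and follows essentially the same approach as the paper: overlay the two $O(n^2)$ partitions from Lemma~\ref{lem:summaryf1} (for $f_1$ and for $f_2$), read off properties (1)--(3) from the corresponding items of Lemma~\ref{lem:summaryf1} via the union $\Lambda''_{\tau}=\Lambda_{\tau_1}\cup\Lambda'_{\tau_2}$, and for (4) observe that any shortest \st\ path with $t\in e'$ is $f_1$- or $f_2$-constrained and then invoke Lemma~\ref{lem:summaryf1}(4). Your write-up is in fact slightly more explicit than the paper's (you spell out why the distinct-endpoint assumption yields property~(2) and why a shortest path must be constrained to one side), but the argument is the same.
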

\begin{proof}
We already discussed that $e$ can be partitioned into $O(n^2)$ refined-elementary segments and each such interval $\tau$ is completely contained in an elementary segment $\tau_1$ of $f_1$ and an elementary segment $\tau_2$ of $f_2$. Therefore, the first three lemma statements directly follow from Lemma~\ref{lem:summaryf1} (and its counterpart lemma with respect to $f_2$). For the fourth statement, consider a point $s\in \tau$, a point $t\in e'$, and a shortest path $\pi(s,t)$. The path $\pi(s,t)$ must be either $f_1$-constrained or $f_2$-constrained. Without loss of generality, we assume that it is $f_1$-constrained. Then, by Lemma~\ref{lem:summaryf1}, the edge sequence of $\pi(s,t)$ must be in $\Lambda_{\tau_1}$ and thus in $\Lambda''_{\tau}$ since $\Lambda_{\tau_1}\subseteq \Lambda''_{\tau}$. This proves the fourth lemma statement. 
\end{proof}

Let $A$ be the set of all refined-elementary segments of $e$. Let $\Pi_{e'}$ denote the plane containing the face $f_1$. For each segment $\tau\in A$ and each edge sequence $\sigma\in \Lambda_{\tau}''$, we unfold the edge sequence $\sigma$ into the plane $\Pi_{e'}$. For any point $s\in \tau$, let $s_{\sigma}$ be the image of $s$ in $\Pi_{e'}$ in the unfolding of $\sigma$. Define $S_{\tau}$ as the set of the images $s_{\sigma}$ for all edge sequences $\sigma\in \Lambda''_{\tau}$. 
In the proof of Lemma~\ref{lem:shortcut}, we show that there is no ``short-cut'' in $\Pi_{e'}$, i.e., for any point $s\in \tau$ and any point $t\in e'$, $\Vert s_{\sigma}t\Vert\geq d(s,t)$ holds for all $s_{\sigma}\in S_{\tau}$. 

\begin{lemma}\label{lem:shortcut}
For any point $s\in \tau$ and any point $t\in e'$, 
$\Vert s_{\sigma'}t\Vert =d(s,t)$ holds, where $\sigma'=\argmin_{\sigma\in \Lambda''_{\tau}}\Vert s_{\sigma}t\Vert$, i.e., $s_{\sigma'}$ is the point of $S_{\tau}$ closest to $t$.
\end{lemma}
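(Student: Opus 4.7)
The plan is to prove the equality by splitting it into two inequalities: (a) $\min_{\sigma \in \Lambda''_{\tau}} \Vert s_\sigma t\Vert \leq d(s,t)$, and (b) $\Vert s_\sigma t\Vert \geq d(s,t)$ for every $\sigma \in \Lambda''_{\tau}$ (the no-shortcut property). The upper bound (a) will follow essentially immediately from Lemma~\ref{lem:summary}(4): the edge sequence $\sigma^*$ of the shortest path $\pi(s,t)$ lies in $\Lambda''_{\tau}$, and by the unfolding property of geodesic paths, unfolding along $\sigma^*$ maps $\pi(s,t)$ onto the straight segment $\overline{s_{\sigma^*} t}$ in $\Pi_{e'}$, so $\Vert s_{\sigma^*} t\Vert = d(s,t)$ and hence $\min_{\sigma} \Vert s_\sigma t\Vert \leq d(s,t)$, with the minimum attained by $\sigma^*$.

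For the harder direction (b), I will fix $\sigma = \{e_1 = e, e_2, \ldots, e_k = e'\} \in \Lambda''_{\tau}$, let $P_\sigma \subset \Pi_{e'}$ be the simple polygon obtained by unfolding the faces dictated by $\sigma$, and set $L = \overline{s_\sigma t}$. By Lemma~\ref{lem:summary}(3), $\sigma$ is the edge sequence of a genuine geodesic $s$-$t^\sigma$ path for some $t^\sigma \in e'$, so the unfolding is well-defined and $P_\sigma$ is a simple polygon in the sense used by the Chen--Han construction~\cite{ref:ChenSh90}. If $L \subseteq P_\sigma$, I will fold $L$ back along the unfolding; the resulting piecewise-straight curve on $\calP$ connects $s$ to $t$ and has length exactly $\Vert L\Vert$, which immediately gives $\Vert s_\sigma t\Vert \geq d(s,t)$. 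When $L \not\subseteq P_\sigma$, the segment must bypass one or more polyhedron vertices lying on the boundary of $P_\sigma$; I plan to induct on the number of such bypassed vertices, at each step locally rerouting $\sigma$ into the adjacent face across the boundary edge of $P_\sigma$ incident to a bypassed vertex $v$ and arguing that the straight-line distance in the resulting unfolding does not exceed $\Vert s_\sigma t\Vert$, thereby reducing to the previous case after finitely many steps.

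The hard part will be justifying that rerouting through a bypassed polyhedron vertex never increases the straight-line distance to $t$. This is where the convexity of $\calP$ must enter essentially: at every polyhedron vertex the total incident face angle is strictly less than $2\pi$ (positive angle defect), so any local unfolding around the vertex strictly ``opens up'' the angle, which should force any straight line that bypasses the vertex in the unfolding to be at least as long as a corresponding geodesic path on $\calP$ passing through or nearer to the vertex. Formalising this angle-defect argument at each bypassed vertex, and verifying termination of the induction, will be the delicate technical step; it is the analogue for our new unfolding of the classical no-shortcut property for star unfoldings established in~\cite{ref:AronovNo92,ref:ChandruSh04}, and the heart of why the lemma holds.
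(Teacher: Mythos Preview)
Your upper bound (a) is exactly what the paper does. For the lower bound (b), the paper takes a much shorter route than your proposed induction. Instead of analysing whether $\overline{s_\sigma t}$ stays inside $P_\sigma$ and rerouting around bypassed vertices, the paper uses Lemma~\ref{lem:summary}(3) to pick a point $t'\in e'$ for which there \emph{is} a genuine geodesic $s$--$t'$ path with edge sequence $\sigma$; its unfolded image is the straight segment $\overline{s_\sigma t'}$, and the segment $\overline{t't}\subseteq e'$ is itself a shortest path on $\calP$. With these two facts, \cite[Theorem~6]{ref:ChandruSh04} gives $\Vert s_\sigma t\Vert\ge d(s,t)$ in one line. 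So what you are proposing to prove from scratch---the angle-defect/no-shortcut argument---is precisely the content of the cited theorem of Chandru et al.; the paper treats it as a black box.

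Your plan is not wrong, but it is substantially harder than necessary, and as written it is sketchy in the places that matter. In particular, ``rerouting $\sigma$ into the adjacent face across the boundary edge of $P_\sigma$'' is not yet a well-defined operation: you have not said how the new edge sequence $\sigma''$ is formed, why $s_{\sigma''}$ is still a valid unfolded image of $s$, why the angle defect forces $\Vert s_{\sigma''}t\Vert\le\Vert s_\sigma t\Vert$ (the direction is delicate---you must argue the \emph{new} image is not farther from $t$), or why the process terminates. These are exactly the difficulties the proofs in \cite{ref:AronovNo92,ref:ChandruSh04} address, so you would essentially be reproducing their work. If you want a self-contained argument, cite or adapt their proof; otherwise, use the intermediate point $t'$ from Lemma~\ref{lem:summary}(3) and invoke \cite[Theorem~6]{ref:ChandruSh04} directly.
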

\begin{proof}
On the one hand, by Lemma~\ref{lem:summary}(4), there is an edge sequence $\sigma\in \Lambda_{\tau}''$ such that $\sigma$ is the edge sequence of a shortest \st\ path $\pi(s,t)$. By the definition of $s_{\sigma}$, $\Vert s_{\sigma}t\Vert$ is equal to the length of $\pi(s,t)$. Thus, we have $\Vert s_{\sigma}t\Vert=d(s,t)$. Since $\Vert s_{\sigma'}t\Vert\leq \Vert s_{\sigma}t\Vert$, we obtain $\Vert s_{\sigma'}t\Vert\leq d(s,t)$. 

On the other hand, for any $\sigma\in \Lambda_{\tau}''$, by Lemma~\ref{lem:summary}(3), there are a point $t'\in e'$ and a geodesic $s$-$t'$ path $\pi$ whose edge sequence is $\sigma$. Hence, the image of $\pi$ in the unfolding of $\sigma$ is the line segment $\overline{s_{\sigma}t'}$ in $\Pi_{e'}$. Also, since $\overline{t't}$ is a line segment on $e'$, $\overline{t't}$ must be the shortest path between $t'$ and $t$ on $\calP$. According to \cite[Theorem 6]{ref:ChandruSh04}, $\Vert s_{\sigma}t\Vert\geq d(s,t)$ must hold. Since $\sigma'\in \Lambda_{\tau}''$, we obtain $\Vert s_{\sigma'}t\Vert\geq d(s,t)$.

We conclude that $\Vert s_{\sigma'}t\Vert= d(s,t)$. The lemma thus follows. 
\end{proof}


Back to our original problem of proving the bound $k_{e'}=O(n^{3+\epsilon})$, recall that an $e'$-event corresponds to a pair $(s,t)$ with $s\in e$ and $t\in e'$ such that there are three shortest \st\ paths. Let $\tau$ be the refined-elementary segment of $e$ that contains $s$. By Lemma~\ref{lem:shortcut}, the three shortest \st\ paths correspond to three closest points of $S_{\tau}$ to $t$. To determine the number of $e'$-event points in $\tau$, for each point $s_{\sigma}\in S_{\tau}$, we define $d_{\sigma}(s,t)=\Vert s_{\sigma}t\Vert$ with respect to $s\in \tau$ and $t\in e'$. As discussed in \cite{ref:AgarwalSt97}, the position of $s_{\sigma}$ in the plane $\Pi_{e'}$ is a linear function of $s\in \tau\subseteq e$. Hence, $d_{\sigma}(s,t)$ is a constant degree bivariate algebraic function of $s\in \tau$ and $t\in e'$. Therefore, the number of $e'$-event points in $\tau$ is no more than the number of vertices in the lower envelope of the functions $d_{\sigma}(s,t)$ for all $\sigma\in \Lambda_{\tau}''$. As $|\Lambda_{\tau}''|=O(n)$, the number of vertices in the lower envelope is bounded by $O(n^{2+\epsilon})$~\cite{ref:HalperinNe94,ref:SharirAl94}. As $e$ has $O(n^2)$ refined-elementary segments $\tau$, the number of $e'$-event points on $e$ is bounded by $O(n^{4+\epsilon})$. We can further improve this bound by a linear factor using a grouping technique similar to that in Section~\ref{sec:group} or in Section~\ref{sec:boundk1}, as follows. 

Let $F(\tau)$ denote the set of functions $d_{\sigma}(s,t)$ defined on $s\in \tau$ for all $\sigma\in \Lambda_{\tau}''$. Let $\tau'$ be a refined-elementary segment on $e$ adjacent to $\tau$. By Lemma~\ref{lem:summary}(2), $\Lambda_{\tau}''$ and $\Lambda_{\tau'}''$ differ by at most one edge sequence $\sigma$. Hence, $F(\tau)$ and $F(\tau')$ defer by at most one function (i.e., functions of $F(\tau)$ that are also in $F(\tau')$ can be extended from $s\in \tau$ to $s\in \tau\cup \tau'$ as $\tau\cup \tau'$ is a segment of $e'$). If we group every $n$ adjacent refined-elementary segments on $e$, then the number of functions in each group is $O(n)$ and the number of groups is $O(n)$. Since the number of functions in each group is $O(n)$, the number of vertices in the lower envelope of these functions is $O(n^{2+\epsilon})$~\cite{ref:HalperinNe94,ref:SharirAl94}, and thus 
the number of $e'$-event points in each group is $O(n^{2+\epsilon})$. Hence, the total number of $e'$-event points on $e$ is $O(n^{3+\epsilon})$. 
This proves the bound $k_{e'}=O(n^{3+\epsilon})$. 

\section{Two-point shortest path queries -- the edge-edge case}
\label{sec:edgeboth}

Our results in Section~\ref{sec:seq} may find other applications related to shortest path edge sequences on $\calP$. In this section, we demonstrate one application on the edge-edge case two-point shortest path query problem. In this case, for each pair of query points $(s,t)$, both $s$ and $t$ are required on edges of $\calP$. We show that we can obtain the same results as Theorems~\ref{theo:edgeface} and \ref{theo:singleedgeface} (with an additional constraint that $t$ is also on a polyhedron edge) but with a simpler approach. In the rest of this section, unless otherwise stated, both $s$ and $t$ are on edges of $\calP$. 


Consider a polyhedron edge $e$. We will build a data structure for queries $(s,t)$ with $s\in e$. Our overall data structure is comprised of all data structures for all polyhedron edges. Before describing our data structure for $e$, we first consider the one-point query problem in which $s$ is a fixed point on $e$ and only $t$ is a query point. We show that each query can be easily answered in $O(\log n)$ time by using its ridge tree $T_s$. Note that this is possible even if $t$ is not on a polyhedron edge~\cite{ref:SharirOn86,ref:MountOn85,ref:MitchellTh87}, but here we provide a simpler approach for the case where $t$ is on a polyhedron edge, and most importantly, later we will extend the solution to the two-point queries with the help of our results in Section~\ref{sec:seq}. 


\subsection{The one-point query problem}
Let $s$ be a fixed point on $e$. 
For each polyhedron edge $e'$, the ridge tree $T_s$ partitions it into a set $\calI(e')$ intervals such that for each interval $I\in \calI(e')$, the edge sequence of $\pi(s,t)$ for any point $t$ in the interior of $I$ is the same; let $\sigma_I$ denote the edge sequence. In the preprocessing, for each interval $I\in \calI(e')$, we compute the unfolding of $\sigma_I$ to a plane containing an incident face of $e'$ and store the image $s_I$ of $s$ in the unfolding. Doing this for all polyhedron edges can be done in $O(n^2\log n)$ time~\cite{ref:MountOn85,ref:MitchellTh87}. Since $T_s$ partitions all polyhedron edges into $O(n^2)$ intervals~\cite{ref:SharirOn86}, the total space is $O(n^2)$. Given a query point $t$ on a polyhedron edge $e'$, we first find the interval $I\in \calI(e')$ that contains $t$ by binary search and then return $d(s,t)=|s_It|$. Hence, computing $d(s,t)$ takes $O(\log n)$ time. 

To output a shortest path $\pi(s,t)$, we need to store the edge sequences $\sigma_I$ for all intervals $I\in \calI(e')$. To this end, by Lemma~\ref{lem:seqtree}, $\sigma_I$ is $\sigma_{v_I}$ for a node $v_I$ in the sequence tree $\Psi_s$ defined in Section~\ref{sec:generate}. Therefore, if we associate $I$ with the node $v_I$ in $\Psi_s$, then $\pi(s,t)$ can be output in $O(|\pi(s,t)|)$ time. To find $v_I$ for all intervals $I\in \calI(e')$ of all polyhedron edges $e'$, we do the following in the preprocessing. Consider a degree-3 vertex $u$ of $T_s$ and a shortest path $\pi(s,u)$. For each intersection $p$ between $\pi(s,u)$ and a polyhedron edge $e'$, since $e'$ is an edge in the edge sequence of $\pi(s,u)$, it corresponds to a node $v$ in $\Psi_s$, i.e., $\sigma_v$ is a prefix of the edge sequence of $\pi(s,u)$. 
We do binary search on $e'$ to find the interval $I\in \calI(e')$ containing $p$ and then set $v_I=v$. If we do this for shortest paths $\pi(s,u)$ of all degree-3 vertices $u$ of $T_s$, we claim that $v_I$ for all intervals $I\in \calI(e')$ of all polyhedron edges $e'$ are computed. Indeed, if $v_I$ for an interval $I\in \calI(e')$ is not computed, then the edge sequence of $\pi(s,t)$ for any interior point $t\in I$ is not a prefix of the edge sequence of $\pi(s,u)$ for any degree-3 vertex $u\in T_s$, a contradiction to Lemma~\ref{lem:seqtree}. Since $T_s$ has $O(n)$ degree-3 vertices $u$ and $\pi(s,u)$ intersects $O(n)$ polyhedron edges, the above algorithm takes $O(n^2\log n)$ time. Also, since the size of $\Psi_s$ is $O(n^2)$, the total space is $O(n^2)$. 

In summary, given a point $s\in e$, we can construct a data structure of $O(n^2)$ space in $O(n^2\log n)$ time such that for any query point $t$ on a polyhedron edge, $d(s,t)$ can be computed in $O(\log n)$ time and $\pi(s,t)$ can be output in additional $O(|\pi(s,t)|)$ time. 

\subsection{The two-point query problem}
We now consider the two-point queries $(s,t)$ with $s\in e$. If we move $s$ on $e$ slightly without changing $T_s$ topologically, then for each interval $I\in \calI(e')$ of any polyhedron edge $e'$, the location of either endpoint of $I$ on $e'$ is a constant-degree algebraic function of the position of $s\in e$. Suppose that $\tau$ is a segment on $e$ such that $T_s$ is topologically equivalent for all points $s$ in the interior of $\tau$; we call $\tau$ a {\em ridge-tree topologically-equivalent segment} (or {\em RTTE segment} for short). To answer queries $(s,t)$ with $s\in \tau$, we can parameterize the endpoints of the intervals of $I\in \calI(e')$ of all polyhedron edges $e'$ for $s\in \tau$. Since $\tau$ is an RTTE segment, the sequence tree $\Psi_s$ is the same for all points $s\in \tau$, and therefore we use $\Psi(\tau)$ to represent it. Furthermore, for each interval $I\in \calI(e')$ of each polyhedron edge $e'$, we store the coordinate transformation from $s\in e$ to the unfolding plane that contains $e'$ in the unfolding of the edge sequence $\sigma_I$ defined above, so that given a point $s\in e$, its image $s_I$ in the unfolding plane can be obtained in $O(1)$ time. The coordinate transformation is actually implicitly maintained in the sequence tree $\Psi(\tau)$. More specifically, if we have the coordinate transformation for a node $u\in \Psi_s$ (i.e., the transformation is for the edge sequence $\sigma_u$) and $v$ is a child of $u$, then the coordinate transformation for $v$ can be obtained from $u$ in $O(1)$ time. As such, we can use $\Psi(\tau)$ to explicitly store the coordinate transformations of all its nodes. For each interval $I\in \calI(e')$ of each polyhedron edge $e'$, its associated node $v_I\in \Psi(\tau)$ stores the coordinate transformation for $I$. 

Given a query $(s,t)$ with $s\in \tau$ and $t\in e'$ for a polyhedron edge $e'$, we can first determine the interval $I\in \calI(e')$ containing $t$ with respect to $s$. This can be done in $O(\log n)$ time by conducting binary search on the parameterized endpoints of the intervals of $\calI(e')$ on $e'$. After having $I$, using the coordinate transformation of $I$, which can be obtained from its associated node $v_I\in \Psi(\tau)$, we obtain the image $s_I$ of $s$ in the unfolding of $\sigma_I$ and consequently obtain $d(s,t)=\Vert s_It\Vert$. As before, computing $\pi(s,t)$ can be done in additional $O(|\pi(s,t)|)$ time using the node $v_I$ and $\Psi(e)$. 

By using the algorithm in Section~\ref{sec:maintain}, we can partition $e$ into $O(n^{4+\epsilon})$ RTTE segments in $O(n^{4+\epsilon})$ time. Indeed, the algorithm computes $O(n^{4+\epsilon})$ event points on $e$ such that when $s$ is between any two adjacent event points, $T_s$ does not change topologically. Therefore, these event points partition $e$ into  $O(n^{4+\epsilon})$ RTTE segments. Computing these event points takes $O(n^{4+\epsilon})$ time.

If we build the above data structure for each RTTE segment $\tau$, then the total space would be $O(n^{6+\epsilon})$. To improve it, we resort to persistent data structures~\cite{ref:DriscollMa89}. Specifically, we build the data structure for the first RTTE segment $\tau$ of $e$. For each polyhedron edge $e'$, we construct a partially persistent tree $T(e')$ on the parameterized endpoints of the intervals of $\calI(e')$ so that after updates of $T(e')$ we can still perform queries in old versions of it. We use $\calI_{\tau}(e')$ to refer to $\calI(e')$, meaning that it is defined with respect to $\tau$. Now consider the second interval $\tau'$. Let $p$ be the common endpoint of $\tau$ and $\tau'$. As discussed in Section~\ref{sec:seq}, $p$ is either the first or the second type event point. 

\begin{enumerate}
\item If $p$ is the first type event point, then for any polyhedron edges $e'$, $\calI_{\tau'}(e')$ is {\em topologically equivalent} to $\calI_{\tau}(e')$, i.e., each interval $I\in \calI_{\tau'}(e')$ corresponds to an interval $I'\in \calI_{\tau}(e')$ such that their edge sequences $\sigma_I$ and $\sigma_{I'}$ are the same. Therefore, in this case we do not need to update our data structure, i.e., we use the same $T(e')$ for each polyhedron edge $e'$. 

\item 
If $p$ is the second type event point, then as discussed in Section~\ref{sec:seq}, $\calI_{\tau}(e')$ is not topologically equivalent to $\calI_{\tau'}(e')$ for only one polyhedron edge $e'$, and $\calI_{\tau}(e')$ and $\calI_{\tau'}(e')$ differ by at most one interval (i.e., either one interval of $\calI_{\tau}(e')$ disappears in $\calI_{\tau'}(e')$, or a new interval is inserted to $\calI_{\tau'}(e')$). If a new interval $I$ is inserted to $\calI_{\tau'}(e')$, recall that by using the sequence tree $\Psi(e)$ our algorithm in Section~\ref{sec:maintain} also obtains in $O(1)$ time the unfolding of the edge sequence $\sigma_I$ and thus the coordinate transformation from $s\in e$ to the unfolding plane that contains $e'$ in the unfolding of $\sigma_I$ (more specifically, as discussed above, if a node $v$ representing the edge sequence $\sigma_I$ of $I$ is added to $\Psi(e)$, then we can obtain the coordinate transformation for $I$ in $O(1)$ by using the coordinate transformation of its parent node in $\Psi(e)$); we store the coordinate transformation for $I$ (so that given $s\in \tau'$ and any point $t\in I$, we can compute the image $s_I$ of $s$ in the unfolding in $O(1)$ time and consequently obtain $d(s,t)=\Vert s_It\Vert$). As such, we can update $T(e')$ in $O(\log n)$ time by keeping its old version (for $\tau$) so that a query on the old version of the tree can still be answered in $O(\log n)$ time~\cite{ref:DriscollMa89}. We do this for all other RTTE segments on $e$. Therefore, constructing the data structures for all $O(n^{4+\epsilon})$ RTTE segments can be done using $O(n^{4+\epsilon})$ insertions and deletions on persistent trees $T(e')$, which takes $O(n^{4+\epsilon})$ time and space in total~\cite{ref:DriscollMa89}. 
\end{enumerate}

Given a query $(s,t)$ with $s\in e$ and $t\in e'$ for a polyhedron edge $e'$, we first find the RTTE segment $\tau$ that contains $s$ by binary search. Then, we search the tree $T(e')$ of the version corresponding to $\tau$ to find the interval $I\in \calI_{\tau}(e')$ that contains $t$. Finally, using the coordinate transformation stored at $I$, we obtain the image $s_I$ of $s$ in the unfolding of $\sigma_I$ and then return $d(s,t)=\Vert s_It\Vert$. As such, it takes $O(\log n)$ time to compute $d(s,t)$. 

The above only computes $d(s,t)$. To report the shortest path $\pi(s,t)$, as before, we will use the sequence tree $\Psi_s$. To this end, we first compute $\Psi_s$ when $s$ is at an endpoint of $e$ and then update $\Psi_s$ as $s$ moves but keep the old versions of $\Psi_s$. This is exactly the tree $\Psi(e)$ we computed in Section~\ref{sec:maintain}. In addition, we maintain the coordinate transformations for all nodes of $\Psi(e)$ as discussed above, which adds $O(1)$ additional time whenever a node is updated. 
Furthermore, we also need to associate $I$ to its node $v_I$ in $\Psi(e)$ once a new interval $I$ is created on an edge $e'$, which can only happen at a second type event point. This node $v_I$ is exactly the new node associated with $v$, where $v$ is the degree-3 vertex that generates the second type event point (i.e., when $v$ crosses $e'$; e.g. in Figure~\ref{fig:secondtypeevent}(3), a new interval $I=\overline{u_1u_2}$ is created and $I_v$ is $c'$, the node associated with $v$ right after the event discussed in Section~\ref{sec:mainloop}). As discussed in Section~\ref{sec:mainloop}, each update on $\Psi(e)$ takes $O(1)$ time and space, and our additional steps also take $O(1)$ time and space. Therefore, the preprocessing still takes $O(n^{4+\epsilon})$ time and space. Now for each query $(s,t)$ with $s\in e$ and $t\in e'$, after the RTTE segment $\tau$ of $e$ containing $s$ and the interval $I\in \calI_{\tau}(e')$ containing $t$ are computed, using the node $v_I$ in $\Psi(e)$ and following the path from $v_I$ to the root of $\Psi(e)$, we can compute $\pi(s,t)$ in additional $O(|\pi(s,t)|)$ time. 

The following theorem summarizes our main result in this section. 

\begin{theorem}
Given a convex polyhedral surface $\calP$ of $n$ vertices, for each edge $e$ of $\calP$, we can construct a data structure for $e$ in $O(n^{4+\epsilon})$ preprocessing time and space such that for any two query points $s\in e$ and $t$ on an edge of $\calP$, the geodesic distance $d(s,t)$ can be computed in $O(\log n)$ time and an actual shortest path $\pi(s,t)$ can be output in additional $O(|\pi(s,t)|)$ time. 
\end{theorem}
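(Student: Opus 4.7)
The plan is to build, for each fixed source $s\in e$, a static structure that locates a query $t$ among the intervals into which the ridge tree $T_s$ cuts each polyhedron edge, and then to make this structure vary with $s$ by tracking only the combinatorial events of $T_s$ and using persistence. First I would handle the one-point case. For a fixed $s$, the ridge tree $T_s$ partitions each polyhedron edge $e'$ into a set $\calI(e')$ of $O(n)$ intervals, and on each interval $I$ the edge sequence $\sigma_I$ of $\pi(s,t)$ is constant, so after unfolding $\sigma_I$ to the plane of an incident face of $e'$ we obtain the image $s_I$ of $s$ and $d(s,t)=\Vert s_It\Vert$. Sorting the endpoints of all $O(n^2)$ intervals across all edges gives $O(\log n)$ query time; the node $v_I$ of the sequence tree $\Psi_s$ from Section~\ref{sec:generate} that represents $\sigma_I$ is attached to $I$ so that $\pi(s,t)$ can be traced out in $O(|\pi(s,t)|)$ additional time.

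Next I would promote this to two-point queries by letting $s$ vary along $e$. The key observation is that $T_s$, the partitions $\calI(e')$, and the sequence tree $\Psi_s$ all stay combinatorially fixed between consecutive events of the maintenance algorithm of Section~\ref{sec:maintain}; by the bounds $k_1+k_2=O(n^{4+\epsilon})$ from Sections~\ref{sec:boundk1} and~\ref{sec:boundk2}, these events partition $e$ into $O(n^{4+\epsilon})$ RTTE segments, computable in $O(n^{4+\epsilon})$ time. Within one RTTE segment $\tau$, each interval endpoint on $e'$ is a constant-degree algebraic function of $s\in\tau$, and the coordinate transformation that sends $s\in e$ to its image $s_I$ in the unfolding plane of $\sigma_I$ is also a single low-degree map, which I would attach to the node $v_I\in \Psi(\tau)$; thus once the correct interval $I$ is located, $d(s,t)=\Vert s_It\Vert$ and the path $\pi(s,t)$ follow in the required times.

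To avoid paying $\Omega(n^2)$ per RTTE segment, I would use persistence. For each polyhedron edge $e'$ I maintain a partially persistent search tree $T(e')$ whose keys are the parameterized interval endpoints on $e'$. Moving $s$ across a first-type event does not alter any $\calI(e')$ topologically, so no tree needs updating. At a second-type event (as analyzed in Section~\ref{sec:mainloop}) exactly one $\calI(e')$ changes, by the insertion or deletion of a single interval, and the algorithm of Section~\ref{sec:maintain} already produces in $O(1)$ time the corresponding new node of $\Psi(e)$ together with the unfolding data needed to build the coordinate transformation for the new interval. Thus over all $O(n^{4+\epsilon})$ events we perform $O(n^{4+\epsilon})$ persistent updates at $O(\log n)$ time each, and $\Psi(e)$ itself is maintained incrementally in $O(1)$ time per event as in Section~\ref{sec:mainloop}; the total preprocessing is $O(n^{4+\epsilon})$ time and space. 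A query $(s,t)$ with $t\in e'$ is answered by first locating the RTTE segment $\tau\ni s$ by binary search, then querying the version of $T(e')$ associated with $\tau$ to find the interval $I\ni t$, then evaluating the stored transformation at $I$ to get $s_I$ and hence $d(s,t)$; the path is reported by walking from $v_I$ to the root of $\Psi(e)$.

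The main obstacle is conceptual rather than technical: I need to verify that the bookkeeping of Section~\ref{sec:mainloop} supplies, at each event, exactly the data (the new $\Psi(e)$-node, its coordinate transformation, and the identity of the affected edge $e'$ together with the single inserted or deleted interval on it) needed to drive one persistent update, and that between events none of this data drifts in a way that would force re-examination of more than the affected $e'$. Once this alignment between topological events of $T_s$ and single-edge modifications of the persistent forest $\{T(e')\}$ is established, the $O(n^{4+\epsilon})$ bound and the $O(\log n)$ query time follow from the standard persistence machinery of~\cite{ref:DriscollMa89} combined with the event bound from Section~\ref{sec:boundk2}.
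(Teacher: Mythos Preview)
Your proposal is correct and follows essentially the same approach as the paper: you build the static one-point structure from $T_s$, partition $e$ into $O(n^{4+\epsilon})$ RTTE segments using the event bounds of Sections~\ref{sec:boundk1}--\ref{sec:boundk2}, and then make the per-edge interval structures persistent, exploiting that first-type events leave every $\calI(e')$ combinatorially unchanged while second-type events alter a single $\calI(e')$ by one interval whose unfolding data is supplied by the maintenance of $\Psi(e)$ in Section~\ref{sec:mainloop}. The paper carries out exactly this plan, with the same query procedure and the same path-reporting via the root-path of $v_I$ in $\Psi(e)$.
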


Applying the above theorem to each edge $e$ of $\calP$ leads to the following. 

\begin{corollary}
Given a convex polyhedral surface $\calP$ of $n$ vertices, we can construct a data structure in $O(n^{5+\epsilon})$ preprocessing time and space such that for any two query points $s$ and $t$ both on edges of $\calP$, the geodesic distance $d(s,t)$ can be computed in $O(\log n)$ time and a shortest path $\pi(s,t)$ can be reported in additional $O(|\pi(s,t)|)$ time.    
\end{corollary}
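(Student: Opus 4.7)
The plan is to derive this corollary as a straightforward consequence of the preceding theorem by aggregating the per-edge data structures over all $O(n)$ edges of $\calP$. Concretely, for every polyhedron edge $e$, I would invoke the theorem to construct, in $O(n^{4+\epsilon})$ time and space, a data structure $\calD_e$ that handles queries $(s,t)$ with $s\in e$ and $t$ on any polyhedron edge. Since $\calP$ has $O(n)$ edges, the sum over all edges is $O(n)\cdot O(n^{4+\epsilon})=O(n^{5+\epsilon})$ preprocessing time and space, matching the claimed bound.

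For the query algorithm, given $(s,t)$ with $s$ on some edge of $\calP$, I would first identify in $O(1)$ time the polyhedron edge $e$ containing $s$ (this information is available from the input representation of $s$); then I invoke the query procedure of $\calD_e$ on $(s,t)$. By the theorem this returns $d(s,t)$ in $O(\log n)$ time and, with the path-reporting extension, outputs $\pi(s,t)$ in additional $O(|\pi(s,t)|)$ time. Correctness follows verbatim from the correctness guarantees of the underlying theorem, since whatever edge contains $s$ is precisely the edge whose data structure is consulted, and the theorem guarantees correct handling for queries with $t$ on any polyhedron edge.

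There is essentially no technical obstacle here beyond bookkeeping: the theorem has already absorbed all the geometric work (the hierarchical RTTE segmentation of each edge, the persistent interval trees, the sequence tree $\Psi(e)$, and the coordinate transformations), and the corollary only requires summing resources across edges. The only mild subtlety is the case where $s$ lies at a shared endpoint of two edges, but by our general position convention and the continuity of geodesic distances \cite{ref:AgarwalRa93}, either incident edge's data structure yields the correct answer, so an arbitrary consistent choice suffices. Therefore the proof is simply: apply the theorem to each edge, store the resulting $O(n)$ data structures, and dispatch queries to the appropriate one.
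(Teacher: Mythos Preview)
Your proposal is correct and matches the paper's own approach exactly: the paper states only that applying the preceding theorem to each edge of $\calP$ yields the corollary, which is precisely the aggregation argument you give. Your treatment of the vertex-endpoint subtlety is even slightly more explicit than the paper's.
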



%
\bibliographystyle{plainurl}
\bibliography{reference}
\end{document}